\definecolor{labelkey}{rgb}{0.6,0,1}
\theoremstyle{plain}
\newtheorem{thm}{Theorem}
\newtheorem{prop}{Proposition}
\newtheorem{corollary}{Corollary}
\newtheorem{ass}{Assumption}
\theoremstyle{definition}
\newtheorem{definition}{Definition}
\newtheorem{remark}{Remark}
\newtheorem{example}{Example}
\def\disp{\displaystyle}
\newcommand{\dB}{\ensuremath{\mathbb{B}}}
\newcommand{\dF}{\ensuremath{\mathbb{F}}}
\newcommand{\dH}{\ensuremath{\mathbb{H}}}
\newcommand{\I}{\ensuremath{\mathbb{I}}}
\newcommand{\dN}{\ensuremath{\mathbb{N}}}
\newcommand{\dR}{\ensuremath{\mathbb{R}}}
\newcommand{\dT}{\ensuremath{\mathbb{T}}}
\newcommand{\cA}{\ensuremath{\mathcal{A}}}
\newcommand{\cB}{\ensuremath{\mathcal{B}}}
\newcommand{\cD}{\ensuremath{\mathcal{D}}}
\newcommand{\cE}{\ensuremath{\mathcal{E}}}
\newcommand{\cF}{\ensuremath{\mathcal{F}}}
\newcommand{\cG}{\ensuremath{\mathcal{G}}}
\newcommand{\cI}{\ensuremath{\mathcal{I}}}
\newcommand{\cJ}{\ensuremath{\mathcal{J}}}
\newcommand{\cK}{\ensuremath{\mathcal{K}}}
\newcommand{\cL}{\ensuremath{\mathcal{L}}}
\newcommand{\cN}{\ensuremath{\mathcal{N}}}
\newcommand{\cP}{\ensuremath{\mathcal{P}}}
\newcommand{\cQ}{\ensuremath{\mathcal{Q}}}
\newcommand{\cR}{\ensuremath{\mathcal{R}}}
\newcommand{\cS}{\ensuremath{\mathcal{S}}}
\newcommand{\cU}{\ensuremath{\mathcal{U}}}
\newcommand{\cW}{\ensuremath{\mathcal{W}}}
\def\i1{ [-\infty,\infty]}
\def\sn{n^{1/2} }
\def\sni{n^{-1/2} }
\def\ba{\mathbf{a}}
\def\bb{\mathbf{b}}
\def\bt{\mathbf{t}}
\def\bu{\mathbf{u}}
\def\bv{\mathbf{v}}
\def\bx{\mathbf{x}}
\def\bA{\mathbf{A}}
\def\bF{\mathbf{F}}
\def\bG{\mathbf{G}}
\def\bH{\mathbf{H}}
\def\bJ{\mathbf{J}}
\def\bL{\mathbf{L}}
\def\bT{\mathbf{T}}
\def\bU{\mathbf{U}}
\def\bX{\mathbf{X}}
\def\bbeta{{\boldsymbol\eta}}
\def\bgamma{{\boldsymbol\gamma}}
\def\bGamma{{\boldsymbol\Gamma}}
\def\bmu{{\boldsymbol\mu}}
\def\bvarphi{{\boldsymbol\varphi}}
\def\bpsi{{\boldsymbol\psi}}
\def\brho{{\boldsymbol\rho}}
\def\btheta{{\boldsymbol\theta}}
\def\bzeta{{\boldsymbol\zeta}}
\def\bTheta{{\boldsymbol\Theta}}
\def\setd{\{1,\ldots,d\}}
\def\setn{\{1,\ldots,n\}}
\def\setp{\{1,\ldots,p\}}
\begin{document}

\title[Semiparametric inference for random vectors]{Identifiability and inference for copula-based semiparametric models for random vectors with arbitrary marginal distributions}

\author{Bouchra R. Nasri}
\address{Département de médecine sociale et préventive, École de santé publique, Université de Montréal, C.P. 6128, succursale Centre-ville
Montréal (Québec)  H3C 3J7 }
\email{bouchra.nasri@umontreal.ca}

\author{Bruno N. R\'{e}millard}
\address{GERAD and Department of Decision Sciences, HEC Montr\'{e}al\\
3000, che\-min de la C\^{o}\-te-Sain\-te-Ca\-the\-ri\-ne,
Montr\'{e}al (Qu\'{e}\-bec), Canada H3T 2A7}
\email{bruno.remillard@hec.ca}

\thanks{Funding in partial support of this work was provided by the Fonds
qu\'e\-b\'e\-cois de la re\-cher\-che en sant\'e and the Natural Sciences and Engineering Research Council of Canada.}


\begin{abstract}
In this paper, we study the identifiability and the estimation of the parameters of a copula-based multivariate model when the margins are unknown and are arbitrary, meaning that they can be  continuous, discrete, or mixtures of continuous and discrete. When at least one margin is not continuous,  the  range of values determining the copula is not the entire unit square and this  situation  could  lead to identifiability issues that are discussed here. Next, we propose estimation methods when the margins are unknown and arbitrary, using pseudo log-likelihood adapted to the case of discontinuities. In view of applications to large data sets, we also propose a pairwise composite pseudo log-likelihood.  These methodologies can also be easily modified to cover the case of parametric margins.  One of the main theoretical result is an extension to arbitrary distributions of known convergence results of rank-based statistics when the margins are continuous. As a by-product,  under smoothness assumptions, we obtain that the asymptotic distribution of the estimation errors of our estimators are  Gaussian.   Finally, numerical experiments are presented to assess the finite sample performance of the estimators, and the usefulness of the proposed methodologies is illustrated with a copula-based regression model  for hydrological data. The proposed estimation is implemented in the R package CopulaInference \citep{Nasri/Remillard:2023c}, together with a function for checking identifiability.
\end{abstract}

\keywords{Copula; Pseudo-observations; Estimation; Identifiability; Arbitrary distributions}

\maketitle

\section{Introduction}
Copula-based models have been widely used in many applied fields such as finance \citep{Embrechts/McNeil/Straumann:2002,Nasri/Remillard/Thioub:2020}, hydrology \citep{Genest/Favre:2007,Zhang/Singh:2019} and medicine \citep{Clayton:1978,deLeon/Wu:2011}, to cite a  few. According to Sklar's representation \citep{Sklar:1959}, for a random vector $\bX = (X_1,\ldots,X_d)$ with joint distribution function $H$ and margins $F_1, \ldots, F_d$, there exists a non-empty set $\cS_H$ of copula functions $C$ so that for any $x_1,\ldots, x_d$,
$$
H(x_1,\ldots,x_d) = C\{F_1(x_1),\ldots,F_d(x_d)\},\quad \text{ for any } C\in \cS_H.
$$
If all margins are continuous, then $\cS_H$ contains a unique copula which is the distribution function of the random vector $\bU = (U_1,\ldots,U_d)$, with $\bU_j = F_j(X_j)$, $j\in \setd$. However, in many applications, discontinuities  are often present in one or several margins. Whenever at least one margin is discontinuous, $\cS_H$ is infinite, and  in this case,
any  $C  \in\cS_H$  is only uniquely  defined  on the closure  of the range $\cR_\bF= \cR_{\cF_1} \times \cdots \cR_{\cF_d}$  of $\bF = (\cF_1,\ldots,\cF_d)$,  where
$\cR_{\cF_j}=\{F_j(y): \; y\in\dR\}$ is the range of $F_j$, $j\in\setd$. This  can  lead to identifiability issues raised  in
\cite{Faugeras:2017, Geenens:2020,Geenens:2021}, creating also estimation problems  that need to be addressed.
However,  even if the copula is not unique,
 it still makes sense to use a   copula family $\{C_\btheta: \btheta\in \cP\}$, to define multivariate models, provided one is aware of the possible limitations. Indeed,  the copula-based model $\cK_\btheta(\bx) = C_\btheta\{F_1(x_1),\ldots, F_d(x_d)\}$, $\btheta\in\cP$, is a well-defined family of distributions for which estimating $\btheta$ is a challenge.   \\

In the literature, the case of discontinuous margins is not always treated properly. It is either ignored or, in some cases, continuous margins are  fitted to the data \citep{Chen/Singh/Guo/Mishra/Guo:2013}. This procedure does not solve the problem since there still will be ties.
An explicit  example that underlines the problem with ignoring ties is given in Section \ref{sec:est}  and Remark \ref{rem:incorrect_method}. A solution proposed in the literature is to use jittering, where data are perturbed by independent random variables, introducing extra variability. 
Our first aim is to address  the  identifiability issues for the model
$\{\cK_\btheta:\btheta\in\cP\}$, when the margins are unknown and arbitrary. Our second aim  is to present formal inference methods for the  estimation of $\btheta$. 
More precisely, we consider a semiparametric approach for the  estimation of $\btheta$, when the margins are arbitrary, i.e. each margin  can be continuous, discrete, or even a  mixture of a discrete and continuous distribution.  The estimation approach is based on pseudo log-likelihood taking into account the discontinuities. We also propose a pairwise composite log-likelihood. In the literature, few  articles focused on the estimation of the copula parameters in the case of noncontinuous margins \citep{Song/Li/Yuan:2009, deLeon/Wu:2011, Zilko/Kurowicka:2016,Ery:2016,Li/Li/Qin/Yan:2020}. Most of them have considered only the case when the components of the copula are discrete or continuous and a full parametric model, except \cite{Ery:2016} and \cite{Li/Li/Qin/Yan:2020}. In \cite{Ery:2016}, in the bivariate discrete case, the author considered a semiparametric approach and studied the asymptotic properties. In \cite{Li/Li/Qin/Yan:2020}, the authors have proposed a semiparametric approach for the estimation of the copula parameters for  bivariate data having arbitrary distributions, without presenting any asymptotic results, neither discussing identifiability issues. \\ 



The article is organized as follows: In Section \ref{sec:limits}, we discuss the important topic of identifiability,  as well as the limitations of the copula-based approach for multivariate data with arbitrary margins. Conditions are stated in order to have identifiability so that the estimation problem is well-posed. Next, in Section \ref{sec:est}, we describe the estimation methodology, using the  pseudo log-likelihood approach as well as the composite pairwise approach, while  the estimation error is studied in Section \ref{ssec:conv1}.
By using an extension of the asymptotic behavior of rank-based statistics \citep{Ruymgaart/Shorack/vanZwet:1972} to data with arbitrary distributions (Theorem \ref{thm:rum_ext}), we show that the limiting distribution of the pseudo log-likelihood  estimator is  Gaussian. Similar results are  obtained for pairwise composite pseudo log-likelihood. In addition, we can obtain similar asymptotic results if the margins are estimated from parametric families, instead of using the empirical margins, i.e., we consider the multivariate model $\cK_{\btheta,\bgamma_1,\ldots,\bgamma_d}(x_1,\ldots, x_d) = C_\btheta\{F_{1,\bgamma_1}(x_1),\ldots, F_{d,\bgamma_d}(x_d)\}$, where the margins are estimated first, and then the copula parameter $\btheta$.  This is discussed in Remark \ref{rem:IFM}.
Finally, in Section \ref{sec:exp}, numerical experiments are performed to assess the convergence and precision of the proposed estimators in bivariate and trivariate settings, while in Section \ref{sec:example}, as an example of application,  a copula-based regression approach is proposed to investigate  the relationship between the duration and severity, using hydrological data \citep{Shiau:2006}.

\section{Identifiability and limitations}\label{sec:limits}

As exemplified in \cite{Faugeras:2017, Geenens:2020}, there are several problems and limitations when applying copula-based models to data with arbitrary distributions, the most important being identifiability. This is discussed first. Then, we will examine some limitations of these models.

\subsection{Identifiability}\label{ssec:identify}

For the discussion about identifiability, we consider two cases: copula family identifiability, and copula parameter identifiability.
For copula family identifiability, it may happen that two  copulas from different families, say  $C_\btheta$ and $D_\bpsi$, belong to $\cS_H$.
For example, in \cite{Faugeras:2017}, the author considered the bivariate Bernoulli case, whose distribution is completely determined by $h(0,0)=P(X_1=0,X_2=0)$, $p_1=P(X_1=0)$, and $p_2=P(X_2=0)$. 
Provided that the copula families $C_\theta$ and $D_\psi$ are rich enough, there exist unique parameters $\theta_0$ and $\psi_0$ so that
$h(0,0) = C_{\theta_0}(p_1,p_2) = D_{\psi_0}(p_1,p_2)$. For calculation purposes, the choice of the copula in this case is immaterial,  and there is no possible interpretation of the copula or its parameter or the type of dependence induced by each copula. All that matters is $h(0,0)$, or the associated odd ratio \citep{Geenens:2020}, or Kendall's tau of $C^\maltese$, given in this case by $\tau^\maltese = 2\{h(0,0)-p_{1}p_{2}\}$. Here, $C^\maltese$ is the so-called multilinear copula, depending on the margins and belonging to $\cS_H$; see, e.g., \cite{Genest/Neslehova/Remillard:2017}. To  illustrate  the fact that   the  computations  are  the same for any $C\in\cS_H$,  consider  the conditional distribution $P(X_2\le x_2|X_1=x_1)$ of $X_2$ given $X_1=x_1$, given  by
$ \left. \dfrac{ \partial} {\partial_{u}}   C(u,v)\right|_{u=F_1(x_1), v = F_2(x_2)}$,  if $F_1$ is continuous at $x_1$ and  $
 \dfrac{C\{F_1(x_1), F_2(x_2)\}-C\{F_1(x_1-),F_2(x_2)\}}{F_1(x_1)-F_1(x_1-)}$, if $F_1$ is not continuous at $x_1$,
where $F_1(x_1-) = P(X_1< x_1)$.
The value of $P(X_2\le x_2|X_1=x_1)$ is  independent  of  the choice  of   $C\in\cS_H$, since all copulas in $\cS_H$ have the same value on the closure of the range $\cR_\bF$.
So apart from the lack of interpretation of the type of dependence, there is no problem for calculations, as long as for the chosen copula family, its parameter is identifiable, as defined next.\\

We now consider the identifiability of the copula parameter for a given copula family $\{C_\btheta : \btheta\in\cP\}$.
Since one of the aims is to estimate the parameter of the family $\{\cK_\btheta=C_\btheta\circ \bF: \btheta\in\cP\}$, the following definition of identifiability is needed.

 \begin{definition}\label{def:identity0}
 For a copula family $\{C_\btheta: \btheta\in\cP\}$ and a vector of margins $\bF = (F_1,\ldots,F_d)$, the parameter $\btheta$ is identifiable with respect to  $\bF$
 if the mapping  $\btheta\mapsto \cK_\btheta = C_\btheta\circ \bF $ is injective, i.e., if for $\btheta_1, \btheta_2 \in \cP$, $\btheta_1\neq \btheta_2$ implies that  $\cK_{\btheta_1} \neq \cK_{\btheta_2}$. This is equivalent to the existence of $\bu\in  \cR_\bF^*$ such that $C_{\btheta_1}(\bu) \neq C_{\btheta_2} (\bu)$, where for a vector  of margins $\bG$,
$$
\cR_{\bG}^* =  \left\{\bu \in  \cR_{\bG} \cap (0,1]^d : u_j < 1 \text{ for at least two indices  } j\right\}.
$$
 \end{definition}
The following result, proven in the  Supplementary Material, 
is essential for checking that a  given mapping is injective whenever $\cR_\bF^*$ is finite.
\begin{prop}\label{prop:rolle}
 Suppose $\bT$ is a continuously differentiable mapping from a convex open set $O\subset \dR^p$ to $\dR^q$. If $p>q$, then $\bT$ is not injective. Also, $\bT$ is injective if the rank of the derivative $\bT'$ is $p\le q$.
Furthermore, if the rank of $\bT'(\btheta_0)$ is $p$ for  some $\btheta_0 \in O$, the rank of $\bT'(\btheta)$ is $p$ for any $\btheta$ in some neighborhood of $\btheta_0$. Finally, if the maximal rank is $r<p$, attained at some $\btheta_0\in O$, then  the rank of $\bT'$ is $r$ in some  neighborhood of $\btheta_0$, and $T$ is not injective.
\end{prop}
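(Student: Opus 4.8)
The plan is to reduce all four assertions to the rank theorem (the constant rank theorem) together with the lower semicontinuity of the map $\btheta\mapsto\operatorname{rank}\bT'(\btheta)$; the four claims are really four faces of the single principle that the injectivity of $\bT$ is governed by whether $\bT'$ has full column rank. First I would record the semicontinuity ingredient: for every integer $k$ the set $\{\btheta\in O:\operatorname{rank}\bT'(\btheta)\ge k\}$ is open, being the union, over all $k\times k$ submatrices of $\bT'$, of the loci where the corresponding minor --- a continuous function of $\btheta$, since $\bT'$ is continuous and the determinant is polynomial --- is nonzero. Two consequences drop out immediately. If $\operatorname{rank}\bT'(\btheta_0)=p$, which is the largest value possible when $p\le q$, then some $p\times p$ minor of $\bT'$ is nonzero at $\btheta_0$, hence nonzero on a neighbourhood of $\btheta_0$, so $\operatorname{rank}\bT'\equiv p$ there; this is the third assertion. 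If $r$ denotes the maximal rank of $\bT'$ and it is attained at $\btheta_0$, then $\operatorname{rank}\bT'\ge r$ near $\btheta_0$ while $\operatorname{rank}\bT'\le r$ on all of $O$, so $\operatorname{rank}\bT'\equiv r$ on a neighbourhood $V$ of $\btheta_0$, which is the first half of the fourth assertion.

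Next I would treat the two non-injectivity claims together. When $p>q$ the maximal rank automatically satisfies $r\le\min(p,q)=q<p$, so in both cases it suffices to start from a point $\btheta_0$ of maximal rank $r$ with $r<p$ and from the neighbourhood $V$ on which $\operatorname{rank}\bT'$ is the constant $r$. On $V$ the rank theorem provides $C^1$ changes of variables near $\btheta_0$ and near $\bT(\btheta_0)$ in which $\bT$ takes the normal form $(x_1,\dots,x_p)\mapsto(x_1,\dots,x_r,0,\dots,0)$. Since $r<p$, this projection identifies any two points that share their first $r$ coordinates, so $\bT$ is already not injective on $V$, hence not injective on $O$. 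Taking $\btheta_0$ arbitrary when $p>q$ gives the first assertion, and taking $\btheta_0$ of maximal rank $r<p$ gives the last assertion.

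For the injectivity claim, the hypothesis $\operatorname{rank}\bT'(\btheta)=p\le q$ for all $\btheta$ makes $\bT$ an immersion; the rank theorem with constant rank $p$ then yields, around each point of $O$, $C^1$ coordinates in which $\bT$ is the inclusion $(x_1,\dots,x_p)\mapsto(x_1,\dots,x_p,0,\dots,0)$, so $\bT$ is locally injective. To pass from local to global injectivity I would use the convexity of $O$: if $\bT(\btheta_1)=\bT(\btheta_2)$ with $\btheta_1\ne\btheta_2$, then writing $\btheta_t=(1-t)\btheta_1+t\btheta_2\in O$, the curve $g(t)=\bT(\btheta_t)$ on $[0,1]$ satisfies $g(0)=g(1)$ while its velocity $g'(t)=\bT'(\btheta_t)(\btheta_2-\btheta_1)$ never vanishes, because $\bT'(\btheta_t)$ has trivial kernel and $\btheta_2-\btheta_1\ne0$; the desired contradiction then has to be drawn from the behaviour of this regular loop, which is where the convexity of $O$ and the full column rank are genuinely used.

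The step I expect to be the real obstacle is exactly this last passage, the upgrade from local to global injectivity. The local immersion picture and the minor computations are routine; by contrast, ruling out a $\bT$ with everywhere full column rank on a convex domain that nonetheless fails to be injective does not follow from continuity of minors or from the rank theorem alone, and carrying out the segment argument above carefully --- exploiting both the convexity and the rank hypothesis --- is where I would expect the bulk of the work, and the most room for error, to lie. Everything else is bookkeeping.
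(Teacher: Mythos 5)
Your handling of the rank semicontinuity and of the two non-injectivity assertions is fine and is essentially the paper's argument: the paper also localizes to a neighbourhood of constant rank $r$ and invokes the rank theorem (in Rudin's form $\bT\{\bH(\bx)\}=\bA\bx+\bvarphi(\bA\bx)$), then moves along a nonzero direction in the null space of $\bA$, which is the same mechanism as your normal-form projection; this covers the third and fourth assertions and the case $p>q$. The genuine gap is the second assertion, and you have not closed it: you set up the segment $g(t)=\bT\{(1-t)\btheta_1+t\btheta_2\}$, note that $g(0)=g(1)$ while $g'(t)=\bT'(\btheta_t)(\btheta_2-\btheta_1)\neq 0$ for all $t$, and then stop, saying the contradiction ``has to be drawn from the behaviour of this regular loop.'' But no contradiction is available at that level of generality: a closed $C^1$ curve with nowhere-vanishing velocity is perfectly possible (a circle traversed once), so everywhere full column rank plus convexity of the domain does not by itself yield global injectivity. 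Indeed $T(\theta)=(\cos\theta,\sin\theta)$ on the convex open set $(0,4\pi)$ has derivative of full column rank $p=1\le q=2$ everywhere yet is not injective, and $(x,y)\mapsto(e^x\cos y,e^x\sin y)$ on $\dR^2$ does the same with $p=q=2$. So the step you deferred is not bookkeeping; it is the crux, and it cannot be completed without additional hypotheses.

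For comparison, the paper fills this step with an ``extension of Rolle's theorem'': from $g(0)=g(1)$ it asserts the existence of parameters $t_0,t_1\in[0,1]$ at which \emph{every} component $g_j$ is simultaneously minimized, respectively maximized, deduces $g'(t_0)=0$ at an interior point, and hence contradicts the trivial kernel of $\bT'(\btheta_{t_0})$. That simultaneous-extremum claim is precisely what a regular closed curve such as the circle violates (different components attain their extrema at different parameter values), so your instinct that all the difficulty and all the room for error sit in this passage is well founded. As submitted, however, your proposal leaves the injectivity assertion unproved, so it does not constitute a complete proof of the proposition as stated.
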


\begin{example}\label{ex:monotone}
A 1-dimensional parameter is identifiable for any $\bF$ for the following  copula families  and their rotations: the bivariate Gaussian copula, the Plackett copula, as well as the multivariate Archimedean copulas with one parameter like the Clayton, Frank, Gumbel, Joe.  This is because for any fixed $\bu\in (0,1)^d$, $\btheta\mapsto C_\btheta(\bu)$ is strictly monotonic.  Note also that by \cite[Theorem A2]{Joe:1990}, every meta-elliptic copula $C_\brho$, with $\brho$ the correlation matrix parameter, is ordered  with respect to $\rho_{ij}$.
\end{example}

In practice, for copula models with bivariate or multivariate parameters, it might be impossibly difficult to verify injectivity since the margins are never known. In this case,  since $\bF_n = (F_{n1},\ldots,F_{nd})$ converges to $\bF$,
 where $\disp F_{nj}(y) = \dfrac{1}{n+1}\sum_{i=1}^n \I(X_{ij}\le y)$, $y\in \dR$, $j\in \setd$,
 one could verify that the parameter is identifiable with respect to $\bF_n$, i.e.,  that the mapping $\btheta\mapsto C_\btheta\circ \bF_n$ is injective. The latter does not necessarily imply identifiability with respect to $\bF$ but if the parameter is not identifiable with respect to $\bF_n$, one should choose another parametric family $C_\btheta$.
 Next, the cardinality of  $\cR_{\bF_n}^*$ is
 $q_n= m_{n1} \times  \cdots \times m_{nd} - \sum_{j=1}^d m_{nj} +d-1$, where $m_{nj}$ is the size of the support of $F_{nj}$, $j\in\setd$. Let $ \{\bu_i: 1\le i\le q_n\}$ be an enumeration of $\cR_{\bF_n}^* $.  As a result,
 $\btheta\mapsto C_\btheta\circ \bF_n$ is injective iff the mapping $\btheta\mapsto T_n(\btheta) = \{ C_\btheta(\bu_1), \ldots, C_\btheta(\bu_{q_n})\}^\top$ is injective,
 There are two cases to consider: $p>q_n$ or $p\le q_n$.
 \begin{itemize}
     \item First, one should not choose a copula family with $p>q_n$, since in this case, according to Proposition
     \ref{prop:rolle}, the mapping $\bT_n$ cannot be injective, so the parameter is not identifiable.\\

     \item  Next, if $p\le q_n$, it follows from Proposition \ref{prop:rolle}, 
      that $T_n$ is injective if $\bT_n'$ has rank $p$. Also, it follows from Proposition \ref{prop:rolle}
     that if the rank of $\bT_n'(\btheta)$ is $p$, then there is a neighborhood $\cN \subset \cP$ of $\btheta$ for which the rank of $\bT_n'(\tilde\btheta)$ is $p$, for any $\tilde\btheta\in\cN$. One could then restrict the estimation to this neighborhood if necessary.  Note that the matrix $\bT_n'(\btheta)$ is composed of the (column)  gradient vectors $\dot C_\btheta(u_j)$, $j \in \{1,\ldots, q_n\}$. The latter can be calculated explicitly or it can be approximated numerically. To check that the rank of $T'$ is $p$,
     one needs to choose a bounded neighborhood $\cN$ for $\btheta$, choose an appropriate covering of $\cN$ by balls of radius $\delta$, with respect to an appropriate metric, corresponding to a given accuracy $\delta$ needed for the estimation,  and then compute  the rank of the mappings $T_n'(\btheta)$, for all the centers $\btheta$ of the covering balls. This appears in the R package CopulaInference \citep{Nasri/Remillard:2023c}.
 \end{itemize}


\begin{example}\label{ex:bernoulli} In the bivariate Bernoulli case, $q=q_n=1$, so a copula parameter is identifiable  if for any fixed $\bu\in (0,1)^2$, the mapping $\btheta\mapsto C_\btheta(\bu)$ is injective.
As a result of the previous discussion, $p=1$, i.e., the parameter should be 1-dimensional and for any fixed $\bu\in (0,1)^2$, the mapping $\theta\mapsto C_\theta(\bu)$ must be  strictly monotonic since $\partial_\theta C_\theta(\bu)$ must be always positive or always negative.
\end{example}

\begin{remark}[Student  copula]\label{rem:student}
There are cases when $p<q_n$ is needed. For example, for the bivariate Student copula, $q_n \ge 3>p=2$ is sometimes necessary. To see why,
note that the point $\bu = (1/2,1/2)$ determines $\rho$ uniquely since for any $\nu>0$, $C_{\rho,\nu}\left(\dfrac{1}{2},\dfrac{1}{2}\right)= \dfrac{1}{4}+\dfrac{1}{2\pi}\arcsin{\rho}$. Take for example $\rho=0.3$. Then $C_{0.3,\nu}(0.75,0.55) = 0.452$ is not injective, having $2$ solutions, namely $\nu=0.224965$ and $\nu = 0.79944$,  so the rank of $T'(\rho,\nu)$ is
 $1 < p$ at points $\bu_1=(0.5,0.5)$ and $\bu_2 = (0.75,0.55)$.
 However, if $\nu$ is restricted to  a finite set like $\{k:\; 1\le k \le 50\} $, then the mapping is injective. This restriction of the values of $\nu$ makes sense if one wants to use the  Student copula. In fact,  since  exact values of $C_{\rho,\nu}$ can be computed explicitly only when $\nu$ is an integer \citep{Dunnett/Sobel:1954}. It is the case for example in the R package \textit{copula}. Otherwise, one needs to use numerical integration \citep{Genz:2004}, with might lead to numerical inconsistencies while differentiating the copula with respect to $\nu$ in a given open interval.
 \end{remark}

\subsection{Limitations}\label{ssec:limits}

Having  discussed identifiability issues, we are now in a position to discuss limitations of the copula-based approach for modeling multivariate data. Two main issues have been  identified when one or some margins have discontinuities: interpretation and dependence on margins.

\subsubsection{Interpretation}
As exemplified in the extreme case of the bivariate Bernoulli distribution discussed in Example \ref{ex:bernoulli}, interpretation of the copula or its parameter can be hopeless.
Recall that the object of interest is the family $\{\cK_\btheta : \btheta\in\cP\}$, not the copula family. Even if $X_1$ has only one atom at $0$ and $X_2$ is continuous, then one  only ``knows'' $ C_\btheta$ on $\bar \cR= \{[0,F_1(0-)]\cup [F_1(0),1]\}\times [0,1]$. For example, how can we interpret the form of dependence induced by a Gaussian copula or a Student copula restricted to such $\bar\cR$? Apart from tail dependence, there is not much one can say.
As it is done in the case for bivariate copula, one could plot the graph of a large sample, say $n=1000$. The scatter plot is not very informative. This is even worse if the support of $H$ is finite. As an illustration, we generated 1000 pairs of observations from Gaussian, Student(2.5), Clayton, and Gumbel copulas with $\tau=0.7$, where the margin $F_1$ is a mixture of a Gaussian  and an atom at $0$ (with probability $0.1$), and $F_2$ is Gaussian. The scatter plots are given in Figure \ref{fig:simgaussian}, together with the scatter plots of the associated pseudo-observations $\left( F_{n1}(X_{i1}),  F_{n2}(X_{i2}) \right)$.  The raw datasets in panel a) of Figure \ref{fig:simgaussian} do not say much, apart from the fact that $X_1$ has an atom at $0$, while pabel b) of Figure \ref{fig:simgaussian} illustrates the fact that the copula is only known on $\bar \cR$. All the graphs are similar but for the Clayton copula. 
So one can see that even if  there is only one atom, one cannot really interpret the type of dependence.

\begin{figure}[ht!]
    \centering
a) \includegraphics[scale=0.15]{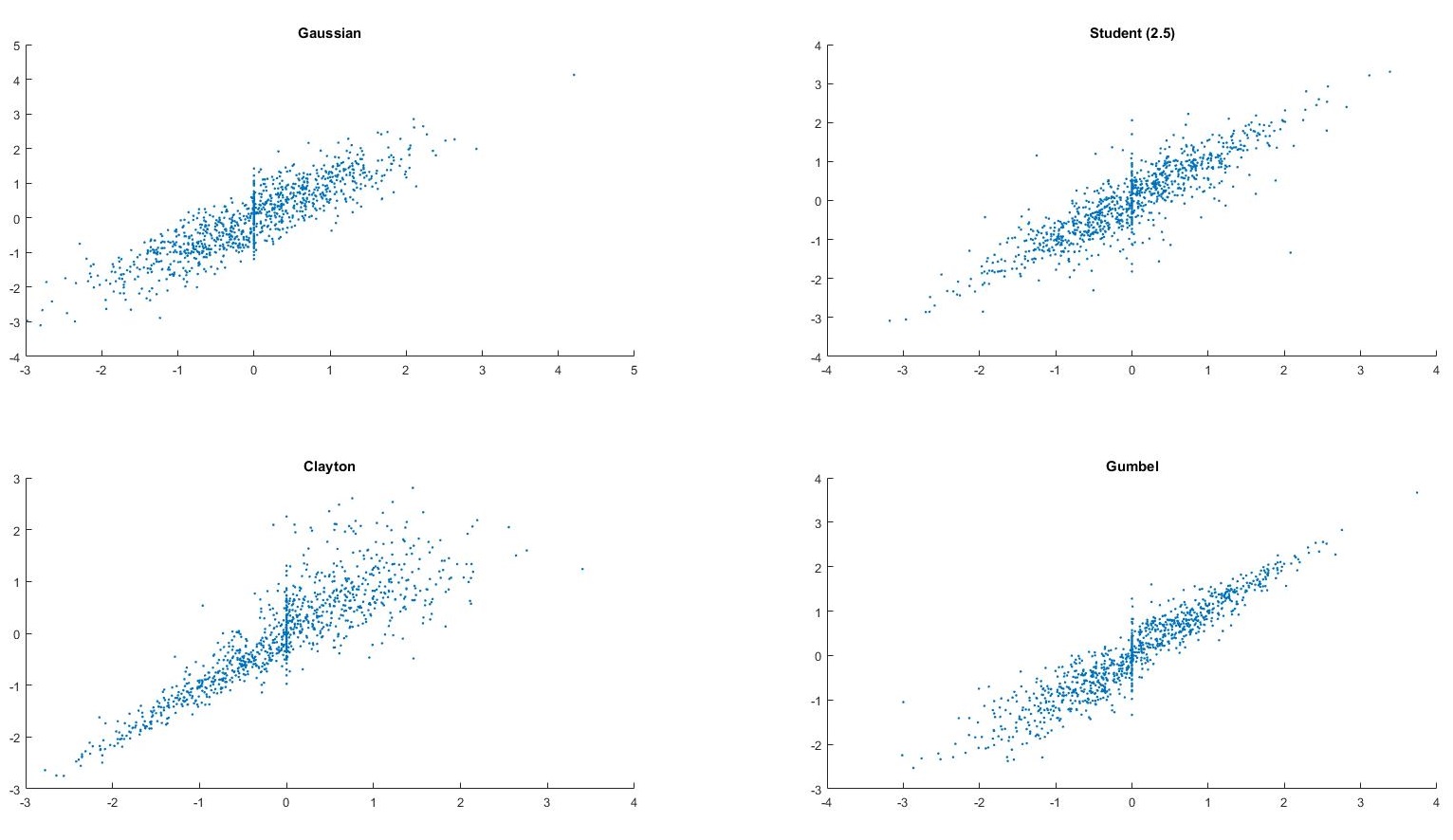}
b)  \includegraphics[scale=0.15]{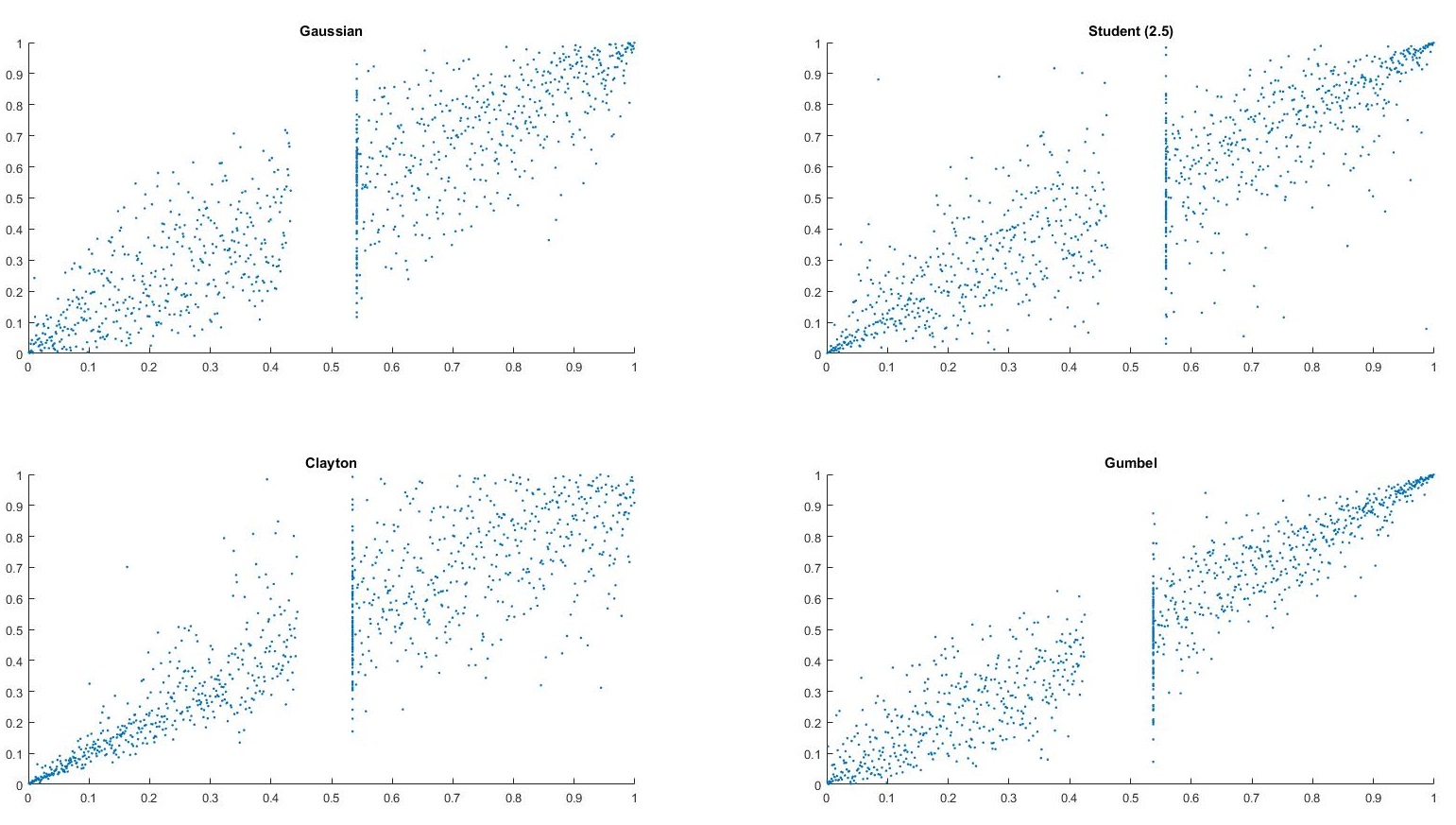}
    \caption{Panel a): scatter plots of 1000 pairs from Gaussian, Student(2.5), Clayton, and Gumbel copulas with $\tau=0.7$, where $F_1$ is a mixture of a Gaussian (with probability $0.9$) and a Dirac at $0$, and $F_2$ is Gaussian. Panel b): scatter plots of the associated pseudo-observations.}
    \label{fig:simgaussian}
\end{figure}

\subsubsection{Dependence on margins}\label{sssec:margins}

In the bivariate Bernoulli case, \cite{Geenens:2020} proposed the odds-ratio $\omega$ as  a ``margin-free'' measure of dependence. In fact, \cite{Geenens:2020}   quotes Item 3 of  \cite[Theorem 6.3 (p. 116)]{Rudas:2018} which says that if $\btheta_1 = (p_1,p_2)$ are the margins' parameters and  $\theta$ is a parameter, whose range does not depend on $\btheta_1$ (called variation independent parametrization in \cite{Rudas:2018}), and if $(\btheta_1,\theta)$  determines the full distribution $H$,  then $\theta$ is a one-to-one function of the odd-ratio $\omega$. However, as a proof of this statement, \cite{Rudas:2018} says that it is because $(\btheta_1,\omega)$ determines $H$. This only proves that $\theta$ is a one-to-one function of $\omega$, for a fixed $\btheta_1$, i.e., for fixed margins, not that it is a one-to-one function of $\omega$ alone. In fact, according to Rudas' definition,  for the Gaussian copula $C_\rho$, $\rho\in [-1,1]$, it follows that $(\btheta_1,\rho)$  is also a variation independent parametrization of $H$, by simply setting $H(p_1,p_2) =C_\rho(p_1,p_2)$.  Therefore, $\rho$ qualifies as a ``margin-free'' association measure as well, if one agrees that margin-free means ``variation independent in the sense of \cite{Rudas:2018}''. Using the same argument as  in \cite{Rudas:2018}, it follows that any variation independent parameter is a function of $\rho$.
This construction works for any rich enough copula family such as Clayton, Frank, or Plackett (yielding the omega ratio), and that it also applies to all margins, not only Bernoulli margins. In fact, the property of being one-to-one is the same as what we defined as ``parameter identifiability'' in Section \ref{ssec:identify}.

\section{Estimation of parameters}\label{sec:est}

In  this section, we will show how to estimate the parameter $\btheta$ of the semiparametric model $\{\cK_\btheta = C_\btheta\circ F: \btheta\in \cP\}$, where $\cP\subset \dR^p$ is convex.  It is assumed that the  given copula family $\{ C_\btheta: \btheta\in \cP\}$ satisfies the following assumption.

\begin{ass}\label{hyp:cop}
$\btheta \mapsto C_\btheta $ on $\cP$ is thrice continuously differentiable with respect to $\btheta$, for any $\btheta\in \cP$,  the density $c_\btheta$ exists, is thrice continuously differentiable, and is strictly positive on  $(0,1)^d$.
Furthermore, for a given vector of margins $\bF$, $\btheta \mapsto C_\btheta\circ \bF $ is injective on $\cP$.
\end{ass}
Suppose that $\bX_1,\ldots,\bX_n$ are iid with $\bX_i\sim \cK_{\btheta_0}$, for some $\btheta_0\in\cP$. Without loss of generality, we may suppose that $\bX_i = \bF^{-1}(\bU_i)$, where
$\bU_1,\ldots,\bU_n$ are iid with $\bU_i\sim C_{\btheta_0}$.
Estimating parameters for arbitrary distributions is more challenging that in the continuous case.
For continuous (unknown) margins, copula parameters can be estimated using different approaches, the most popular ones based on pseudo log-likelihood being the normalized ranks method \citep{Genest/Ghoudi/Rivest:1995,Shih/Louis:1995} and the IFM method (Inference Functions for Margin) \citep{Joe/Xu:1996, Joe:2005}.
Since the margins are unknown, it is tempting to ignore that some margins are discontinuous and consider maximizing the (averaged)  pseudo log-likelihood  $
L_n(\btheta) = \dfrac{1}{n}\sum_{i=1}^n \log{ c_\btheta(\bU_{n,i})}$, $\bU_{n,i}=\bF_{n}(\bX_{i})$, $i\in\setn$.
Note than one could also replace the nonparametric margins with parametric ones, as in the IFM approach.
However, in either cases, there is a problem with using $L_n$ when there are atoms.  To see this, consider a simple  bivariate model with Bernoulli margins, i.e., $P(X_{ij}=0)= p_j$, $P(X_{ij}=1)=1-p_j$, $j\in \{1,2\}$. Then, the full (averaged)  log-likelihood is
 \begin{eqnarray*}
\ell_n^\star (\btheta,p_1,p_2) &=&  h_n(0,0)\log\{C_\btheta(p_{1},p_{2})\}+
 h_n(0,1)\log\{p_{1} - C_\btheta(p_{1},p_{2})\}\\
&&
+ h_n(1,0)\log\{p_{2} - C_\btheta(p_{1},p_{2})\} + h_n(1,1)\log\{1-p_{1} -p_{2}+ C_\btheta(p_{1},p_{2})\},
\end{eqnarray*}
where $h_n(x_1,x_2) = \dfrac{1}{n}\sum_{i=1}^n
\I\{X_{i1}=x_1,X_{i2}=x_2\}$. Hence, the pseudo log-likelihood $\ell_n$ for $\btheta$, when  $p_1$ and $p_2$ are estimated by
$p_{n1} = h_n(0,0)+h_n(0,1)$ and $p_{n2} = h_n(0,0)+h_n(1,0)$,  is
\begin{eqnarray*}
{\ell}_n  (\btheta)&=&  h_n(0,0)\log\{C_\btheta(p_{n1},p_{n2})\}+
 h_n(0,1)\log\{p_{n1} - C_\btheta(p_{n1},p_{n2})\}\\
&&
+ h_n(1,0)\log\{p_{n2} - C_\btheta(p_{n1},p_{n2})\} + h_n(1,1)\log\{1-p_{n1} -p_{n2}+ C_\btheta(p_{n1},p_{n2})\}.
\end{eqnarray*}
It is clear that under general conditions, when the parameter is 1-dimensional,
 and the copula family is rich enough,
maximizing $\ell_n$ produces a consistent estimator for $\btheta$ while maximizing $L_n$ will not.
In fact, for $\ell_n$, the estimator $\theta_n$ of $\theta$ is the solution of $C_\theta(p_{n1},p_{n2}) = h_n(0,0)$ \citep[Example 13]{Genest/Neslehova:2007}.
Note also that $\ell_n$ converges almost surely to
\begin{eqnarray*}
\ell_\infty(\btheta) &=& h(0,0)\log\{C_\btheta(p_{1},p_{2})\}+
 h(0,1)\log\{p_{1} - C_\btheta(p_{1},p_{2})\}\\
&& \qquad + h(1,0)\log\{p_{2} - C_\btheta(p_{1},p_{2})\}
 + h(1,1)\log\{1-p_{1} -p_{2}+ C_\btheta(p_{1},p_{2})\},
\end{eqnarray*}
where $h(i,j)=P(X_1=i,X_2=j)$, $i,j\in \{0,1\}$.
However, the estimator obtained by maximizing $L_n$ is not consistent in general.
In fact, for any copula $C_\theta$ with a density $c_\theta$ which is continuous and non-vanishing on $(0,1]^2$,
 \begin{eqnarray*}
L_n (\btheta) &= &  h_n(0,0)\log{ c_\theta\left(\dfrac{n}{n+1}p_{n,1}, \dfrac{n}{n+1}p_{n,2}\right)}+
 h_n(1,0)\log{c_\theta\left(\dfrac{n}{n+1}, \dfrac{n}{n+1}p_{n,2}\right)}\\
 &&\quad+ h_n(0,1)\log{c_\theta\left(\dfrac{n}{n+1}p_{n,1}, \dfrac{n}{n+1}\right)}+ h_n(1,1)\log{c_\theta\left(\dfrac{n}{n+1}, \dfrac{n}{n+1}\right)}
  \end{eqnarray*}
converges almost surely to $L_\infty (\btheta) = h(0,0)\log{c_\theta\left(p_1, p_2\right)}+
 h(1,0)\log{c_\theta\left(1, p_2\right)}+ h(0,1)\log{c_\theta\left(p_1, 1 \right)}+ h(1,1)\log{c_\theta\left( 1,1\right)}.$
In particular, for a Clayton copula with $\theta_0=2$, one gets $L_\infty (\btheta) = \log(1+\theta)+ \theta ( p_1\log{p_1} +p_2\log{p_2}) +h(0,0)(1+2\theta)
 \left\{\log{C_\theta(p_1,p_2)} -\log(p_1 p_2)\right\}$,
with $h(0,0) = C_{\theta_0}(1/2,1/2) = 1/\sqrt{7}$. As displayed in Figure \ref{fig:claytonbernoulli}, for the limit $L_\infty$ of $L_n$,  the supremum is attained at $\btheta= 4.9439$, while for the limit $\ell_\infty$ of $\ell_n$, the supremum is attained at the correct value $\theta=\theta_0=2$.
\begin{figure}[ht!]
    \centering
a) \includegraphics[scale=0.2]{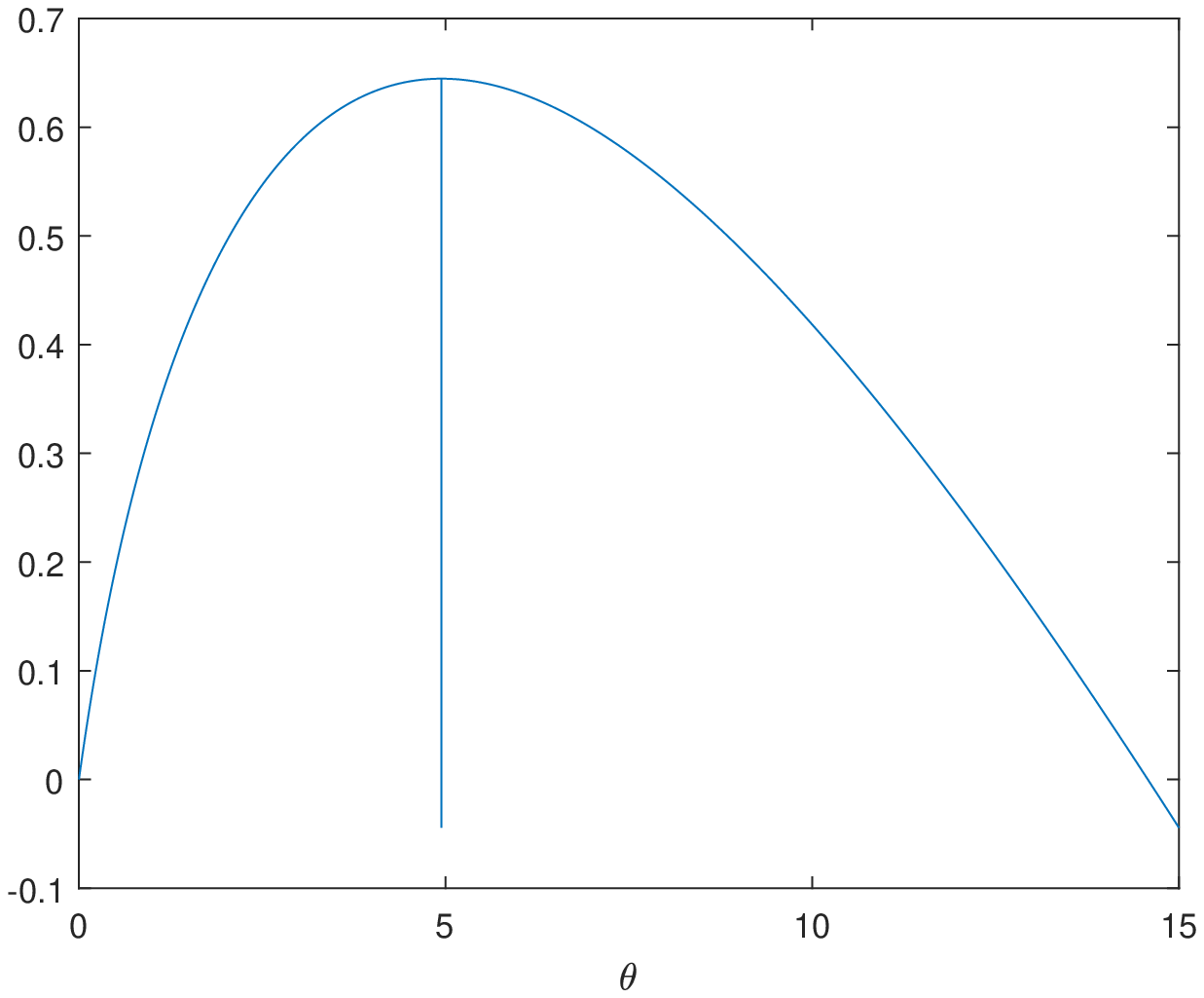} b) \includegraphics[scale=0.2]{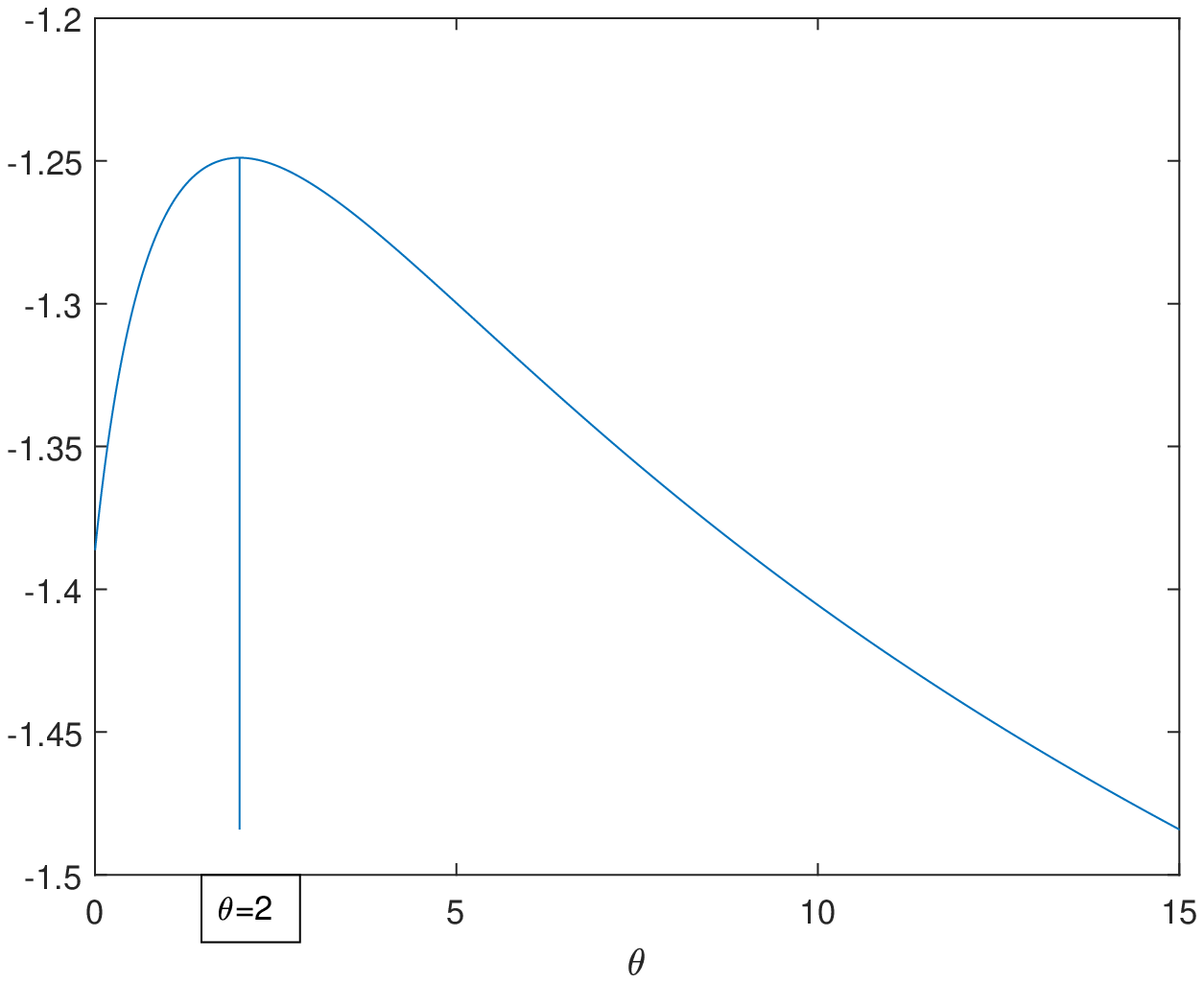}
    \caption{Graphs of $L_\infty(\theta)$ (a) and $\ell_\infty(\theta)$ (b)  for the bivariate Bernoulli case with Clayton copula ($\theta_0=2$.) }
    \label{fig:claytonbernoulli}
\end{figure}
\begin{remark}\label{rem:incorrect_method}
This simple example shows that one must be very careful in estimating copula parameters when the margins are not continuous. In particular, one should not use the usual pseudo-MLE method based on $L_n$ or the IFM method with continuous margins fitted to data with ties.
\end{remark}
\subsection{Pseudo log-likelihoods}\label{sec:lik}

For any $j\in \setd$, let $\cA_j = \{x\in \dR: \Delta F_j(x)>0\}$ be the (countable) set of atoms of $F_j$, where for a general univariate distribution function $\cG$, $\Delta \cG(x) = \cG(x) - \cG({x}-)$, with $\cG({x}-) = \lim_{n\to\infty}\cG(x-1/n)$. Throughout this paper, we assume that $\cA_j$ is closed.
For $j\in\setd$,  $\mu_{cj}$ denotes the counting measure on $\cA_j$,  and let $\cL$ be Lebesgue's measure; both measures are defined on $(\dR,\cB_\dR)$, and  $\mu_{cj}+\cL$, also defined on $(\dR,\cB_\dR)$, is $\sigma$-finite.
Further
let $\mu$ be the product measure on $\dR^d$ defined by $\mu = (\mu_{c1}+\cL)\times \cdots \times (\mu_{cd}+\cL)$, which is also $\sigma$-finite. In what follows, $
\cG^{-1}(u) = \inf\{x: \cG(x)\ge u\}$,  $u\in (0,1)$. 
Since our aim is to estimate the parameter of the copula family without knowing the margins, the following assumption is necessary.

\begin{ass}\label{hyp:margins}
The margins $F_1,\ldots, F_d$ do not depend on the parameter
$\btheta \in \cP \subset \dR^p$. In addition, for any $j\in \setd$, $F_j$ has density $f_j$ with respect to the measure $\mu_{cj}+\cL$.
\end{ass}
What is meant in Assumption \ref{hyp:margins} is that if the margins were parametric (e.g., Poisson), their parameters are not related to the parameter of the copula. This assumption  is needed when one wants to estimate the margins first, and then estimate $\btheta$.
Assumption \ref{hyp:margins} really means that $\nabla_\btheta F_j(x_j) \equiv 0$, which is a natural assumption. This assumption is also implicit in the continuous case.\\

To estimate the copula parameters,  one could use at least two different pseudo log-likelihoods: an informed one, if $\cA_1,\ldots, \cA_d$ are known, and a non-informed one,  if some atoms are not  known. The latter is the approach proposed in \cite{Li/Li/Qin/Yan:2020} in the bivariate case.  From a practical point of view, it is easier to implement, since there is no need to define the atoms. However, it requires more assumptions, and one could argue that in practice, atoms should be known.
Before writing these pseudo log-likelihoods, we need to introduce some notations.
Suppose that $C$ is a copula having a continuous density $c$ on $(0,1)^d$. For any $B\subset \setd$ and for any $\bu = (u_1,\ldots,u_d) \in (0,1)^d$, set $\partial_B C (\bu)  = \{\prod_{j\in B}\partial_{u_j} \}C(\bu)$, where $\partial_{\setd} C(\bu) = c(\bu)$ is the density of $C_\btheta$ and $\partial_\emptyset C(\bu) = C(\bu)$. Finally, for any vector $\bG = (\cG_1,\ldots,\cG_d)$ of margins, and any $B\subset \setd $, set
\begin{equation}\label{eq:FB}
\left(\tilde \bG^{(B)}(\bx)\right)_j =
\left\{
\begin{array}{ll}
\cG_{j}({x_{j}}-)  & \text{, if } j \in B;\\
 \cG_{j}(x_{j})   & \text{, if } j \in B^\complement = \setd\setminus B.
 \end{array}\right.
 \end{equation}
Under Assumptions \ref{hyp:cop}--\ref{hyp:margins}, it follows from Equation (3) 
in the Supplementary Material, 
that the full (averaged)  log-likelihood is given by
\begin{equation*}
\ell_{n}^\star(\btheta)  %
=  \dfrac{1}{n}  \sum_{A\subset \{1,\ldots,d\}}   \sum_{i=1}^n J_A(\bX_i) \log{K_{A,\btheta}(\bX_i)} + \dfrac{1}{n}  \sum_{A\subset \{1,\ldots,d\}}\sum_{i=1}^n \sum_{j \in A^\complement}\log f_j(X_{ij}),
\end{equation*}
where, for any $A\subset \setd$,  $J_A(\bx)=  \left[\prod_{j\in A} \I\{x_j\in \cA_j \}\right] \left[\prod_{j\in A^\complement} \I\{x_j\not\in \cA_j \}\right]$ and
\begin{equation}\label{eq:GAtheta}
 K_{A,\btheta} (\bx)= \sum_{B\subset A} (-1)^{| B|} \partial_{A^\complement} C_\btheta\left\{\tilde\bF^{(B)}(\bx)\right\},
\end{equation}
with the usual convention that a product over the empty set is $1$.
If the margins were known or estimated first, then maximizing
$\ell_n^\star(\btheta)$ or maximizing
$\disp
\sum_{A\subset \{1,\ldots,d\}}   \sum_{i=1}^n J_A(\bX_i) \log{K_{A,\btheta}(\bX_i)}$ with respect to $\btheta$ would be equivalent, 
since by Assumption \ref{hyp:margins}, the margins do not depend on $\btheta$.
As a result,  one gets the ``informed'' (averaged)  pseudo log-likelihood $\ell_n$, defined by
\begin{equation}\label{eq:ll2d}
\ell_{n}(\btheta)  %
=  \dfrac{1}{n} \sum_{A\subset \{1,\ldots,d\}}  \sum_{i=1}^n J_A(\bX_i) \log{K_{n,A,\btheta}(\bX_i)},
\end{equation}
where for any $A\subset \setd$,
$\disp K_{n,A,\btheta}(\bx)= \sum_{B\subset A} (-1)^{| B|} \partial_{A^\complement} C_\btheta\left\{\tilde\bF_{n}^{(B)}(\bx)\right\}$,
 and
 $ \tilde\bF_{n}^{(B)}(\bx)$ is defined by \eqref{eq:FB}, with $\bG=\bF_n$.
 Setting $\btheta_n= \disp \arg\max_{\btheta\in\cP} \ell_n(\btheta)$, then we define $ \bTheta_n = \sn(\btheta_n-\btheta_0)$.
As \cite{Li/Li/Qin/Yan:2020} did in the bivariate case, we can also define the ``non-informed'' (averaged)  pseudo log-likelihood  $\tilde\ell_{n}(\btheta)$, where 
$J_{n,A}(\bx)=\left[\prod_{j\in A} \I\{\Delta F_{nj}(x_j) > 1/(n+1) \}\right] \left[\prod_{j\in A^\complement} \I\{\Delta F_{nj}(x_j) = 1/(n+1) \}\right]$  replaces $J_A$ in \eqref{eq:ll2d}. Of course, for a given $i$, if $X_{ij}$ is an atom, then, when $n$ is large enough, $\Delta F_{nj}(X_{ij})> 1/(n+1)$. However, for a given $n$, it is possible that $\Delta F_{nj}(X_{ij})= 1/(n+1)$ even when $X_{ij}$ is an atom.
If the real value of the parameter is $\btheta_0$ and $\tilde\btheta_n= \disp \arg\max_{\btheta \in \cP} \tilde\ell_n(\btheta)$,  we set $\tilde \bTheta_n = \sn\left(\tilde \btheta_n-\btheta_0\right)$.

\begin{example}\label{ex:bivLL}
If $d=2$, then
$
K_{n,\emptyset,\btheta}(x_1,x_2) = c_\btheta\{F_{n1}(x_1),F_{n2}(x_2)\}$, $K_{n,\{1\},\btheta}(x_1,x_2) = \partial_{u_2}C_\btheta\{F_{n1}(x_1),F_{n2}(x_2)\}-\partial_{u_2}C_\btheta\{F_{n1}(x_1-),F_{n2}(x_2)\}$, $K_{n,\{2\},\btheta}(x_1,x_2) = \partial_{u_1}C_\btheta\{F_{n1}(x_1),F_{n2}(x_2)\}-\partial_{u_1}C_\btheta\{F_{n1}(x_1),F_{n2}(x_2-)\}$, and 
$K_{n,\{1,2\},\btheta}(x_1,x_2) =
C_\btheta\{F_{n1}(x_1),F_{n2}(x_2)\}-C_\btheta\{F_{n1}(x_1-),F_{n2}(x_2)\} -
C_\btheta\{F_{n1}(x_1),F_{n2}(x_2-)\}+C_\btheta\{F_{n1}(x_1-),F_{n2}(x_2-)\}$. 
\end{example}

\begin{remark}\label{rem:Li1}
In \cite{Li/Li/Qin/Yan:2020}, instead of using $F_{nj}(X_{ij}-)$, the authors used $F_{nj}(X_{ij}-)+\dfrac{1}{n+1}$. The difference between our pseudo log-likelihood and theirs is negligible. However, our choice seems more natural and simplifies notations in the multivariate case, which was not considered in \cite{Li/Li/Qin/Yan:2020}.
\end{remark}
One can see that computing the pseudo log-likelihoods $\ell_n$ might be cumbersome when $d$ is large. To overcome this problem, we propose to use a pairwise composite pseudo log-likelihood. For a review on this approach in other settings, see, e.g., \cite{Varin/Reid/Firth:2011}. See also \cite{Oh/Patton:2016} for the composite method in a particular copula context.
Here, the pairwise composite pseudo log-likelihood is simply defined as
$\disp
\check \ell_n = \sum_{1\le k < l\le d} \ell_n^{(k,l)}$,
where $\ell_n^{(k,l)}$ is the pseudo log-likelihood defined by \eqref{eq:ll2d} for the pairs $(X_{ik},X_{il})$, $i\in\setn$. One can also replace $\ell_n$ by $\tilde\ell_n$ is the previous expression.
If $\check \btheta_n= \disp \arg\max_{\btheta \in \cP} \check\ell_n(\btheta)$, set $\check \bTheta_n = \sn(\check \btheta_n-\btheta_0)$.
The asymptotic behavior of the estimators $\btheta_n$, $\tilde\btheta_n$, and $\check\btheta_n$ is studied next.

\subsection{Notations and other assumptions}\label{ssec:results}

For sake of simplicity, the gradient column vector of a function $g$ with respect to $\btheta$ is often denoted $\dot g$ or $\nabla_\btheta g$, while the associated Hessian matrix is denoted by $\ddot g$ or $\nabla_\btheta^2 g$.
Before stating the main convergence results, for any $A\subset \setd$, $0\le a_j < b_j \le 1$, $j\in A$, $u_j\in (0,1)$, $j\in A^\complement$, define
\begin{equation}\label{eq:varphi-cop}
\bvarphi_{A,\btheta}\left(\ba,\bb,\bu\right) = \frac{\sum_{B\subset A}(-1)^{|B|}\partial_{A^\complement}\dot C_\btheta\left((\ba,\bb,\bu)^{(B,A)}\right)}{\sum_{B\subset A}(-1)^{|B|}\partial_{A^\complement} C_\btheta\left((\ba,\bb,\bu)^{(B,A)}\right)},
\end{equation}
where for $B\subset A$,
$
\left((\ba,\bb,\bu)^{(B,A)}\right)_j =
\left\{
\begin{array}{ll}
a_j ,& j\in  B, \\
b_j,   & j  \in  A\setminus B,\\
 u_j , & j\in A^\complement.
 \end{array}
 \right.$
In particular, $\bvarphi_\emptyset(\bu) = \frac{\dot c_\btheta(\bu)}{c_\btheta(\bu)}$, and if $A=\{1,\ldots, k\}$, $k\in\setd$, one has
\begin{multline*}
 \bvarphi_{A,\btheta}(\ba,\bb,\bu) = \bvarphi_{A,\btheta}(a_1,\ldots,a_k,b_1,\ldots b_k, u_{k+1},\ldots, u_{d}) \\
 = \frac{\int_{a_1}^{b_1} \cdots \int_{a_k}^{b_k} \dot c_\btheta (s_1,\ldots,s_k,u_{k+1},\ldots,u_d)ds_1 \cdots d s_k}{\int_{a_1}^{b_1} \cdots \int_{a_k}^{b_k}c_\btheta(s_1,\ldots,s_k,u_{k+1},\ldots,u_d)ds_1 \cdots d s_k}.
\end{multline*}
Next, set
$\bH_{A,\btheta}(\bx)=
J_A(\bx) \bvarphi_{A,\btheta}\left\{\bF_{A}(\bx-), \bF_{A}(\bx), \bF_{A^\complement}(\bx) \right\}$.
Finally, set
$\cK=\cK_{\btheta_0}$ and
$\bH_\btheta = \sum_{A\subset \setd} \bH_{A,\btheta}$.
\begin{ass}\label{hyp:varphi}
The functions $\bvarphi_{A,\btheta}$, $A\subset \setd$, satisfy Assumptions  \ref{hyp:varphiApp}--\ref{hyp:est}.
\end{ass}
Next, to be able to deal with $\tilde\ell_n$ and the associated composite pseudo log-likelihood, one needs extra assumptions. First, one needs to control the gradient of the likelihood.
\begin{ass}\label{hyp:max} There is a neighborhood $\cN$ of $\btheta_0$ such that
for any $A\subset \setd$,
$$
\sni\max_{\btheta\in\cN} \max_{1\le i\le n}
\left| \frac{
\sum_{B\subset A} (-1)^{|B|} \dot \partial_{A^\complement}C_{\btheta}\left\{\bF_n^{(B)}(\bX_i)\right\}}{\sum_{B\subset A} (-1)^{|B|} \partial_{A^\complement}C_{\btheta}\left\{\bF_n^{(B)}(\bX_i)\right\}}\right| \stackrel{Pr}{\to} 0.
$$
\end{ass}

Second, in order to measure the errors one makes  by considering $J_{n,A}$ instead of $J_A$, set
$$
\cE_{nj}= \sum_{i=1}^n \I\{\Delta F_{nj}(X_{ij}) = 1/(n+1)\}\I(X_{ij}\in \cA_j).
$$
Then,  $\disp E(\cE_{nj}) = V_n(F_j) = n\sum_{x\in \cA_j}\Delta F_j(x)\{1-\Delta F_j(x)\}^{n-1}$,  $j\in\setd$.
\begin{ass}\label{hyp:support}
For any $j\in\setd$, $\disp
\limsup_{n\to\infty}V_n(F_j) <\infty$.
\end{ass}
Assumption \ref{hyp:support} means that the average  number of indices $i$ so that $ \Delta F_{nj}(X_{ij})=1/(n+1)$ when $X_{ij} \in \cA_j$ is bounded.
This assumption holds true when the discrete part of the margin is a finite discrete distribution, a Geometric distribution, a Negative Binomial distribution, or a Poisson distribution, using Remark 1 
in the Supplementary Material.

\subsection{Convergence of estimators}\label{ssec:conv1}

Recall that  for any $j\in \setd$ and any $i\in \{1,\ldots,n\}$, $X_{ij} = F_{j}^{-1}(U_{ij})$, where
$\bU_1 = (U_{11}, \ldots, U_{1d})$, $\ldots, \bU_n = (U_{n1}, \ldots, U_{nd})$  are iid observations from copula $C_{\btheta_0}$. Also, let $P_{\btheta_0}$ be the associated probability distribution of $\bX_i$, for any $i\in \setn$.\\
Now, for any $j\in\setd$,  and any $y\in \dR$, $
F_{nj}(y) =  B_{nj}\circ F_j(y)$,
where
$\disp
B_{nj}(v) = \dfrac{1}{n+1}\sum_{i=1}^n \I(U_{ij}\le v)$, $v\in [0,1]$.
Note that if for some $i\in \setn$, $j\in\setd$,  $X_{ij}\not\in \cA_j$, i.e., $\Delta F_j(X_{ij})=0$, then $F_{nj}(X_{ij}) = B_{nj}(U_{ij})$.
It is well-known that the processes $\dB_{nj}(u_j) = \sn\{B_{nj}(u_j)-u_j\}$, $u_j\in [0,1]$, $j\in\setd$, converge jointly  in $C([0,1])$ to  $\dB_j$, denoted $\dB_{nj} \rightsquigarrow \dB_j$, where $\dB_j$ are Brownian bridges, i.e., $\dB_1, \ldots, \dB_d$ are continuous centered Gaussian processes with
$
{\rm Cov}\{\dB_j(s),\dB_k(t)\} 
= P_{\btheta_0}(U_{1j}\le s, U_{1k}\le t)-st$, $s,t\in [0,1]$.
In particular, for any $j\in\setd$, ${\rm Cov}\{\dB_j(s),\dB_j(t)\} = \min(s,t)-st$.
Before stating the main convergence results for the estimation errors $\bTheta_n = n^{1/2}(\btheta_n-\btheta_0) $ and
 $\tilde\bTheta_n = n^{1/2}\left(\tilde \btheta_n-\btheta_0\right) $, for any $A\subset \setd$, define $\bzeta_{A,i} = H_{A,\btheta_0}(X_i)$ and set $\disp \bzeta_i = \sum_{A\in\setd}\bzeta_{A,i}$.
Next,  set $\disp \cW_{n,1} = \sni\sum_{i=1}^n \bzeta_i$ and let
\begin{eqnarray*}
\cW_{n,2}
  &=& -  \sum_{j=1}^d \sum_{A\not\ni j} \int c_{\btheta_0}(\bu) \bbeta_{A,j,2}(\bu)\dB_{nj}(u_{j})d\bu\\
&& \quad - \sum_{j=1}^d \sum_{x_j\in \cA_j} \dB_{nj}\{F_j(x_j)\}\sum_{A\ni j}   \int c_{\btheta_0}(\bu)   \bbeta_{A,j,2+}(x_j,\bu)d\bu \\
&& \qquad
 + \sum_{j=1}^d  \sum_{x_j\in \cA_j} \dB_{nj}\{F_j(x_j-)\} \sum_{A\ni j} \int c_{\btheta_0}(\bu)   \bbeta_{A,j,2-}(x_j, \bu)d\bu,
\end{eqnarray*}
where the functions $\bbeta_{A,j}, \bbeta_{A,j,\pm}, \bbeta_{A,j,1,\pm}, \bbeta_{A,j,2\pm}$ are defined in Appendix \ref{app:eta}.
Finally, set $\cW_{n,0} = n^{-1/2}\sum_{i=1}^n \dfrac{\dot c_{\btheta_0}(\bU_i)}{c_{\btheta_0}(\bU_i)}$.
Basically, $\cW_{n,1}$ is what one should have if the margins were known, while $\cW_{n,2}$ is the price to pay for not knowing the margins. Finally, $\cW_{n,0}$ is needed for obtaining bootstrapping results \citep{Genest/Remillard:2008, Nasri/Remillard:2019}.
The following result is  a consequence of  Theorem \ref{thm:gen_est} and  Theorem \ref{thm:zeta}, proven in Appendix \ref{app:Zest} and Appendix \ref{app:zeta} respectively.



\begin{thm}\label{thm:main}
Under Assumptions \ref{hyp:cop}--\ref{hyp:varphi}, $\bTheta_n $ converges in law to $\bTheta = \cJ_1^{-1}(\cW_1+\cW_2)$,
where $\cW_1 \sim N(0,\cJ_1)$, $E\left(\cW_1 \cW_0^\top\right) = \cJ_1$, and  $E\left(\cW_2 \cW_0^\top\right) = 0$.
If in addition Assumptions \ref{hyp:max}--\ref{hyp:support} hold true,
 then $\tilde\bTheta_n-\bTheta_n$ converges in probability to $0$, so $\tilde\bTheta_n$ converges in law to $\bTheta$.
\end{thm}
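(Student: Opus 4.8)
The plan is to deduce Theorem \ref{thm:main} from the general M-estimation result Theorem \ref{thm:gen_est} together with the joint asymptotic analysis of the pseudo-score in Theorem \ref{thm:zeta}, the latter being where the arbitrary-margin extension of rank statistics (Theorem \ref{thm:rum_ext}) enters. First I would record two facts: by Assumption \ref{hyp:cop}, $\btheta_n$ is an interior critical point, $\nabla_\btheta\ell_n(\btheta_n)=0$; and $\bzeta_i=\sum_{A\subset\{1,\ldots,d\}}\bzeta_{A,i}$ is precisely $\nabla_\btheta\log k_{\btheta_0}(\bX_i)$, where $k_\btheta$ is the $\mu$-density of $\cK_\btheta$. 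Consequently $\ell_n$ converges to a constant plus $-\,\mathrm{KL}(\cK_{\btheta_0}\,\|\,\cK_\btheta)$, which by the identifiability in Assumption \ref{hyp:cop} is uniquely maximized at $\btheta_0$, so $\btheta_n$ is consistent (under the regularity of Assumptions \ref{hyp:cop}--\ref{hyp:varphi}, or as part of the hypotheses of Theorem \ref{thm:gen_est}). A one-term Taylor expansion then gives $\bTheta_n=-\{\nabla_\btheta^2\ell_n(\btheta_n^\ast)\}^{-1}n^{1/2}\nabla_\btheta\ell_n(\btheta_0)$ for some intermediate $\btheta_n^\ast$; by the law of large numbers and Assumption \ref{hyp:cop} one has $\nabla_\btheta^2\ell_n(\btheta_n^\ast)\to-\cJ_1$ in probability, with $\cJ_1=\operatorname{Var}(\bzeta_1)$ nonsingular by Assumption \ref{hyp:varphi}, the information identity being responsible for the same $\cJ_1$ arising both as the score covariance and as the Hessian limit.

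The core input is the behaviour of the normalized pseudo-score $n^{1/2}\nabla_\btheta\ell_n(\btheta_0)$: because $\ell_n$ is built from the empirical margins $\bF_n$, this is a rank-type statistic, and Theorem \ref{thm:zeta} supplies both the asymptotic linear representation $n^{1/2}\nabla_\btheta\ell_n(\btheta_0)=\cW_{n,1}+\cW_{n,2}+o_P(1)$ and the joint weak limit $(\cW_{n,0},\cW_{n,1},\cW_{n,2})\rightsquigarrow(\cW_0,\cW_1,\cW_2)$ Gaussian, with $\cW_1\sim N(0,\cJ_1)$, $E(\cW_1\cW_0^\top)=\cJ_1$ and $E(\cW_2\cW_0^\top)=0$; the last two identities come from the conditional-expectation representation $\bzeta_1=E\{\dot c_{\btheta_0}(\bU_1)/c_{\btheta_0}(\bU_1)\mid\bX_1\}$, so that the cross term with the ``fine'' score $\cW_0$ projects onto $\cJ_1$, and from the asymptotic orthogonality of the marginal empirical processes driving $\cW_{n,2}$ to the copula score. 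Combining the Taylor expansion with Slutsky's lemma and the continuous mapping theorem yields $\bTheta_n\rightsquigarrow\cJ_1^{-1}(\cW_1+\cW_2)=\bTheta$; Theorem \ref{thm:gen_est} is what makes the expansion rigorous, i.e.\ provides tightness of $\bTheta_n$ and controls the remainder via stochastic equicontinuity of $\nabla_\btheta^2\ell_n$ near $\btheta_0$.

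For the non-informed estimator I would compare $\tilde\ell_n$ with $\ell_n$: they differ only through $J_{n,A}$ versus $J_A$, and for a fixed $j$ these disagree at observation $i$ only when $X_{ij}\in\cA_j$ while $\Delta F_{nj}(X_{ij})=1/(n+1)$, since a non-atom ties with another observation with probability $0$; the count of such $i$ is $\cE_{nj}$, with $E(\cE_{nj})=V_n(F_j)$ bounded by Assumption \ref{hyp:support}, hence $\sum_j\cE_{nj}=O_P(1)$. On those finitely many indices the per-observation contribution to $\nabla_\btheta\ell_n$ is $o_P(n^{1/2})$ uniformly over a neighbourhood $\cN$ of $\btheta_0$ by Assumption \ref{hyp:max}, so $n^{1/2}\{\nabla_\btheta\tilde\ell_n-\nabla_\btheta\ell_n\}(\btheta_0)=n^{1/2}\cdot n^{-1}\cdot O_P(1)\cdot o_P(n^{1/2})=o_P(1)$; the same bound used more crudely gives $\sup_{\cN}\|\nabla_\btheta^2\tilde\ell_n-\nabla_\btheta^2\ell_n\|\to0$ and $\sup_{\cN}|\tilde\ell_n-\ell_n|\to0$ in probability, the latter yielding consistency of $\tilde\btheta_n$. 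Re-running the Taylor argument (equivalently, re-applying Theorem \ref{thm:gen_est} to $\tilde\ell_n$) then gives $\tilde\bTheta_n=-\cJ_1^{-1}n^{1/2}\nabla_\btheta\tilde\ell_n(\btheta_0)+o_P(1)=-\cJ_1^{-1}n^{1/2}\nabla_\btheta\ell_n(\btheta_0)+o_P(1)=\bTheta_n+o_P(1)$, so $\tilde\bTheta_n$ has the same limit $\bTheta$. The main obstacle is the score representation of the second paragraph — the content of Theorem \ref{thm:zeta} and the extension Theorem \ref{thm:rum_ext} — where the atoms genuinely change the analysis, forcing $\cW_{n,2}$ to carry both an integral term against the bridges $\dB_{nj}$ and jump terms $\dB_{nj}\{F_j(x_j)\}$, $\dB_{nj}\{F_j(x_j-)\}$ at each atom; within the present reduction, the delicate step is making the $o_P$ rate in Assumption \ref{hyp:max} uniform over $\cN$ simultaneously for the gradient and the Hessian and handling the boundary behaviour of $\Delta F_{nj}$ at the atoms, which is exactly what Assumptions \ref{hyp:max}--\ref{hyp:support} are designed to control.
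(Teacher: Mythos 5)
Your overall reduction is the same as the paper's: Theorem \ref{thm:main} is obtained by specializing the rank-based Z-estimator result (Theorem \ref{thm:gen_est}) to the pseudo-score $\bT_n(\btheta)=\nabla_\btheta\ell_n(\btheta)$ and reading off the limiting quantities from Theorem \ref{thm:zeta} ($\dot\bmu(\btheta_0)=-\cJ_1$, $\dT(\btheta_0)=\cW_1+\cW_2$, and the covariance identities, for which your conditional-expectation reading $\bzeta_1=E\{\dot c_{\btheta_0}(\bU_1)/c_{\btheta_0}(\bU_1)\mid\bX_1\}$ is a clean equivalent of the paper's direct computation); likewise your treatment of $\tilde\btheta_n$ --- counting the mismatches between $J_{n,A}$ and $J_A$ by $\cE_{nj}=O_P(1)$ via Assumption \ref{hyp:support} and bounding each mismatched score term via Assumption \ref{hyp:max}, so that $n^{1/2}\{\nabla_\btheta\tilde\ell_n-\nabla_\btheta\ell_n\}=n^{1/2}\cdot n^{-1}\cdot O_P(1)\cdot o_P(n^{1/2})=o_P(1)$ uniformly near $\btheta_0$ --- is exactly what those two assumptions are designed for. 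Where you diverge is the middle layer: you frame the asymptotics as a classical Taylor expansion of the empirical score with an empirical Hessian $\nabla^2_\btheta\ell_n(\btheta_n^\ast)\to-\cJ_1$ by a law of large numbers, you attribute to Theorem \ref{thm:gen_est} ``stochastic equicontinuity of $\nabla^2_\btheta\ell_n$,'' and for the non-informed estimator you additionally claim $\sup_{\cN}\|\nabla^2_\btheta\tilde\ell_n-\nabla^2_\btheta\ell_n\|\to0$. That is not how Theorem \ref{thm:gen_est} works, and it is not licensed by the stated hypotheses: Assumption \ref{hyp:varphiApp} controls $\bvarphi_{A,\btheta}$ and its first derivatives in $(\ba,\bb,\bu)$ only, and Assumption \ref{hyp:max} bounds only the gradient ratio, so a Hessian-based argument would need extra, unstated integrability/smoothness conditions in $\btheta$. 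The paper's general theorem deliberately avoids the Hessian of $\ell_n$: consistency comes from a subsequence argument combining the uniform approximation $\sup_{\btheta\in\cN}|\dT_n(\btheta)-\dT_{1,n}(\btheta)-\dT_{2,n}(\btheta)|\stackrel{Pr}{\to}0$ of Theorem \ref{thm:rum_ext} with the local uniqueness of the zero of $\bmu$ (Assumption \ref{hyp:est} and Proposition \ref{prop:rolle}), and the rate and limit come from the mean-value expansion of the deterministic function $\bmu$, namely $n^{1/2}\bmu(\btheta_n)=\dot\bmu(\tilde\btheta_n)\bTheta_n$ together with $-n^{1/2}\bmu(\btheta_n)=\dT_n(\btheta_n)\rightsquigarrow\dT(\btheta_0)$. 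So either drop the Hessian narrative and invoke Theorem \ref{thm:gen_est} as stated (for $\tilde\btheta_n$ it suffices that it solves a perturbed score equation whose process differs from $\dT_n$ by $o_P(1)$ uniformly on $\cN$, and then the same consistency-plus-mean-value argument applies verbatim), or recognize that your version needs assumptions the paper does not impose; similarly, your global KL-consistency step should be localized, since Assumption \ref{hyp:varphiApp} gives uniform control only on a neighborhood of $\btheta_0$.
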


\begin{corollary}\label{cor:reg}
$\btheta_n$ and $\tilde \btheta_n$ are regular estimator of $\btheta_0$ in the sense that $\bTheta_n$ and $\tilde \bTheta_n$ converge to $\bTheta$ and $E\left(\bTheta \cW_0^\top \right) = I_d$.
\end{corollary}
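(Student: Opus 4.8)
The plan is to deduce the corollary directly from Theorem \ref{thm:main} together with a one-line covariance computation; no new estimates are required. The convergence half is immediate: Theorem \ref{thm:main} gives $\bTheta_n\rightsquigarrow\bTheta$ under Assumptions \ref{hyp:cop}--\ref{hyp:varphi}, and, adding Assumptions \ref{hyp:max}--\ref{hyp:support}, that $\tilde\bTheta_n-\bTheta_n\pr 0$, whence $\tilde\bTheta_n\rightsquigarrow\bTheta$ by Slutsky's lemma. So for this part I would simply invoke Theorem \ref{thm:main}.

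For the identity $E(\bTheta\cW_0^\top)=I_d$ I would use the explicit representation $\bTheta=\cJ_1^{-1}(\cW_1+\cW_2)$ supplied by Theorem \ref{thm:main}, together with the three facts recorded there: $\cW_1\sim N(0,\cJ_1)$, $E(\cW_1\cW_0^\top)=\cJ_1$, and $E(\cW_2\cW_0^\top)=0$. Since the proof of Theorem \ref{thm:main} in fact establishes the joint weak convergence of $(\bTheta_n,\cW_{n,0})$ --- via the joint convergence of $(\cW_{n,1},\cW_{n,2},\cW_{n,0})$ to the Gaussian vector $(\cW_1,\cW_2,\cW_0)$ and the expansion $\bTheta_n=\cJ_1^{-1}(\cW_{n,1}+\cW_{n,2})+o_P(1)$ --- the quantity $E(\bTheta\cW_0^\top)$ is the cross-covariance block of that limiting normal vector, and by linearity of the expectation
\[
E\bigl(\bTheta\cW_0^\top\bigr)=\cJ_1^{-1}\,E\bigl\{(\cW_1+\cW_2)\cW_0^\top\bigr\}=\cJ_1^{-1}(\cJ_1+0)=I_d,
\]
where invertibility of $\cJ_1$ is already implicit in the representation $\bTheta=\cJ_1^{-1}(\cW_1+\cW_2)$.

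Finally I would add a short comment tying this to the standard terminology: $\cW_{n,0}=\sni\sum_{i=1}^n\dot c_{\btheta_0}(\bU_i)/c_{\btheta_0}(\bU_i)$ is the score of the parametric copula experiment in which the $\bU_i\sim C_{\btheta_0}$ are observed directly, so $E(\bTheta\cW_0^\top)=I_d$ is exactly the condition under which Le Cam's third lemma gives, along any local sequence $\btheta_0+\bh/\sn$, that $\sn(\hat\btheta_n-\btheta_0)-\bh$ has the same weak limit $\bTheta$; equivalently, the asymptotic law of the estimation error is insensitive to the local parameter, which is the defining property of a regular estimator (and the hypothesis behind the convolution and local asymptotic minimax theorems). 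I do not foresee a real obstacle here: the only subtlety --- joint convergence of $(\bTheta_n,\cW_{n,0})$, which is what legitimizes reading off the cross-moment of the limit from the limiting covariance structure rather than through a separate uniform-integrability argument --- is already handled inside the proof of Theorem \ref{thm:main}.
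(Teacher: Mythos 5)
Your argument is correct and coincides with the paper's own proof, which likewise invokes Theorem \ref{thm:main} and computes $E\bigl(\bTheta\cW_0^\top\bigr)=\cJ_1^{-1}E\bigl\{(\cW_1+\cW_2)\cW_0^\top\bigr\}=I_d$ from $E(\cW_1\cW_0^\top)=\cJ_1$ and $E(\cW_2\cW_0^\top)=0$. Your closing remark on Le Cam's third lemma simply echoes the remark the paper places right after the corollary, so nothing further is needed.
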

\begin{proof}
Theorem  \ref{thm:main} yields
$ E\left( \bTheta \cW_0^\top \right)  = \cJ_1^{-1}E \left\{ (\cW_1+\cW_2) \cW_0^\top \right\} = I_d$.
\end{proof}
 \begin{remark}
 As shown in \cite{Genest/Remillard:2008}, regularity of estimators is necessary for parametric bootstrap to work,  using LeCam's third lemma.
 \end{remark}

Having got the asymptotic behavior of $\btheta_n$ and  $\tilde\btheta_n$, it is easy to obtain the following result.

\begin{corollary}\label{cor:LLcomp}
Under Assumptions \ref{hyp:cop}--\ref{hyp:varphi},
$\check \bTheta_n = n^{1/2}(\check\btheta_n-\btheta_0) $ converges in law to
$$
\check\bTheta = \left(\sum_{1\le k<l\le d} \cJ^{(k,l)}\right)^{-1} \sum_{1\le k<l\le d} \cW^{(k,l)},
$$
where $\cW^{(k,l)} = \cW_1^{(k,l)}+ \cW_2^{(k,l)}$ are defined as $\cW_1$ and $\cW_2$ but restricted to the pairs $(X_{ik},X_{il})$, $i\in\setn$. Moreover, $\check\btheta_n$ is regular. The same result holds if $\ell_n$ is replaced by $\tilde\ell_n$, and if in addition,  Assumptions \ref{hyp:max}--\ref{hyp:support} hold true.
\end{corollary}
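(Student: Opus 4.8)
The plan is to run the classical $Z$-estimation argument for $\check\btheta_n$, reusing for each pair $(k,l)$ the expansions already obtained in the proofs of Theorem~\ref{thm:gen_est} and Theorem~\ref{thm:zeta} behind Theorem~\ref{thm:main}. Since $\check\ell_n=\sum_{1\le k<l\le d}\ell_n^{(k,l)}$ is a finite sum of bivariate pseudo log-likelihoods of the form~\eqref{eq:ll2d}, I would first note that $\check\btheta_n\pr\btheta_0$: each $\ell_n^{(k,l)}$ converges uniformly on $\cP$ to a limit $\ell_\infty^{(k,l)}$ maximised at $\btheta_0$, so $\sum_{k<l}\ell_\infty^{(k,l)}$ is maximised at $\btheta_0$ as well, and positive definiteness of $\sum_{k<l}\cJ^{(k,l)}$ (implicit in the statement of the corollary) makes $\btheta_0$ a well-separated maximiser near which $\nabla_\btheta\check\ell_n(\check\btheta_n)=0$ can be solved. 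Taylor-expanding the composite score about $\btheta_0$ gives $0=\sn\,\nabla_\btheta\check\ell_n(\btheta_0)+\bigl(\nabla^2_\btheta\check\ell_n(\btheta_n^\ast)\bigr)\check\bTheta_n$ for some $\btheta_n^\ast$ between $\check\btheta_n$ and $\btheta_0$; since the Hessian is $\sum_{k<l}\nabla^2_\btheta\ell_n^{(k,l)}(\btheta_n^\ast)$, the uniform law of large numbers and continuity used in the single-pair case give $-\nabla^2_\btheta\ell_n^{(k,l)}(\btheta_n^\ast)\pr\cJ^{(k,l)}$, so the ``bread'' converges to the invertible matrix $\sum_{k<l}\cJ^{(k,l)}$.

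The analytic core is the joint weak limit of the composite score. For each pair the proof of Theorem~\ref{thm:zeta} yields the decomposition $\sn\,\nabla_\btheta\ell_n^{(k,l)}(\btheta_0)=\cW_{n,1}^{(k,l)}+\cW_{n,2}^{(k,l)}+o_P(1)$, where $\cW_{n,1}^{(k,l)}=\sni\sum_{i=1}^n\bzeta_i^{(k,l)}$ is an empirical mean of i.i.d.\ centred vectors depending only on $(X_{ik},X_{il})$ and $\cW_{n,2}^{(k,l)}$ is a continuous linear functional of the marginal empirical processes $\dB_{nk},\dB_{nl}$. Summing over the $\binom{d}{2}$ pairs, $\sn\,\nabla_\btheta\check\ell_n(\btheta_0)=\sni\sum_{i=1}^n\bigl(\sum_{k<l}\bzeta_i^{(k,l)}\bigr)+\sum_{k<l}\cW_{n,2}^{(k,l)}+o_P(1)$; the first term obeys the multivariate CLT, and jointly with it $(\dB_{n1},\dots,\dB_{nd})\rightsquigarrow(\dB_1,\dots,\dB_d)$ exactly as in Theorem~\ref{thm:zeta}, so the continuous mapping theorem gives $\sn\,\nabla_\btheta\check\ell_n(\btheta_0)\rightsquigarrow\sum_{k<l}\bigl(\cW_1^{(k,l)}+\cW_2^{(k,l)}\bigr)$. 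Combined with the Hessian limit and Slutsky's lemma this yields $\check\bTheta_n\rightsquigarrow\bigl(\sum_{k<l}\cJ^{(k,l)}\bigr)^{-1}\sum_{k<l}\cW^{(k,l)}$. I expect the main obstacle to be precisely this step: arranging that the $\binom{d}{2}$ per-pair scores converge \emph{jointly} (they are built from the same sample, hence the same limiting Brownian bridges) rather than merely marginally, so that the sum converges to the sum of the limits; once the per-pair decomposition above is in place this is routine, and the rest is a transcription of the single-pair proof.

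For regularity I would use $E(\check\bTheta\,\cW_0^\top)=\bigl(\sum_{k<l}\cJ^{(k,l)}\bigr)^{-1}\sum_{k<l}E\bigl(\cW^{(k,l)}\cW_0^\top\bigr)$ and show each summand equals $\cJ^{(k,l)}$. Writing $\cW_0^{(k,l)}$ for the oracle score of the $(k,l)$ bivariate margin, the bivariate copula score at $(U_{ik},U_{il})$ is the conditional expectation of $\dot c_{\btheta_0}/c_{\btheta_0}$ given $(U_{ik},U_{il})$; since the summands of $\cW_{n,1}^{(k,l)}$ are functions of $(U_{ik},U_{il})$ only, $E(\cW_1^{(k,l)}\cW_0^\top)=E\bigl(\cW_1^{(k,l)}(\cW_0^{(k,l)})^\top\bigr)=\cJ^{(k,l)}$ by Theorem~\ref{thm:main} applied to that pair, while $E(\cW_2^{(k,l)}\cW_0^\top)=0$ by the same orthogonality argument that gives $E(\cW_2\cW_0^\top)=0$ in Theorem~\ref{thm:main} (each one-dimensional marginal copula density being that of a uniform). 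Hence $E(\check\bTheta\cW_0^\top)=I_d$. Finally, for the estimator built from $\tilde\ell_n$, Assumptions~\ref{hyp:max}--\ref{hyp:support} together with the per-pair argument of Theorem~\ref{thm:main} give $\sn\,\nabla_\btheta(\tilde\ell_n^{(k,l)}-\ell_n^{(k,l)})(\btheta_0)\pr0$ and the same Hessian limits, so the two composite estimators differ by $o_P(1)$ and share the limit $\check\bTheta$.
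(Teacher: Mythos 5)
Your proposal is correct in substance and follows what the paper intends: the paper gives no separate proof of Corollary~\ref{cor:LLcomp}, treating it as immediate from applying the machinery of Theorems~\ref{thm:rum_ext}, \ref{thm:gen_est} and \ref{thm:zeta} to each pair and summing, and your plan reproduces exactly that — per-pair decomposition of the score into $\cW_{n,1}^{(k,l)}+\cW_{n,2}^{(k,l)}$, joint convergence because all $\binom{d}{2}$ terms are built from the same sample and the same limiting bridges $(\dB_1,\ldots,\dB_d)$, and regularity via $E\bigl(\cW_1^{(k,l)}\cW_0^\top\bigr)=\cJ^{(k,l)}$ (your observation that the bivariate score is the conditional expectation of $\dot c_{\btheta_0}/c_{\btheta_0}$ given $(U_{ik},U_{il})$ is precisely the right justification) and $E\bigl(\cW_2^{(k,l)}\cW_0^\top\bigr)=0$. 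The one place you deviate from the paper's framework is the ``bread'' step: you Taylor-expand the empirical composite score and invoke a uniform law of large numbers for the empirical Hessian, $-\nabla^2_\btheta\ell_n^{(k,l)}(\btheta_n^\ast)\pr\cJ^{(k,l)}$, but no such Hessian ULLN is established in the paper, and Assumption~\ref{hyp:varphiApp} only controls $\bvarphi$ and its first partial derivatives, not the second derivatives of the pseudo log-likelihood at pseudo-observations. The cleaner and intended route is to apply Theorem~\ref{thm:gen_est} directly to the summed score $\check\bT_n=\sum_{k<l}\bT_n^{(k,l)}$ with mean function $\check\bmu=\sum_{k<l}\bmu^{(k,l)}$: Theorem~\ref{thm:zeta} gives $\dot\bmu^{(k,l)}(\btheta_0)=-\cJ^{(k,l)}$, so $\dot{\check\bmu}(\btheta_0)=-\sum_{k<l}\cJ^{(k,l)}$ is negative definite (the invertibility implicit in the statement), and the mean-value argument on $\check\bmu$ together with the joint process convergence $\sup_{\btheta\in\cN}$ from Theorem~\ref{thm:rum_ext} delivers both consistency and the limit $\check\bTheta=\bigl(\sum_{k<l}\cJ^{(k,l)}\bigr)^{-1}\sum_{k<l}\cW^{(k,l)}$ without any second-derivative domination; this also replaces your M-estimation consistency step, which as stated only gives local well-separation. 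With that substitution your argument, including the $\tilde\ell_n$ variant via $\sn\,\nabla_\btheta\bigl(\tilde\ell_n^{(k,l)}-\ell_n^{(k,l)}\bigr)(\btheta_0)\pr 0$ under Assumptions~\ref{hyp:max}--\ref{hyp:support}, matches the paper's proof.
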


\begin{remark}\label{rem:IFM}
The previous results hold if one uses parametric margins instead of nonparametric margins, provided the margins are smooth enough and the estimated parameters converge in law. In fact, assume that for $j\in\setd$, $F_{nj} = F_{j,\bgamma_{nj}}$, $F_{j} = F_{j,\bgamma_{0j}}$,  and
$\dF_{nj}=\sn \left\{F_{nj} - F_{j}\right\}$ converges in law to $\dF_j = \bGamma_j^\top \dot F_j$, with
$\dot F_j = \left. \nabla_{\bgamma_j}F_{j,\bgamma_j}\right|_{\bgamma_j=\bgamma_{0j}}$, and $\bGamma_j$ is a centered Gaussian random vector.
Then, under Assumptions \ref{hyp:cop}--\ref{hyp:varphi}, replacing respectively $\dB_{j}(u_j)$, $\dB_{j}\{F_j(x_j)\}$, and $\dB_{j}\{F_j(x_j-)\}$ with
$\dF_{j}\left\{F_j^{-1}(u_j)\right\}$,
$\dF_{j}(x_j)$, and $\dF_{j}(x_j-)$, one obtains the analogs of Theorem \ref{thm:main} and Corollaries \ref{cor:reg}--\ref{cor:LLcomp}. Furthermore, this shows that one could also consider pair copula models. 
\end{remark}

\section{Numerical experiments}\label{sec:exp}

In this section, we study the quality  and performance of the proposed estimators,  
for various choices of copula families, margins and sample sizes.
Note that all simulations were done with the more demanding case $\tilde\btheta_n$. 
In the first set of experiments, we deal with the bivariate case,  and in a second set of experiments, we compute the composite estimator in a trivariate setting.
First, in the bivariate case, we consider five copula families and  five pairs of margins for each copula family.  For the first experiment (Exp1), both margins are standard Gaussian, i.e.,
 $F_1,F_2\sim N(0,1)$. In the second experiment (Exp2), the margins are Poisson with parameters $5$ and $10$ respectively, i.e., $F_1\sim \cP(5)$ and $F_2\sim \cP(10)$, while in the third experiment (Exp3), $F_1 \sim \cP(10)$ and $F_2\sim N(0,1)$.
In the fourth experiment (Exp4), $F_1$ is a rounded Gaussian, namely $X_1 = \lfloor 1000 Z_1\rfloor$,  with $Z_1\sim N(0,1)$, and $F_2\sim N(0,1)$.
Finally, for the fifth experiment (Exp5), $F_1$ is zero-inflated, with $F_1(0-)=0$, $F_1(x) = 0.05+0.95(2F_2(x)-1)$, $x\ge 0$, and $F_2\sim N(0,1)$.
To estimate the parameter $\theta$ of the Clayton, Frank, Gumbel, Gaussian and Student (with $\nu=5$) copula families corresponding to a Kendall's tau $\tau_0 = 0.5$, samples of size $n\in\{100,250,500\}$ were generated for each copula family, and $\btheta_n$ was computed. Here, $\tau$ is Kendall's tau of the copula family $C_\theta$, written $\tau(C_\theta)$. For results to be comparable throughout copula families, we computed the relative
  bias and the relative root mean square error (RMSE) of $\tau(C_{\theta_n})$, instead of $\theta_n$. The results for $1000$  samples are reported in Table \ref{tab:est2d}.
As one can see, the estimator performs quite well for the five numerical experiments and the five copula families. Furthermore, the precision depends on the copula  family, but for a given copula family, the precision does not vary significantly with the margins.
Even the case when both margins are continuous (Exp1) does not yield the best results. When the sample size is 250 or more, the relative bias is always smaller than 2\%. Finally, as expected, both the bias and the RMSE decrease when the sample size increases.
\begin{table}
\caption{\label{tab:est2d} Relative bias and relative RMSE (in parentheses) in percent for  $\tau(C_{\theta_n})$ vs $\tau_0$ when $n\in \{100,250,500\}$, based  on 1000 samples. For the Student copula,  $\nu=5$ is assumed known.}
\centering
{
\begin{tabular}{crrrrr}
\hline
& \multicolumn{5}{c}{Copula}\\
  \cline{2-6}
Margins     & \multicolumn{1}{c}{Clayton}  & \multicolumn{1}{c}{Frank}  & \multicolumn{1}{c}{Gumbel} & \multicolumn{1}{c}{Gaussian}  & \multicolumn{1}{c}{Student} \\[6pt]
 & \multicolumn{5}{c}{$n=100$} \\
\hline
 Exp1 &  -0.42  (9.62) &  0.40   (9.30) &   3.02    (10.9) &  2.92    (9.40) &  2.18   (11.1)           \\
 Exp2 &   0.52  (10.2) &  0.86   (9.64) &   3.70    (11.4) &  3.16    (9.80) &  2.64   (11.5)          \\
 Exp3 &   0.02  (9.84) &  0.58   (9.40) &   3.34    (11.1) &  2.96    (9.52) &  2.36   (11.3)          \\
 Exp4 &  -0.44  (9.62) &  0.40   (9.30) &   3.02    (10.9) &  2.88    (9.38) &  2.14   (11.1)          \\
 Exp5 &  -1.00  (9.90) &  0.36   (9.30) &   2.76    (10.8) &  2.42    (9.28) &  1.68   (11.0)          \\[6pt]
 & \multicolumn{5}{c}{$n=250$} \\
\hline
 Exp1 &   0.18    (6.06)&  -0.32    (5.76) &  0.56    (6.12) & 1.30  (5.92) &  0.92    (6.52)  \\
 Exp2 &   0.50    (6.44)&  -0.00    (5.92) &  1.52    (6.36) & 1.72  (6.18) &  1.52    (6.82) \\
 Exp3 &   0.40    (6.22)&  -0.20    (5.84) &  0.96    (6.16) & 1.42  (6.02) &  1.16    (6.60)  \\
 Exp4 &   0.12    (6.08)&  -0.32    (5.76) &  0.60    (6.12) & 1.30  (5.92) &  0.92    (6.52)  \\
 Exp5 &  -0.10    (6.22)&  -0.34    (5.76) &  0.42    (6.12) & 1.06  (5.90) &  0.62    (6.56)  \\[6pt]
 & \multicolumn{5}{c}{$n=500$} \\
\hline
 Exp1 & 0.08    (4.44) &  -0.04    (3.88) &   0.38      (4.44)&  0.64     (4.40)  &   0.34    (4.48) \\
 Exp2 & 0.36    (4.60) &   0.06    (4.02) &   0.72      (4.64)&  0.76     (4.30)  &   0.54    (4.68) \\
 Exp3 & 0.28    (4.50) &  -0.00    (3.90) &   0.52      (4.52)&  0.70     (4.24)  &   0.42    (4.56) \\
 Exp4 & 0.04    (4.44) &  -0.04    (3.88) &   0.40      (4.44)&  0.62     (4.19)  &   0.32    (4.48) \\
 Exp5 & 0.10    (4.48) &  -0.04    (3.88) &   0.34      (4.44)&  0.54     (4.20)  &   0.24    (4.48) \\
  \hline
\end{tabular}
}
\end{table}
In the second set of experiments, using the pairwise composite estimator $\check\btheta_n$, we estimated the parameters of a trivariate non-central squared Clayton copula \citep{Nasri:2020} with Kendall's tau $\tau_0=0.5$, and non-centrality parameters $a_{10} = 0.9$, $a_{20} = 2.3$, and $a_{30}= 1.4$. In this case, for all five experiments, $F_1$ and $F_2$ are defined as before, while $F_3\sim N(0,1)$. The results are displayed in Table \ref{tab:est3d}. As one could have guessed, the estimation of $\tau$ is not as good as in the bivariate case,  but the results are good enough. As for the non-centrality parameters, the estimation of $a_2$, which has a large value (the upper bound being $3$), is not as good as the other values, but this is coherent with the simulations in \cite{Nasri:2020}. All in all, the composite method yields quite satisfactory results.

\begin{table}[ht!]
\caption{Relative bias and RMSE (in parentheses) in percent for the estimation errors of $\tau_0$  and $a_j$ for $j=1,2,3$ in the case of the trivariate non-central squared Clayton copula when $n\in \{100, 250, 500\}$, based  on 1000 samples.}\label{tab:est3d}
\begin{tabular}{crrrr}
\hline
& \multicolumn{4}{c}{Parameters}\\
\cline{2-5}
Margins   & \multicolumn{1}{c}{$\tau $}  & \multicolumn{1}{c}{$a_1$}  & \multicolumn{1}{c}{$a_2$}  & \multicolumn{1}{c}{$a_3$}   \\[6pt]
 & \multicolumn{4}{c}{$n=100$} \\
\hline
 Exp1 &     4.01  (11.21) &   -4.92   (24.95) &    -30.67   (41.09) &    -5.08    (21.56)    \\
 Exp2 &     4.24  (11.68) &   -4.70   (25.27) &    -28.71   (40.58) &    -4.83    (21.33)    \\
 Exp3 &     4.31  (11.50) &   -5.74   (23.72) &    -30.02   (40.95) &    -5.32    (21.08)     \\
 Exp4 &     3.99  (11.22) &   -4.93   (25.04) &    -30.74   (41.10) &    -5.12    (21.52)     \\
 Exp5 &     3.99  (11.18) &   -4.98   (24.34) &    -30.63   (41.12) &    -5.22    (21.18)     \\[3pt]
 & \multicolumn{4}{c}{$n=250$} \\
\hline
 Exp1 &  1.34    (6.80)  & -4.10 (13.05) &  -21.93   (34.13)   &  -3.82     (13.07)    \\
 Exp2 &  1.32    (6.97)  & -2.82 (13.24) &  -20.59   (33.63)   &  -3.15     (13.07)    \\
 Exp3 &  1.41    (6.85)  & -3.47 (13.47) &  -20.46   (33.45)   &  -3.66     (13.08)     \\
 Exp4 &  1.33    (6.77)  & -4.01 (12.98) &  -22.01   (34.03)   &  -3.77     (12.83)     \\
 Exp5 &  1.34    (6.80)  & -4.10 (13.04) &  -21.93   (34.14)   &  -3.82     (13.08)     \\[3pt]
  & \multicolumn{4}{c}{$n=500$} \\
\hline
 Exp1 &    0.63    (4.56) &   -2.07    (8.81) &   -12.30   (25.44) &  -2.19    (8.09)    \\
 Exp2 &    0.74    (4.74) &   -1.98    (9.13) &   -12.81   (26.67) &  -2.19    (8.54)    \\
 Exp3 &    0.73    (4.61) &   -2.07    (8.93) &   -12.24   (25.45) &  -2.30    (8.20)     \\
 Exp4 &    0.63    (4.56) &   -2.06    (8.84) &   -12.31   (25.44) &  -2.21    (8.13)     \\
 Exp5 &    0.63    (4.56) &   -2.05    (8.78) &   -12.20   (25.33) &  -2.17    (8.06)     \\
  \hline
\end{tabular}
\end{table}

\section{Example of application}\label{sec:example}

In this section, we propose a rigorous method to study the relationship between duration and severity for hydrological data used in \cite{Shiau:2006}. The data were kindly provided by the author.
There are many articles in the hydrology literature about modeling drought duration and severity with copulas; see, e.g., \cite{Chen/Singh/Guo/Mishra/Guo:2013, Shiau:2006}. One of the main tools to compute the drought duration and severity is the so-called Standardized Precipitation Index (SPI) \citep{McKee/Doesken/Kleist:1993}. Basically, \cite{McKee/Doesken/Kleist:1993} suggest to fit a gamma distribution over a moving average (1-month, 3-month, etc.) of the precipitations and then transform them into a Gaussian distribution. However, it may happen that there are several zero values in the observations so fitting a continuous distribution is not possible.
Using the data kindly provided by Professor Shiau (daily precipitations in millimeters for the Wushantou gauge station from 1932 to 2001), we see from Figure \ref{fig:densityMAP} that even taking a 1-month moving average leads to a zero-inflated distribution. So, instead of fitting a gamma distribution to the moving average, as it is often done, we suggest to simply apply the inverse Gaussian distribution to the empirical distribution. Then, one can compute the duration and severity: a drought is defined as a  sequence of consecutive days with negative SPI values, say $SPI_{i}, \dots, SPI_j$: the length the sequence is the duration $D$, i.e., $D=j-i+1$,  and the severity is defined by $S = -\sum_{k=i}^j SPI_k$.  It makes sense to consider the severity $S$ as a continuous random variable but the duration $D$ is integer-valued. Again, in the literature, a continuous distribution is usually fitted to $D$, which is incorrect. These variables are then divided by $30$ in order to represent months. With the dataset, we obtained 175 drought periods.
A non-parametric estimation of the density of the severity per month is displayed in Figure \ref{fig:densityMAP} which seems to be a mixture of at least two distributions. We tried mixtures of up to $4$ gamma distributions without success. A scatter plot of the duration and the severity also appears in Figure \ref{fig:densityMAP}. With the copula-based methodology developed here, based on a measure of fit, we chose the Frank copula.
In contrast, the preferred copula families in \cite{Shiau:2006} were the Galambos and Gumbel families. Using a smoothed distribution for the severity $D$, we can compute the conditional probability  $P(D > y|S=s)$ for $y=1$ to $8$ months, in addition to the conditional expectation $E(D|S=s)$. These functions are displayed in Figure \ref{fig:condDuration}.

\begin{figure}[ht!]
    \centering
    \includegraphics[scale=0.25]{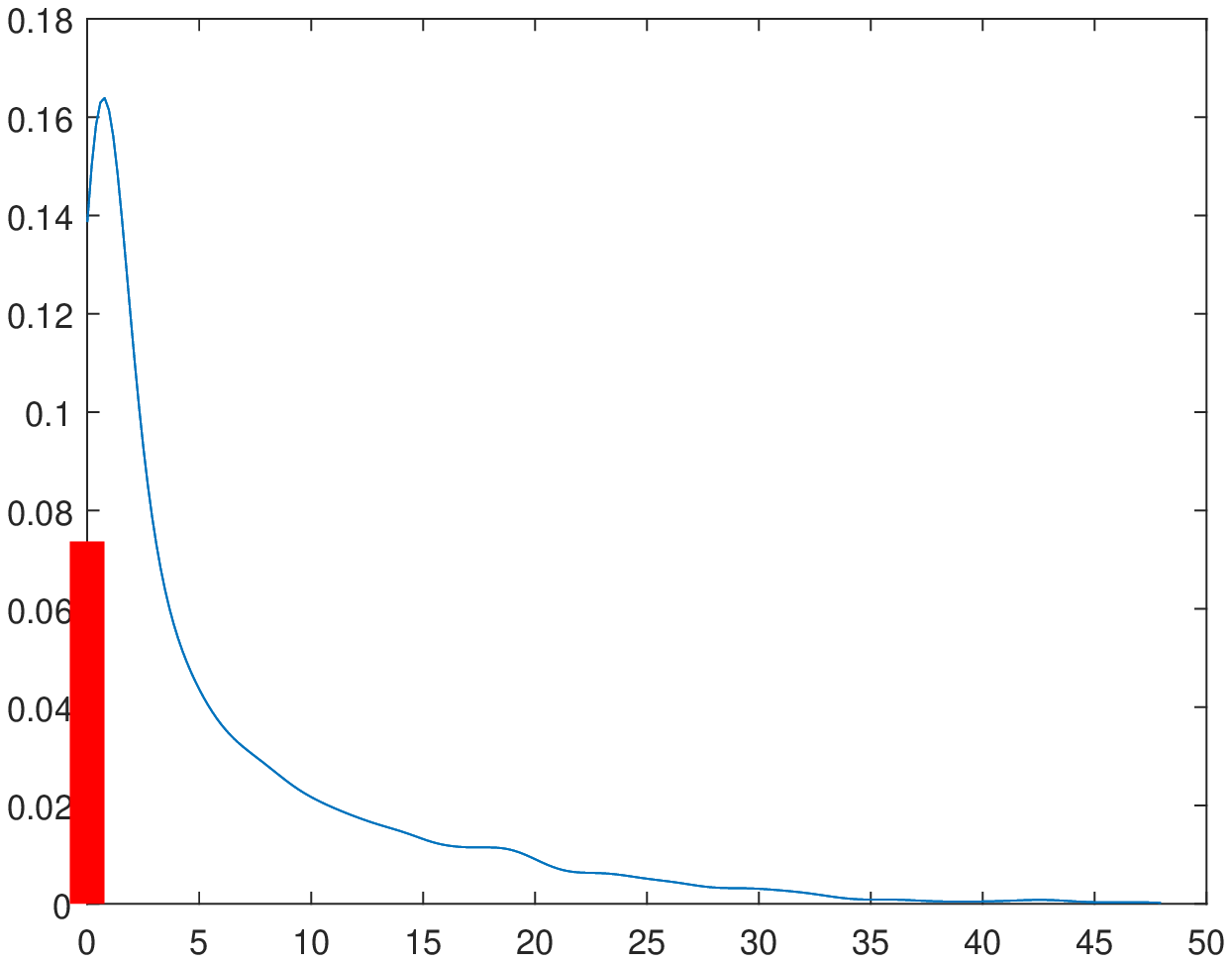}
    \includegraphics[scale=0.25]{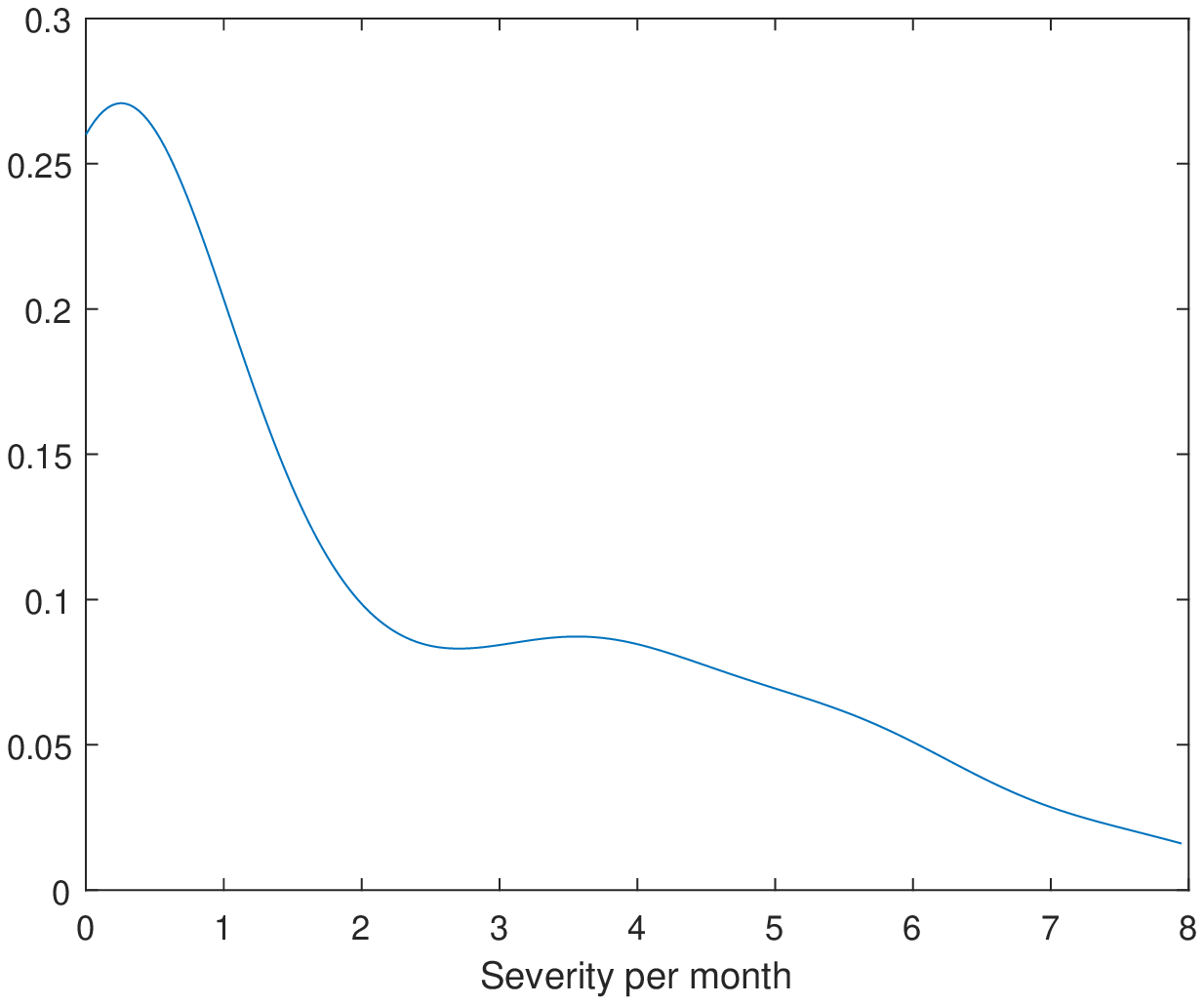}
      \includegraphics[scale=0.25]{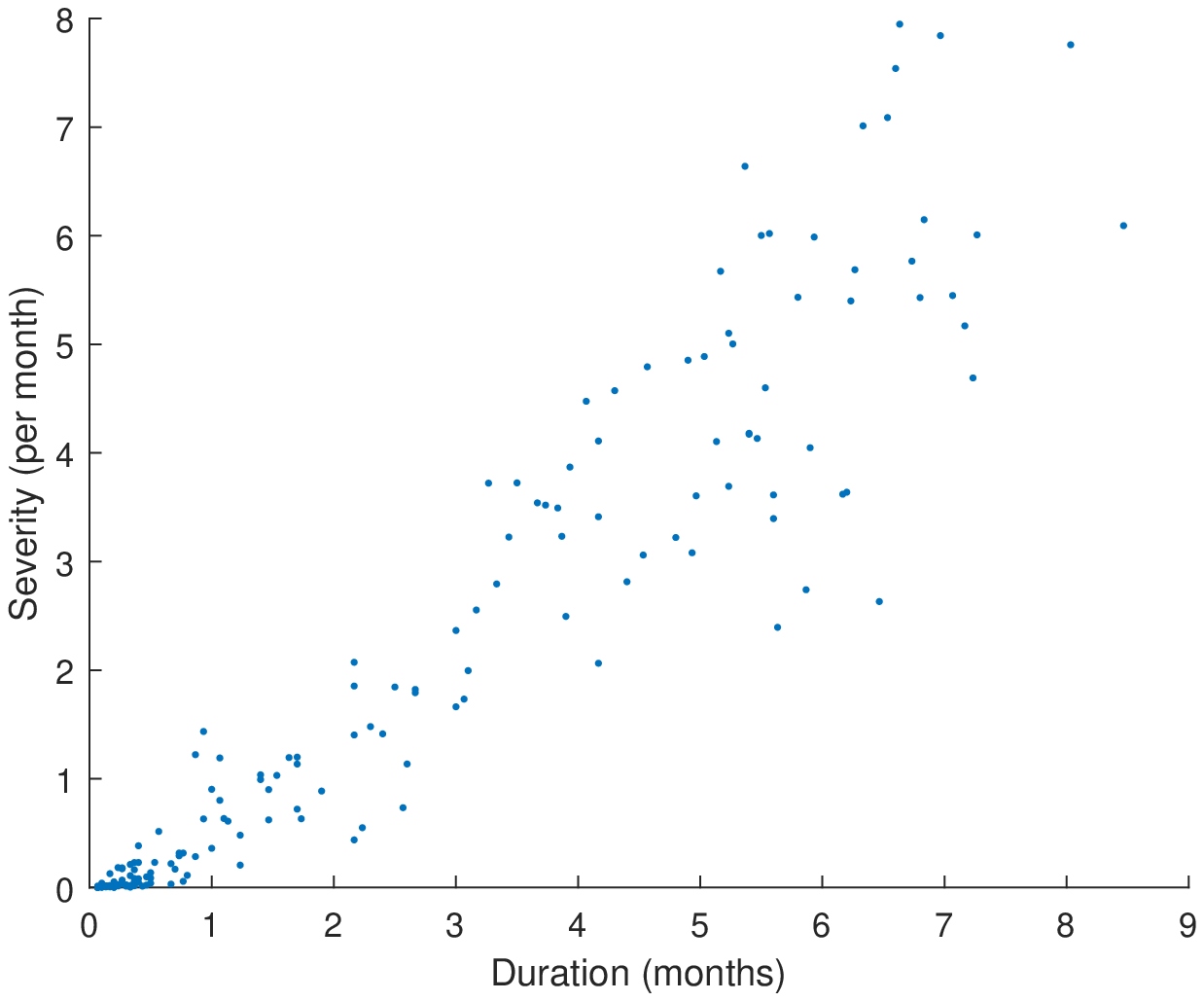}
    \caption{Estimated zero-inflated density for the 1-month moving average of precipitations (top left) and severity per month (top right), together with a scatter plot of the duration and severity (bottom).}
    \label{fig:densityMAP}
\end{figure}
\begin{figure}[ht!]
    \centering
   \includegraphics[scale=0.25]{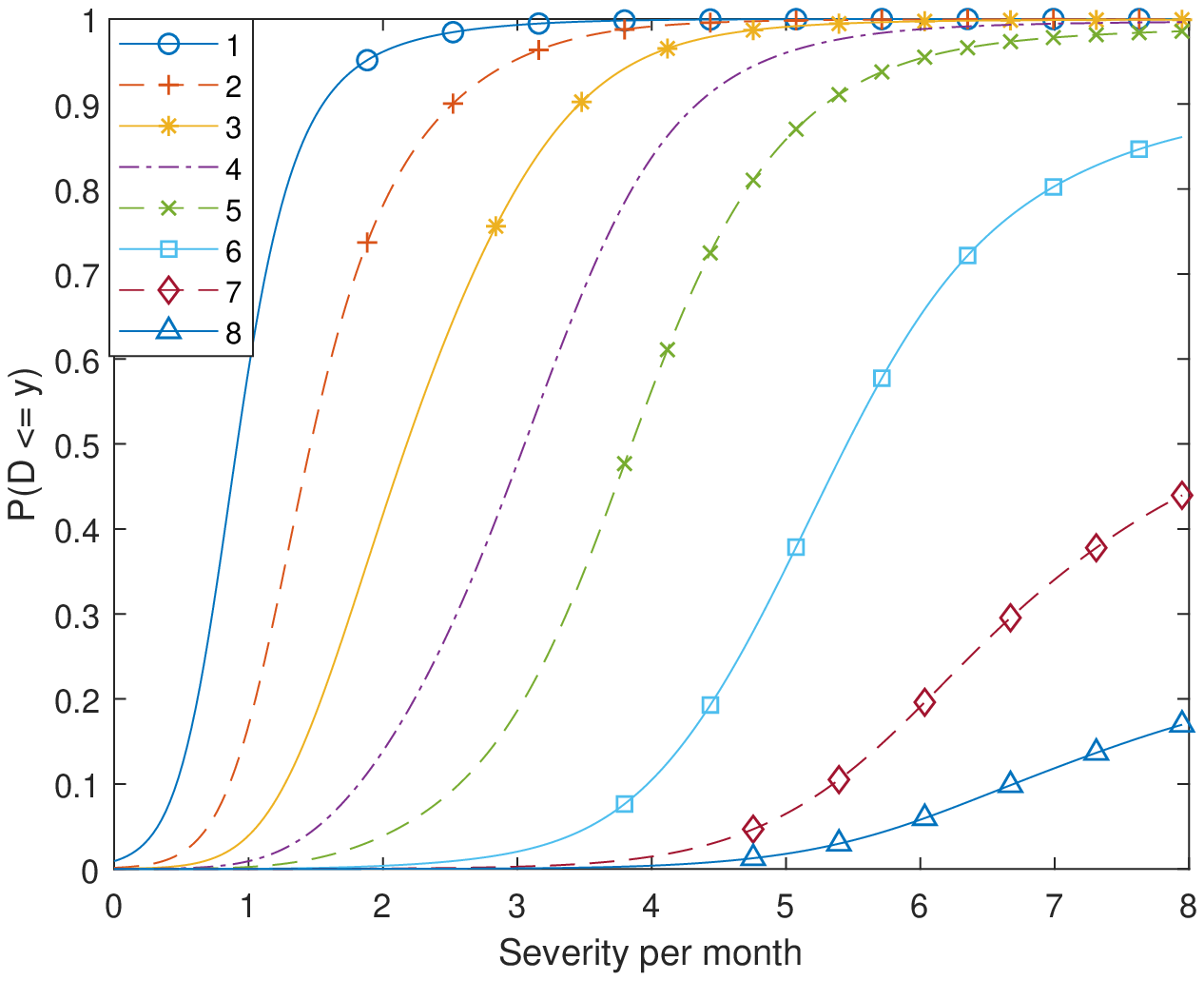}
\includegraphics[scale=0.25]{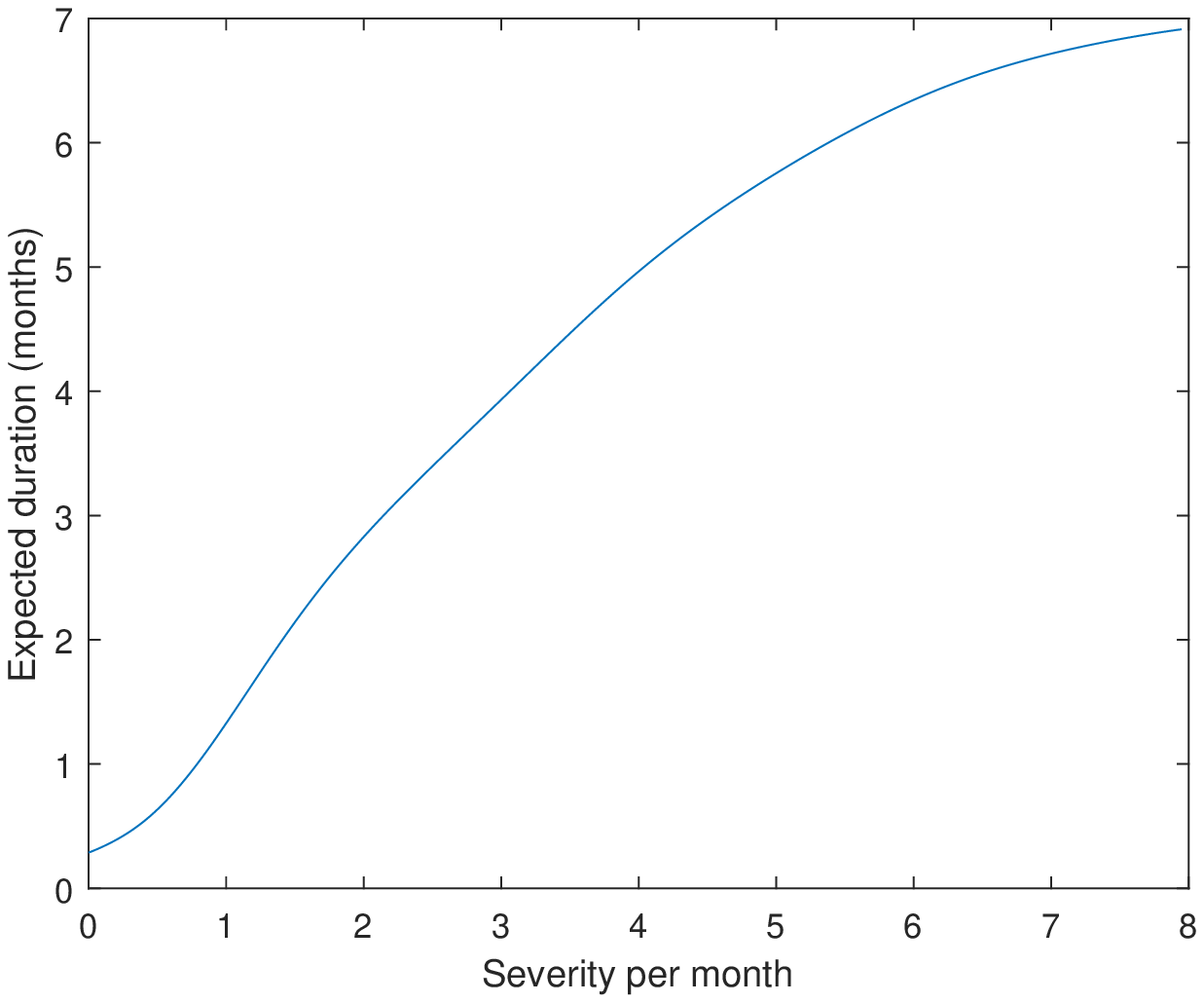}
    \caption{Conditional probability $P(D \le y)$ for $y\in \{1,\ldots,8\}$ months and conditional expectation of the duration given severity per month.}
    \label{fig:condDuration}
\end{figure}

\section{Conclusion}

We presented methods based on pseudo log-likelihood for estimating the parameter of copula-based models for arbitrary multivariate data.
These pseudo log-likelihoods depend on the non-parametric margins and are adapted to take into account the ties. We have also shown that the methodology can be extended to the case of parametric margins. According to numerical experiments, the proposed estimators perform quite well. As a example of application, we
estimated the relationship between drought duration and severity in hydrological data, where the problem of ties if often ignored.  The proposed methodologies can also be applied to high dimensional data. For this reason, we have shown in Corollary \ref{cor:LLcomp} that
the pairwise composite applied to our bivariate pseudo-likelihood is valid.
Finally, in a future work, we will also develop bootstrapping methods and formal tests of goodness-of-fit.

\section*{Acknowledgements}
The first author is supported by the Fonds de re\-cher\-che du Qu\'ebec -- Natur\-e et
tech\-no\-lo\-gies,  the Fonds de re\-cher\-che du Qu\'ebec -- santé, the \'Ecole de santé publique de l'Universit\'e de Montr\'eal, the Centre de recherche en sant\'e publique (CRESP), and the Natural Sciences and Engineering Research Council of Canada.
The second author is supported by the Natural Sciences and Engineering Research of Canada. 
 We would also like to thank Professor Jenq-Tzong Shiau of National Cheng Kung University for sharing his data with us.

\begin{appendix}

\section{Rank-based Z-estimators when margins are arbitrary}\label{app:Zest}
Suppose that $\bX_i=\bF^{-1}(\bU_i)$, where $\bU_1,\ldots, \bU_n$ are iid observations from copula $C=C_{\btheta_0}$.
 For any $A\subset \setd$,  set $J_A(\bx)=  \left[\prod_{j\in A} \I\{x_j\in \cA_j \}\right] \left[\prod_{j\in A^\complement} \I\{x_j\not\in \cA_j \}\right]$. Also, the set of atoms $\cA_j$ of $F_j$ is assumed to be closed for any $j\in\setd$. This implies that if $x_n\downarrow x $, and $\Delta F_j(x_n)>0$, then $\Delta F_j(x)>0$.
 Further set
$\disp \bH_{n,A,\btheta}(\bx) =
J_A(\bx) \bvarphi_{A,\btheta}\left\{\bF_{n,A}(\bx-), \bF_{n,A}(\bx), \bF_{n,A^\complement}(\bx) \right\}= J_A(\bx) \varphi_{A,\btheta}\left\{\bF_{n,A}(\bx-), \bF_{n}(\bx) \right\}$, 
and
$\disp
\bH_{A,\btheta}(\bx)=
J_A(\bx) \bvarphi_{A,\btheta}\left\{\bF_{A}(\bx-), \bF_{A}(\bx), \bF_{A^\complement}(\bx) \right\}=\\ J_A(\bx)  \bvarphi_{A,\btheta}\left\{\bF_{A}(\bx-), \bF(\bx) \right\}$,
where $\bF_{n,A}(\bx-)= \{F_{nj}(x_j-): j\in A\}$ and $\bF_A(\bx-)= \{F_j(x_j-): j\in A\}$. Here, $\bvarphi_{A,\btheta}$ is a $\dR^p$-value continuously differentiable function defined on $\{(\ba,\bb,\bu): 0\le a_j < b_j\le 1, \; j\in A, \; 0<u_j<1,\; j\not\in A\}$. To simplify notations, one also writes $\varphi_{A,\btheta}\left(\ba,\bu\right)$ if $b_j=u_j$ for all $j\in A$.
Finally, set
$\cK=\cK_{\btheta_0}$, $\disp
\bH_{n,\btheta}(\bx) = \sum_{A\subset \setd} \bH_{n,A,\btheta}(\bx)$, and $\disp \bH_\btheta(\bx) = \sum_{A\subset \setd} \bH_{A,\btheta}(\bx)$.
\begin{example}\label{ex:spearman}
Take $\varphi_A(\ba,\bb,\bu) = \left\{\prod_{j\in A}\left( \frac{a_j+b_j}{2} - \dfrac{1}{2}\right)\right\}
\left\{\prod_{j\in A^\complement}\left(u_j - \dfrac{1}{2}\right)\right\}$. This can be used to define Spearman's rho.
\end{example}
\begin{example}[Estimation of copula parameter]\label{ex:est-cop}
In this case, $\varphi_{A,\btheta}$ is defined by \eqref{eq:varphi-cop}.
\end{example}
Next, define $\bT_n(\btheta) = \int\bH_{n,\btheta} d\cK_n$ and $\bmu(\btheta) = \int  \bH_\btheta d\cK $. We are interested in the convergence in law of $\dT_n (\btheta) = \sn\{\bT_n(\btheta)-\bmu(\btheta)\}$, which is a multivariate extension of \cite{Ruymgaart/Shorack/vanZwet:1972} to arbitrary distributions,
where \cite{Ruymgaart/Shorack/vanZwet:1972} considered the bivariate case with continuous margins, and where $\varphi$ is a product and independent of $\btheta$. Before stating the result, one  needs to define the following class of functions.

\begin{definition}\label{def:func}
Let $\cQ_p$ is the set of all positive continuous functions $q$ on $(0,1)^p$, such that for any $\omega\in (0,1)$, the exists a constant $c(\omega)$ independent of $\bu\in (0,1)^p$ so that
\begin{equation}\label{eq:Qomega}
\sup_{\substack{\omega u_j \le t_j \\ \omega(1-u_j) \le 1-t_j\\ j\in \setp}}  q(\bt)\le c(\omega)q(\bu).
\end{equation}
Note that \eqref{eq:Qomega} is an extension of reproducing u-shaped functions defined in \cite{Tsukahara:2005}.   One can see that
$\cQ_p$ is closed under finite sums and finite products. In particular, if $q_j\in \cQ_1$ for all $j\in \setp$, then
$\disp  q(\bt) = \sum_{j=1}^p q_j(t_j) \in \cQ_p$ and
$ \disp  q(\bt) =  \prod_{j=1}^p q_j(t_j) \in \cQ_p$.
\end{definition}
\begin{remark}
Note that $r(t) = r_0(t) r_1(1-t) \in \cQ_1 $, if $r_0, r_1$ are positive, continuous, non-increasing functions such that  $r_i(\omega t)\le c_i(\omega)r_i(t)$, $\omega,t\in (0,1)$, $i\in\{0,1\}$.
For example, $q(t) = 1-\log{t}$ and $q(t)=t^{-a}(1-t)^{-b}$ belong to $\cQ_1$ if $a,b\ge 0$. In \cite{Ruymgaart/Shorack/vanZwet:1972}, it was assumed that $a=b$.

\end{remark}
For any $\gamma\in (0,0.5)$, set
$\disp \cU_\gamma= \{(0,b): \gamma < b < 1-\gamma\}\cup \{(a,b): \gamma < a < b <1-\gamma\}\cup\{(a,1): \gamma < a < 1-\gamma\}$,
and define $r_\epsilon(t) = t^\epsilon(1-t)^\epsilon$, $t\in [0,1]$. The following hypotheses are  needed in the proof.

\begin{ass}\label{hyp:varphiApp}
For any $A\subset\setd$, $\gamma\in(0,1/2)$, there exists a neighborhood $\cN$ of $\btheta_0$ and a constant $\kappa_{\gamma,\cN}$ such that for any $j\in A$, $l\in A^\complement$, and $\bu \in (0,1)^{A^\complement}$,
\begin{eqnarray}
\sup_{\btheta\in \cN} \sup_{(\ba,\bb)\in \cU_\gamma^{|A|}}|\bvarphi_{A,\btheta}(\ba,\bb,\bu)| &\le &\kappa_{\gamma,\cN} \; q_{A^\complement}(\bu),\label{eq:H1}\\
\sup_{\btheta\in \cN} \sup_{(\ba,\bb)\in \cU_\gamma^{|A|}}| \partial_{a_j}\bvarphi_{A,\btheta}(\ba,\bb,\bu)| &\le &\kappa_{\gamma,\cN} \; q_{A^\complement}(\bu),\label{eq:H2}\\
\sup_{\btheta\in \cN} \sup_{(\ba,\bb)\in \cU_\gamma^{|A|}}| \partial_{b_j}\bvarphi_{A,\btheta}(\ba,\bb,\bu)| &\le &\kappa_{\gamma,\cN}\; q_{A^\complement}(\bu),\label{eq:H3}\\
\sup_{\btheta\in \cN} \sup_{(\ba,\bb)\in \cU_\gamma^{|A|}}|\partial_{u_l}\bvarphi_{A,\btheta}(\ba,\bb,\bu)| &\le &\kappa_{\gamma,\cN} \; q_{l,A^\complement}(\bu) , \label{eq:H4}
\end{eqnarray}
where $q_{A^\complement}\in \cQ_{A^\complement}$ is integrable with respect to $C=C_{\btheta_0}$, and  for some $0<\epsilon<0.5$, $r_\epsilon(u_l)  q_{l,A^\complement}$ is integrable with respect to $C$ and $r_\epsilon(u_l)  q_{l,A^\complement}\in \cQ_{A^\complement}$. Also, $\sup_{\btheta\in \cN}|\bH_\btheta|$ is square integrable with respect to $K$.
\end{ass}
\begin{remark} If $q_A(\bu) = \sum_{j\in A}r_j(u_j)$, with $r_j$ integrable for any $j\in \setd$, then $q_A$ is integrable with respect to any copula, since its margins are uniform!
In the copula literature, when the margins are continuous, as well as in \cite{Ruymgaart/Shorack/vanZwet:1972} for $\cN=\{\btheta_0\}$, all results about the convergence are proven on the assumption that $q(\bu) = \prod_{j=1}^d r_j(u_j)$, $r_j$
symmetric, \citep{Genest/Ghoudi/Rivest:1995,Fermanian/Radulovic/Wegkamp:2004,Tsukahara:2005}.
In some cases, those hypotheses are too restrictive, for example for Clayton, Gaussian, and Gumbel families.
\end{remark}
Before stating the main result, define $\dT_{1,n}(\btheta)= \sn \int  \bH_\btheta(d\cK_n-d\cK) $, and set  $\disp  \dT_{2,n}(\btheta) = \sum_{A\in\setd} \int \dH_{n,A,\btheta} d\cK$,
where
\begin{eqnarray*}
\dH_{n,A,\btheta}(\bx) &=& J_A(\bx) \sum_{j\in A} \dF_{nj}(x_j)\partial_{b_j}\bvarphi_{A,\btheta}\{\bF_A(\bx-),\bF(\bx)\}  \I\{F_{nj}(x_{j})<1\} \\
&& \qquad
+ J_A(\bx) \sum_{j\in A}  \dF_{nj}(x_j-)\partial_{a_j} \bvarphi_{A,\btheta}\{\bF_A(\bx-),\bF(\bx)\}\I\{F_{nj}(x_{j}-)>0\} \\
&& \qquad \qquad  +J_A(\bx) \sum_{j\in A^\complement} \dF_{nj}(x_j)\partial_{u_j}\bvarphi_{A,\btheta}\{\bF_A(\bx-),\bF(\bx)\}.
\end{eqnarray*}

\begin{thm}\label{thm:rum_ext}
Under Assumption \ref{hyp:varphiApp}, 
 there exists a neighborhood $\cN$ of $\btheta_0$ such that as $n\to\infty$,
$$
\sup_{\btheta\in\cN}\left|\dT_{n}(\btheta) -\dT_{1,n}(\btheta)-\dT_{2,n}(\btheta)\right| \stackrel{Pr}{\to} 0.
$$
Furthermore, $\dT_{1,n}$, $\dT_{2,n}$, and $\dT_n$ converge jointly in $C(\cN)$ to continuous centered Gaussian processes  $\dT_{1}$, $\dT_{2}$, and $\dT$, where $\dT = \dT_1+\dT_2$, and
\begin{multline*}
 \dT_{2}(\btheta) =  \sum_{A\subset \setd} \sum_{j\in A}\int J_A(\bx)\dF_{j}(x_j)\partial_{b_j} \bvarphi_{A,\btheta}\{\bF_A(\bx-),\bF(\bx)\} d\cK(\bx)\\
 +  \sum_{A\subset \setd} \sum_{j\in A}\int J_A(\bx) \dF_{j}(x_j-)\partial_{a_j} \bvarphi_{A,\btheta}\{\bF_A(\bx-),\bF(\bx)\} d\cK(\bx)\\
 +\sum_{A\subset \setd} \sum_{j\in A^\complement}\int J_A(\bx)\dF_{j}(x_j)\partial_{u_j} \bvarphi_{A,\btheta}\{\bF_A(\bx-),\bF(\bx)\} d\cK(\bx).
\end{multline*}
\end{thm}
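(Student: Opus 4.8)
The plan is to decompose $\dT_n$ into the two announced pieces plus two negligible remainders, control the remainders using the weight classes $\cQ_p$, and then prove joint weak convergence. Write $\cK_n$ for the empirical measure of $\bX_1,\dots,\bX_n$, so $\bT_n(\btheta)=\int\bH_{n,\btheta}\,d\cK_n$ and $\bmu(\btheta)=\int\bH_\btheta\,d\cK$, and split
\begin{align*}
\dT_n(\btheta)
&= \sn\Big\{\int\bH_{n,\btheta}\,d\cK_n-\int\bH_\btheta\,d\cK\Big\}\\
&= \underbrace{\sn\int\bH_\btheta\,(d\cK_n-d\cK)}_{\dT_{1,n}(\btheta)}
\;+\;\sn\int(\bH_{n,\btheta}-\bH_\btheta)\,d\cK
\;+\;\underbrace{\sn\int(\bH_{n,\btheta}-\bH_\btheta)\,(d\cK_n-d\cK)}_{R_n(\btheta)}.
\end{align*}
It then suffices to establish, on a neighbourhood $\cN$ of $\btheta_0$: (i) $\sup_{\btheta\in\cN}|R_n(\btheta)|\stackrel{Pr}{\to}0$; (ii) $\sup_{\btheta\in\cN}\big|\sn\int(\bH_{n,\btheta}-\bH_\btheta)\,d\cK-\dT_{2,n}(\btheta)\big|\stackrel{Pr}{\to}0$; and (iii) joint convergence of $(\dT_{1,n},\dT_{2,n})$ in $C(\cN)$.

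\textbf{The two remainders.} Both estimates rest on the uniform rate $\max_j\|F_{nj}-F_j\|_\infty=O_P(n^{-1/2})$ together with boundary control from $\cQ_p$. A Chibisov--O'Reilly/DKW-type bound guarantees that, with probability tending to one, for every $j$ and every $t$ in the range of $F_j$ one has $\omega\le F_{nj}(t)/F_j(t)$ and $\omega\le(1-F_{nj}(t))/(1-F_j(t))$ for a fixed small $\omega$ (and likewise for left limits); by the reproducing property \eqref{eq:Qomega}, the weights $q_{A^\complement}$ and the boosted weights $r_\epsilon(u_l)\,q_{l,A^\complement}$ evaluated at the empirical arguments are then dominated by a fixed multiple of the same weights at the true arguments, so the integrability in Assumption \ref{hyp:varphiApp} transfers to the empirical integrands. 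Combining this with the mean-value theorem applied to $\bvarphi_{A,\btheta}$ and the bounds \eqref{eq:H1}--\eqref{eq:H4} gives $\sup_{\btheta\in\cN}|\bH_{n,\btheta}-\bH_\btheta|\to0$ $P_{\btheta_0}$-a.s.\ together with a $\cK$-square-integrable envelope; a standard empirical-process argument for $L^2(\cK)$-negligible integrands inside a $\cK$-Donsker class then yields (i). For (ii), a first-order Taylor expansion of $\bvarphi_{A,\btheta}$ around $\{\bF_A(\bx-),\bF(\bx)\}$, with increments $\sn\{F_{nj}-F_j\}=\dF_{nj}$ and $\sn\{F_{nj}(\cdot-)-F_j(\cdot-)\}=\dF_{nj}(\cdot-)$, reproduces precisely $\sum_A\int\dH_{n,A,\btheta}\,d\cK=\dT_{2,n}(\btheta)$ (the indicators $\I\{F_{nj}(x_j)<1\}$ and $\I\{F_{nj}(x_j-)>0\}$ arising from the boundary of the domain of $\bvarphi_{A,\btheta}$), while the second-order remainder is bounded by $n^{-1/2}\big(\max_j\sn\|F_{nj}-F_j\|_\infty\big)^2$ times an integrable envelope built from \eqref{eq:H2}--\eqref{eq:H4}, hence is $o_P(1)$ uniformly.

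\textbf{Weak convergence and the limit.} The process $\dT_{1,n}(\btheta)=\sn\int\bH_\btheta\,(d\cK_n-d\cK)$ is a $\cK$-empirical process indexed by $\btheta\mapsto\bH_\btheta$: its finite-dimensional CLTs follow from square integrability of $\sup_{\btheta\in\cN}|\bH_\btheta|$, and tightness in $C(\cN)$ from the $C^1$ dependence of $\bH_\btheta$ on $\btheta$ (inherited from Assumption \ref{hyp:cop} via the derivative bounds), so $\dT_{1,n}\rightsquigarrow\dT_1$ centered Gaussian. For each $\btheta$, $\dT_{2,n}(\btheta)$ is a fixed bounded linear functional of the marginal empirical processes $(\dF_{n1},\dots,\dF_{nd})$, one term per $j$ split into a contribution integrated against the Lebesgue part of $\cK$ and a sum over the atoms of $\cK$; since $\dF_{nj}=\dB_{nj}\circ F_j$ and $(\dB_{n1},\dots,\dB_{nd})\rightsquigarrow(\dB_1,\dots,\dB_d)$ jointly in $C([0,1])^d$, the continuous mapping theorem gives $\dT_{2,n}\rightsquigarrow\dT_2$, with the stated formula obtained by replacing each $\dF_{nj}$ by $\dF_j$. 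Joint convergence of $(\dT_{1,n},\dT_{2,n})$, and of $\dT_n$, follows because all are asymptotically continuous functionals of the single empirical process of the $\bU_i$; hence $\dT_n=\dT_{1,n}+\dT_{2,n}+o_P(1)\rightsquigarrow\dT_1+\dT_2=:\dT$, a continuous centered Gaussian process.

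\textbf{Main obstacle.} The crux is parts (i)--(ii): controlling the cross term and the Taylor remainder uniformly over $\btheta\in\cN$ \emph{and} up to the boundary of $(0,1)^d$, where $\bvarphi_{A,\btheta}$ and its partials blow up. Everything hinges on transferring $\cQ_p$-integrability from the true to the empirical arguments, which in turn demands the two-sided comparison of $F_{nj}$ with $F_j$ near $0$ and near $1$ (with the extra $r_\epsilon(u_l)$ slack in the $A^\complement$-directions) together with careful bookkeeping of the atoms --- where left limits $F_{nj}(\cdot-)$ enter and where boundary atoms interact with the indicator functions in $\dH_{n,A,\btheta}$.
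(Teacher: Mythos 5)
Your overall architecture (peel off $\dT_{1,n}$, linearize the effect of the estimated margins to produce $\dT_{2,n}$, transfer the $\cQ_p$-weights from true to empirical arguments via a two-sided comparison of $F_{nj}$ with $F_j$, then conclude from the joint convergence of the marginal bridges) is the same as the paper's, and your step (iii) matches the paper's ending. But the two remainder steps do not go through as you state them. In step (ii) you bound the linearization error of $\sn\int(\bH_{n,\btheta}-\bH_\btheta)\,d\cK$ by a \emph{second-order} Taylor remainder ``built from \eqref{eq:H2}--\eqref{eq:H4}''. Assumption \ref{hyp:varphiApp} only assumes $\bvarphi_{A,\btheta}$ is continuously differentiable and only bounds $\bvarphi_{A,\btheta}$ and its first-order partials (through weights that blow up at the boundary); no control of, nor even existence of, second derivatives is available, so that remainder cannot be dominated this way. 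The paper avoids second derivatives entirely: it linearizes only on the interior region $S_\gamma=\{\bx:\gamma\le F_j(x_j),F_j(x_j-)\le 1-\gamma\}$, where a mean-value argument plus continuity of the first partials gives $\sup_{\btheta\in\cN}\left|\sn(\bH_{n,\btheta}-\bH_\btheta)-\dH_{n,\btheta}\right|\to0$, and it treats the boundary region $S_\gamma^\complement$ separately on the events $\cB_{n,M,\epsilon}$ and $\cD_n$, making the integrated weight envelopes $q_{A^\complement}$ and $r_0^\epsilon(u_j)\,q_{j,A^\complement}$ over $[\gamma,1-\gamma]^\complement$ arbitrarily small by choosing $\gamma$ small. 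You correctly identify the boundary as the crux, but the argument you actually give never implements this interior/boundary splitting, and without it neither the comparison of gradients at intermediate versus true arguments nor your quadratic remainder bound is available near the boundary.

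Second, your step (i) disposes of the cross term $R_n(\btheta)=\sn\int(\bH_{n,\btheta}-\bH_\btheta)\,(d\cK_n-d\cK)$ by asserting that the differences sit ``inside a $\cK$-Donsker class''. Nothing in Assumption \ref{hyp:varphiApp} provides bracketing or entropy control for the class $\{\bH_{\bG,\btheta}-\bH_{\btheta}\}$ indexed by margins $\bG$ near $\bF$ and $\btheta\in\cN$, so this is a nontrivial unproved claim; note also that $\sup_{\btheta\in\cN}|\bH_{n,\btheta}-\bH_\btheta|$ does not go to $0$ uniformly in $\bx$ (the weights diverge at the boundary), so at best you have a dominated pointwise statement, which by itself does not feed the asymptotic-equicontinuity argument you invoke. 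The paper never forms this global cross term: it works directly with $\sn\int(\bH_{n,\btheta}-\bH_\btheta)\,d\cK_n$, splits it over $S_\gamma$ and $S_\gamma^\complement$, controls the boundary piece with the weighted bounds, and only needs an empirical-fluctuation statement for $\int_{S_\gamma}\dH_{n,\btheta}\,(d\cK_n-d\cK)$, where the integrand is bounded with well-behaved derivatives. So while your strategy is the right one in spirit, the two concrete estimates you rely on (a second-order remainder bound without second-derivative hypotheses, and an unverified Donsker property) are genuine gaps; repairing them essentially forces you back to the paper's $S_\gamma$-splitting device.
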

\begin{proof}
Set
$S_\gamma = \left\{\bx = (x_1,\ldots,x_d)\in\dR^d : \gamma \le F_j(x_j), F_j(x_j-)\le 1-\gamma\right\}$.
On $S_\gamma$ and $\btheta\in\cN$, for some neighborhood $\cN$ of $\btheta_0$,
$\sn(\bH_{n,\btheta}-\bH_\btheta)-\dH_{n,\btheta}$
converges uniformly in probability to $0$. As a result,
\begin{eqnarray*}
\dT_n(\btheta) &=& \sn \{\bT_n(\btheta)-\bmu(\btheta)\} = \dT_{1,n}(\btheta)+ \sn \int (\bH_{n,\btheta}-\bH_\btheta)d\cK_n \\
&= & \dT_{1,n}(\btheta)+\dT_{2,n}(\btheta)+ \int_{S_\gamma} \left\{\sn (\bH_{n,\btheta}-\bH_\btheta)-\dH_{n,\btheta}\right\}d\cK_n
+ \sn \int_{S_\gamma^\complement} (\bH_{n,\btheta}-\bH_\btheta) d\cK_n\\
&& \qquad + \int_{S_\gamma} \dH_{n,\btheta} (d\cK_n-d\cK) - \int_{S_\gamma^\complement} \dH_{n,\btheta}d\cK\\
&= & \dT_{1,n}(\btheta)+\dT_{2,n}(\btheta) + \sn \int_{S_\gamma^\complement}  (\bH_{n,\btheta}-\bH_\btheta)d\cK_n -\int_{S_\gamma^\complement} \dH_{n,\btheta} d\cK+o_P(1).
\end{eqnarray*}
For  $\epsilon \in \left(0,1/2\right)$ fixed, set
$\cB_{n,M,\epsilon} = \bigcap_{j=1}^d \left\{ \sup_{0<u<1} \dfrac{|\beta_{n,j}(u)|}{r_0^\epsilon(u)} \le M\right\}\cap \left\{ \sup_{0<u<1} |\beta_{n,j}(u)| \le M\right\}$,
where $r_0(u)=u(1-u)$, $u\in [0,1]$. Then, for $\delta>0$ given, one can find $M$ so that for any $n\ge 1$, $P(\cB_{n,M,\epsilon})>1-\delta/2$.  Then, setting $D_{1,n} = \sup_{\btheta\in\cN} \disp \int_{S_\gamma^\complement}\left| \dH_{n,\btheta}\right| d\cK$, one gets
\begin{multline*}
  P\left(\left|D_{1,n}\right|> \delta\right) \le \delta/2  +
P\left(\cB_{n,M,\epsilon} \cap \left\{\left| D_{1,n}\right|> \delta\right\}\right)\\
\le \delta/2 +
\sum_{A\in \setd}\left\{ \sum_{j\in A} (|B_{A,j1}|+|B_{A,j2}|)+\sum_{j\in A^\complement} |B_{A,j3}|\right\},
\end{multline*}
where, for any $j\in\setd$,
$$
B_{A,j1} = M \int_{S_\gamma^\complement}J_A(\bx)\sup_{\btheta\in\cN} \left|\partial_{b_j}  \bvarphi_{A,\btheta}\{\bF_A(\bx-),\bF(\bx)\}\right| \I\{F_{j}(x_j)<1\} d\cK(\bx),
$$
$$
B_{A,j2} =  M \int_{S_\gamma^\complement}J_A(\bx) \sup_{\btheta\in\cN} \left|\partial_{b_j}  \bvarphi_{A,\btheta}\{\bF_A(\bx-),\bF(\bx)\}\right| \I\{F_{j}(x_{j}-)>0\} d\cK(\bx),
$$
$$
B_{A,j3} = \int_{S_\gamma^\complement} J_A(\bx)  r_0^\epsilon \{F_j(x_j)\}\sup_{\btheta\in\cN}\left|\partial_{u_j}\bvarphi_{A,\btheta}\{\bF_A(\bx-),\bF(\bx)\}\right| d\cK(\bx),
$$
since $\dF_{nj}(x_j\pm) = \beta_{nj}\circ F_j(x_j\pm)$ a.s., $j\in \setd$.
By hypothesis \eqref{eq:H2}--\eqref{eq:H3}, it follows that for $i\in\{1,2\}$, $\bU\sim C=C_{\btheta_0}$,
$\disp B_{A,ji} \le M \kappa_{\gamma,\cN} E\left\{ \I\left(\bU \in [\gamma,1-\gamma]^\complement \right)\; q_{A^\complement}(\bU) \right\}$. 
By hypothesis,  $q_{A^\complement}$ is integrable with respect to $C$.
Also, by hypothesis \eqref{eq:H4}, for $\bU\sim C$,
$\disp 
B_{A,j3} \le M \kappa_{\gamma,\cN} E\left\{ \I\left(\bU \in [\gamma,1-\gamma]^\complement \right) r_0^\epsilon (U_j)
 q_{j,A^\complement}(\bU)\right\}$, 
and by hypothesis, one can find $\epsilon\in (0,0.5)$ so that for any $A\subset \setd$ and $j\in A^\complement$,
$r_0^\epsilon (U_j) q_{j,A^\complement}(\bU) $ is integrable with respect to $C$. As a result, for $i\in\{1,2,3\}$,  $B_{A,j,i}< 2^{-d-2}\delta$ if $\gamma$ is small enough. Therefore, $P\left(\left|D_{1,n}\right|> \delta\right)<\delta$ by choosing $\epsilon$, $M$, and $\gamma$ appropriately.
Next,
set
$$
D_{2n} = \sup_{\btheta\in\cN} \sn \left|\int_{S_\gamma^\complement} (H_{n,\btheta}-H_\btheta) d\cK_n\right| \le \sum_{A\subset \setd} \sum_{j=1}^5 D_{2,A,nj},
$$
with
\begin{eqnarray*}
D_{2,A,n1} &=& \dfrac{1}{n}\sum_{j\in A} \sum_{i=1}^nJ_A(\bX_i)\I(\bX_i\not \in S_\gamma)|\dF_{nj}(X_{ij})|\sup_{\btheta\in\cN} \left|\partial_{b_j} \bvarphi_{A,\btheta}\left(\tilde\bv_{n,i}, \tilde \bu_{n,i}\right)\right|\I\{F_{j}(X_{ij}-)=0\},\\
D_{2,A,n2} &=& \dfrac{1}{n}\sum_{j\in A} \sum_{i=1}^nJ_A(\bX_i) \I(\bX_i\not \in S_\gamma)|\dF_{nj}(X_{ij})|\sup_{\btheta\in\cN} \left|\partial_{b_j} \bvarphi_{A,\btheta}\left(\tilde\bv_{n,i},\tilde \bu_{n,i}\right)\right| \I\{F_{j}(X_{ij})<1\}\\
D_{2, A,n3} &=& \dfrac{1}{n}\sum_{j\in A} \sum_{i=1}^nJ_A(\bX_i) \I(\bX_i\not \in S_\gamma)|\dF_{nj}(X_{ij}-)|\sup_{\btheta\in\cN} \left|\partial_{a_j} \bvarphi_{A,\btheta}\left(\tilde\bv_{n,i}, \tilde \bu_{n,i}\right)\right| \I\{F_{j}(X_{ij})=1\}\\
D_{2,A,n4} &=& \dfrac{1}{n}\sum_{j\in A} \sum_{i=1}^nJ_A(\bX_i)\I(\bX_i\not \in S_\gamma) |\dF_{nj}(X_{ij}-)|\sup_{\btheta\in\cN} \left|\partial_{a_j}\bvarphi_{A,\btheta}\left(\tilde \bv_{n,i}, \tilde \bu_{n,i} \right)\right|\I\{F_{j}(X_{ij}-)>0\},\\
D_{2,A,n5} &=& \dfrac{1}{n}\sum_{j\in A^\complement} \sum_{i=1}^nJ_A(\bX_i) \I(\bX_i\not \in S_\gamma)|\dF_{nj}(X_{ij})|\sup_{\btheta\in\cN} \left| \partial_{u_j}\bvarphi_{A,\btheta}\left(\tilde \bv_{n,i}, \tilde \bu_{n,i} \right)\right|,
\end{eqnarray*}
where, for any $i\in\setn$, there exists $\lambda_i \in (0,1)$ such that $\tilde \bu_{n,i} = \lambda_i \bF(\bX_i)+(1-\lambda_i)\bF_n(\bX_i)$ and
 $\tilde \bv_{n,i} = \lambda_i \bF(\bX_i-)+(1-\lambda_i)\bF_n(\bX_i-)$.
Now, for $\delta>0$, one can $\omega \in (0,1)$ such that
$
P(\cD_n)>1-\delta/2
$, where
$\disp \cD_n = \cap_{j=1}^d \cap_{i=1}\left\{ \omega F_j(X_{ij}) \le F_{nj}(X_{ij})\le 1-\omega\bar F_j(X_{ij})\right\}$,
and $\bar F_j(x_j)=1-F_j(x_j)$, $j\in\setd$. As a result, on $\cD_n$,
$ \omega \bF(\bX_{i}) \le \tilde \bu_{ni}\le 1-\omega \bar \bF(X_{i})$. It then follows that
$ \omega \bF(\bX_{i}-) \le \tilde \bv_{ni}\le 1-\omega \bar \bF(X_{i}-)$.
Hence, since $q_{j,A^\complement}\in \cQ_{A^\complement}$, for any $j\in\setd$ with $X_{ij}\not\in\cA_j$, one has
$q_{j,A^\complement}(\tilde u_{n,i}) \le c(\omega) q_{j,A^\complement}(U_j)$. It then follows that on $\cD_n \cap \cB_{n,M,\epsilon}$,
$\disp 
\sum_{j=1}^4 D_{2,A,nj} \le c(\omega) M \kappa_{\gamma,\cN} \frac{1}{n}\sum_{i=1}^n \I\left(\bU_i \in [\gamma,1-\gamma]^\complement \right) q_{A^\complement}(\bU_{i})$, 
which can be made arbitrarily small by the strong law of large numbers since  $q_{A^\complement}(\bU)$ is integrable with respect to $C$. Finally,
$\disp 
D_{2,A,n5} \le M \kappa_{\gamma,\cN} \frac{1}{n}\sum_{i=1}^n  \I\left(\bU_i \in [\gamma,1-\gamma]^\complement \right) \sum_{j\in A^\complement} r_0^\epsilon (U_{ij})  q_{j,A^\complement}(\bU_{i})$,
and by hypothesis, one can find $\epsilon \in (0,0.5)$ so that for any $A\subset \setd$ and $j\in A^\complement$,
$r_0^\epsilon (U_j) q_{j,A^\complement}(\bU) $ is integrable with respect to $C$. One may then conclude that for $\delta>0$, one can find $\omega,\gamma, M,\cN$ so that $P(|D_{2,n}|> \delta)<\delta$. The rest of the proof follows easily.
\end{proof}

The following assumption is needed for proving convergence of estimators.

\begin{ass}\label{hyp:est} There exists a neighborhood $\cN$ of $\btheta_0$ such that
$\bmu(\btheta_0)=0$  and the Jacobian $\dot \bmu(\btheta)$ exists, is continuous, and is positive definite or negative definite at $\btheta_0$.
\end{ass}
\begin{remark}\label{rem:est} From Proposition \ref{prop:rolle},  
 there is a neighborhood $\cN$ of $\btheta_0$ on which $\bmu$ is injective, and
 $\dot \bmu(\btheta)$ is either positive definite or negative definite for all $\btheta\in\cB$. In particular, $\bmu$ has  a unique zero in $\cN$.
\end{remark}

We can now prove the following general convergence result for rank-based Z-estimators.
\begin{thm}\label{thm:gen_est}
Under Assumptions \ref{hyp:varphiApp} and \ref{hyp:est}, if $\btheta_n$ satisfies $\bT_n(\btheta_n) =0$, then $\bTheta_n = \sn(\btheta_n-\btheta_0)$ converges in law to $\bTheta =  -\{\dot \bmu(\btheta_0)\}^{-1}\dT(\btheta_0)$.
\end{thm}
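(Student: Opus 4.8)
The plan is to run the classical Z-estimator argument, using Theorem~\ref{thm:rum_ext} as the stochastic-equicontinuity input and Assumption~\ref{hyp:est} together with Remark~\ref{rem:est} to invert the linearized map. The first step is consistency, $\btheta_n\pr\btheta_0$. Since Theorem~\ref{thm:rum_ext} gives $\dT_n\rightsquigarrow\dT$ in $C(\cN)$ with $\dT$ a.s.\ continuous, in particular
\[
\sup_{\btheta\in\cN}\bigl|\bT_n(\btheta)-\bmu(\btheta)\bigr|=\sni\,\sup_{\btheta\in\cN}\bigl|\dT_n(\btheta)\bigr|\pr 0 .
\]
By Remark~\ref{rem:est}, after shrinking $\cN$ if needed, $\bmu$ is injective on $\cN$ with $\dot\bmu$ nonsingular throughout, so $\btheta_0$ is its unique zero and, on any compact subset of $\cN$, $|\bmu(\btheta)|$ is bounded below by a positive constant outside each fixed ball around $\btheta_0$. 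Combined with the displayed uniform convergence and $\bT_n(\btheta_n)=0$, this forces $\btheta_n\pr\btheta_0$, and in particular $\btheta_n\in\cN$ with probability tending to one.

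On the event $\{\btheta_n\in\cN\}$, using $\bmu(\btheta_0)=0$ and $\bT_n(\btheta_n)=0$, one writes
\[
0=\sn\,\bT_n(\btheta_n)=\dT_n(\btheta_n)+\sn\bigl\{\bmu(\btheta_n)-\bmu(\btheta_0)\bigr\}.
\]
By the integral form of the mean value theorem, $\bmu(\btheta_n)-\bmu(\btheta_0)=\mathbf{M}_n(\btheta_n-\btheta_0)$ with $\mathbf{M}_n=\int_0^1\dot\bmu\{\btheta_0+s(\btheta_n-\btheta_0)\}\,ds$, so $\sn\{\bmu(\btheta_n)-\bmu(\btheta_0)\}=\mathbf{M}_n\bTheta_n$; since $\dot\bmu$ is continuous and $\btheta_n\pr\btheta_0$, we get $\mathbf{M}_n\pr\dot\bmu(\btheta_0)$, which is invertible. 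Hence, on an event of probability tending to one,
\[
\bTheta_n=-\,\mathbf{M}_n^{-1}\,\dT_n(\btheta_n).
\]

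To finish, note that since $\dT_n\rightsquigarrow\dT$ in $C(\cN)$ with continuous limit and $\btheta_n\pr\btheta_0$, asymptotic equicontinuity of $\{\dT_n\}$ yields $\dT_n(\btheta_n)-\dT_n(\btheta_0)\pr 0$; therefore $\dT_n(\btheta_n)$ converges in law to $\dT(\btheta_0)$, which is a centered Gaussian vector by Theorem~\ref{thm:rum_ext}. Slutsky's lemma, together with $\mathbf{M}_n^{-1}\pr\{\dot\bmu(\btheta_0)\}^{-1}$, then gives $\bTheta_n\rightsquigarrow-\{\dot\bmu(\btheta_0)\}^{-1}\dT(\btheta_0)=\bTheta$, and Gaussianity of $\bTheta$ is inherited from $\dT(\btheta_0)$.

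The step I expect to be the main obstacle is the consistency argument, since the defining relation $\bT_n(\btheta_n)=0$ may a priori admit several roots or roots drifting toward $\partial\cN$. The cleanest way around this is to work on a compact ball $\bar B\subset\cN$ and use that, by injectivity of $\bmu$ and nonsingularity of $\dot\bmu$ (Remark~\ref{rem:est}), $\inf\{|\bmu(\btheta)|:\btheta\in\bar B,\ \|\btheta-\btheta_0\|\ge\eta\}>0$ for every $\eta>0$; combined with the uniform convergence above, this pins $\btheta_n$ near $\btheta_0$. The equicontinuity used in the last step is then automatic from weak convergence in $C(\cN)$, so no extra work is needed there.
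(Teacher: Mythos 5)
Your proof is correct and follows essentially the same route as the paper: consistency from the identification in Assumption \ref{hyp:est} and Remark \ref{rem:est} combined with the uniform control supplied by Theorem \ref{thm:rum_ext}, then a mean-value linearization of $\bmu$ and Slutsky's lemma applied to $\bTheta_n=-\mathbf{M}_n^{-1}\dT_n(\btheta_n)$. The only differences are cosmetic: the paper establishes consistency by a subsequence argument and invokes a pointwise mean value theorem, whereas you use a separation bound plus uniform convergence and the integral form of the mean value theorem, the latter being the slightly cleaner choice for vector-valued $\bmu$.
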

\begin{proof}
Suppose that $\btheta_n$ is such that $\bT_n(\btheta_n)=0$ and suppose that  $\btheta_{n_k}$ is a sub-sequence converging to $\btheta^\star$. Such a sub-sequence exists by choosing a bounded neighborhood $\cN$ of $\btheta_0$. Then, on one hand, $\bmu\left(\btheta_{n_k}\right) \to
\bmu\left(\btheta^\star\right)$. On the other hand, from  Theorem \ref{thm:rum_ext},
$ \bmu\left(\btheta_{n_k}\right)  =
-\frac{\dT_{n_k}(\btheta_{n_k})}{\sqrt{n_k}} \stackrel{Pr}{\to} 0$.
As a result, $\bmu\left(\btheta^\star\right)=0$ and by Remark \ref{rem:est}, $\btheta^\star=\btheta_0$. Since every subsequence converges to the same limit, it follows that $\btheta_n$  converges in probability to $\btheta_0$. Using  Theorem \ref{thm:rum_ext} again, on one hand, one may conclude that
$
-\sn\bmu(\btheta_n) = \dT_n(\btheta_n)\rightsquigarrow \dT(\btheta_0)$. On the other hand, for some $\lambda_n\in [0,1]$, one gets that
$$
\sn\bmu(\btheta_n) =  \sn\left\{\bmu(\btheta_n)-\bmu(\btheta_0)\right\} = \dot \bmu(\tilde\btheta_n)\bTheta_n,
$$
where $\tilde \btheta_n = \btheta_0+ \lambda_n (\btheta_n-\btheta_0)\stackrel{Pr}{\to}\btheta_0$. By Assumption \ref{hyp:est}, $\dot \bmu(\tilde\btheta_n)\stackrel{Pr}{\to}\dot \bmu(\btheta_0)$ is positive definite. As a result,
$\bTheta_n\rightsquigarrow -\{\dot \bmu(\btheta_0)\}^{-1}\dT(\btheta_0)$.
\end{proof}

\section{Application to copula families}
In the case of estimation of copula parameters, i.e., when $\bvarphi_{A,\btheta}$ is defined by formula \eqref{eq:varphi-cop},
set $ \cW_1=\dT_1(\btheta_0)$ and $\cW_2=\dT_2=(\btheta_0)$.
For any $j\in\setd$, define $  \disp \cI_j = \cup_{y\in \cA_j}\cI_j(y)$, and $\cI_j(y) = (F_j({y} -), F_j(y)]$. Next, for $x_j\in \cA_j$, $j\in A$, set
$\disp G_{A,\btheta}(\bx,\bu)= \int_{(0,1)^d} \left[\I\{v_j\in \cI_j(x_j)\right]c_\btheta\left((\bv,\bu)^A \right)d\bv$,
where $(\bv,\bu)^A_j = v_j$ if $j\in A$ and $(\bv,\bu)^A_j = u_j$ if $j\in A^\complement$. Then
\begin{equation}\label{eq:GA}
G_{A,\btheta}(\bx,\bu)=  \sum_{B\subset A} (-1)^{|B|} \partial_{A^\complement} C_\btheta\left\{(\bx,\bu)^{(B,A)}\right\},
\end{equation}
where
$
\left(( \bx,\bu)^{(B,A)}\right)_j =
\left\{
\begin{array}{ll}
 F_j({x_j} -) ,& j\in  B; \\
  F_j(x_j),   & j  \in  A\setminus B;\\
 u_j , & j\in A^\complement.
 \end{array}
 \right. $. \\
Note that if $\tilde G_{A,\btheta_0}(\bx,\bu) =  G_{A,\btheta_0}(\bx,\bu)\left[\prod_{k\in A^\complement}\I\{u_k\not\in \cI_k\}\right]$, then
using the previous notations, one gets
\begin{multline*}
\int \dH_{A,\btheta} d\cK = \sum_{j\in A} \sum_{ \disp \bx \in  \bigtimes_{k\in A}\cA_k} \dB_j\circ F_j(x_j)\int \partial_{b_j}\bvarphi_{A,\btheta}\{\bF_A(\bx-),\bF_A(\bx),\bu\}\tilde G_{A,\btheta_0}(\bx,\bu)d\bu  \\
 + \sum_{j\in A}\sum_{ \disp \bx \in  \bigtimes_{k\in A}\cA_k} \dB_j\circ F_j(x_j-)\int \partial_{a_j}\bvarphi_{A,\btheta}\{\bF_A(\bx-),\bF_A(\bx),\bu\} \tilde G_{A,\btheta_0}(\bx,\bu) d\bu \\
 + \sum_{j\in A^\complement}\sum_{ \disp \bx \in  \bigtimes_{k\in A}\cA_k} \int \dB_j(u_j)  \partial_{u_j}\bvarphi_{A,\btheta}\{\bF_A(\bx-),\bF_A(\bx),\bu\} \tilde  G_{A,\btheta_0}(\bx,\bu) d\bu.
\end{multline*}
Before stating the next result, we define additional functions that will appear in the expression of $\dT_2$.

\subsection{Auxiliary functions}\label{app:eta}

For  $\bu = (u_1,\ldots,u_d)$, $\bx=(x_1,\ldots,x_d)$, and $j\in A^\complement$, define
$J_A(\bu,\bx) = \left[ \prod_{k\in A}\I\{ u_k \in \cI_k(x_k)\}\right] \left[ \prod_{k\in A^\complement}\I\{ u_k\not\in \cI_k\}\right] $, and set
$\bbeta_{A,j}(\bu) =\bbeta_{A,j,1}(\bu) -\bbeta_{A,j,2}(\bu) $, where
\begin{eqnarray*}\label{eq:etaiA1}
\bbeta_{A,j,1}(\bu) &=&
\sum_{ \disp \bx \in  \bigtimes_{k\in A}\cA_k}  \dfrac{\partial_{u_j}\dot G_{A,\btheta_0}(\bx,\bu) }{G_{A,\btheta_0}(\bx,\bu)}  J_A(\bu,\bx)
  ,\\
\label{eq:etaiA2}
\bbeta_{A,j,2}(\bu) &=&
\sum_{ \disp \bx \in  \bigtimes_{k\in A}\cA_k}  \dfrac{\dot G_{A,\btheta_0}(\bx,\bu)\partial_{u_j} G_{A,\btheta_0}(\bx,\bu)}{G_{A,\btheta_0}^2(\bx,\bu)}  J_A(\bu,\bx) .
\end{eqnarray*}
Next, for $j\in A$ and $x_j\in \cA_j$, define $\bbeta_{A,j,+}(x_j,\bu)= \bbeta_{A,j,1+}(x_j,\bu)-\bbeta_{A,j,2+}(x_j,\bu) $ and $\bbeta_{A,j,-}(x_j,\bu) = \bbeta_{A,j,1-}(x_j,\bu)-\bbeta_{A,j,2-}(x_j,\bu)$, where
\begin{eqnarray*}
\bbeta_{A,j,1+}(x_j,\bu) & =&
\sum_{ \disp \bx \in  \bigtimes_{k\in A\setminus\{j\}}\cA_k} \dfrac{\disp  \sum_{B\subset A\setminus\{j\}} (-1)^{|B|} \partial_{u_j}\partial_{A^\complement}\dot C_{\btheta_0}\left((\bx,\bu)^{(B,A)}\right)}{G_{A,\btheta_0}(\bx,\bu)}   J_A(\bu,\bx)  ,\\
\end{eqnarray*}
\begin{eqnarray*}
\bbeta_{A,j,2+}(x_j,\bu) & =&
\sum_{ \disp \bx \in  \bigtimes_{k\in A\setminus\{j\}}\cA_k}  \dfrac{L_A(\bx,\bu)}{G_{A,\btheta_0}(\bx,\bu)}  \sum_{B\subset A\setminus\{j\}} (-1)^{|B|} \partial_{u_j}\partial_{A^\complement} C_{\btheta_0}\left((\bx,\bu)^{(B,A)}\right)  \\
&& \qquad \qquad \times J_A(\bu,\bx)  ,
\end{eqnarray*}
\begin{eqnarray*}
\bbeta_{A,j,1-}(x_j,\bu) & =& \sum_{ \disp \bx \in  \bigtimes_{k\in A\setminus\{j\}}\cA_k}  \dfrac{\disp \sum_{B\subset A\setminus\{j\}} (-1)^{|B|} \partial_{u_j}\partial_{A^\complement}\dot C_{\btheta_0}\left((\bx,\bu)^{(B\cup\{j\},A)}\right)} {G_{A,\btheta_0}(\bx,\bu)} J_A(\bu,\bx)  ,
\end{eqnarray*}
\begin{eqnarray*}
\bbeta_{A,j,2-}(x_j,\bu) & =& \sum_{ \disp \bx \in  \bigtimes_{k\in A\setminus\{j\}}\cA_k}  \dfrac{L_A(\bx,\bu)}{G_{A,\btheta_0}(\bx,\bu)} \sum_{B\subset A\setminus\{j\}} (-1)^{|B|} \partial_{u_j}\partial_{A^\complement} C_{\btheta_0}\left((\bx,\bu)^{(B\cup\{j\},A)}\right) \\
&& \qquad\qquad \times J_A(\bu,\bx)  ,
\end{eqnarray*}
with $\bL_A =  \dfrac{\dot G_{A,\btheta_0}}{G_{A,\btheta_0}}$.

\subsection{Asymptotic behavior of the estimator}\label{app:zeta}

Recall that $\bzeta_{A,i} = H_{A,\btheta_0}(\bX_i)$ and $\bzeta_{i} = H_{\btheta_0}(\bX_i)$, $i\in\setn$.
\begin{thm}\label{thm:zeta}
Set $\disp \cW_{n,0} = \sni \sum_{i=1}^n \frac{\dot c_{\btheta_0}(\bU_i)}{c_{\btheta_0}(\bU_i)}$. Under Assumptions \ref{hyp:varphiApp} and  \ref{hyp:est}, the iid  random vectors $\bzeta_i$ have mean $0$ and  covariance $\cJ_1 = E\left(\bzeta_i \bzeta_i^\top\right)$.  In particular, $\bmu(\btheta_0)=0$. Furthermore,  $(\cW_{n,0}, \cW_{n,1},\cW_{n,2})$ converges in law to a centered Gaussian vector $(\cW_0, \cW_1,\cW_2)$ with $\cW_1 \sim N(0,\cJ_1)$, 
where $\cW_2$ given by \begin{eqnarray*}
\cW_{2} &=&  -\sum_{j=1}^d \sum_{A\not\ni j} \int_{(0,1)^d} c_{\btheta_0}(\bu) \bbeta_{A,j,2}(\bu)\dB_{j}(u_{j})d\bu\\
&& \quad -  \sum_{j=1}^d \sum_{x_j\in \cA_j} \dB_{j}\{F_j(x_j)\} \sum_{A\ni j}  \int_{(0,1)^d} c_{\btheta_0}(\bu)  \bbeta_{A,j,2+}(x_j,\bu)d\bu\\
&& \qquad
 + \sum_{j=1}^d \sum_{x_j\in \cA_j} \dB_{j}\{F_j(x_j-)\} \sum_{A\ni j} \int_{(0,1)^d} c_{\btheta_0}(\bu)  \bbeta_{A,j,2-}(x_j, \bu)d\bu.
\end{eqnarray*}
Finally, $E\left(\cW_0 \cW_1^\top \right)  = \cJ_1 =-\dot \bmu(\btheta_0)$, and $E\left(\cW_0 \cW_2^\top \right)  = 0 $.
In particular, Assumption \ref{hyp:est} holds 
if $\cJ_1$ is invertible.
\end{thm}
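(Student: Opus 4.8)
The plan is to route everything through Theorem~\ref{thm:rum_ext}, applied with $\bvarphi_{A,\btheta}$ given by \eqref{eq:varphi-cop}, so that the hard analytic work (the uniform linearisation $\dT_n=\dT_{1,n}+\dT_{2,n}+o_P(1)$ and the joint weak limit) is already available. First I would establish $\bmu(\btheta_0)=0$. Let $k_\btheta$ be the density of $\cK_\btheta$ with respect to $\mu$, i.e.\ $k_\btheta(\bx)=\sum_{A\subset\setd}J_A(\bx)K_{A,\btheta}(\bx)\prod_{j\in A^\complement}f_j(x_j)$. Differentiating \eqref{eq:GAtheta} in $\btheta$ and comparing with \eqref{eq:FB}--\eqref{eq:varphi-cop} gives $\nabla_\btheta K_{A,\btheta}(\bx)=K_{A,\btheta}(\bx)\,\bvarphi_{A,\btheta}\{\bF_A(\bx-),\bF_A(\bx),\bF_{A^\complement}(\bx)\}$, hence $\nabla_\btheta k_\btheta=k_\btheta\,\bH_\btheta$. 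Differentiating $\int k_\btheta\,d\mu\equiv1$ under the integral sign at $\btheta_0$ — the interchange being licensed by the domination built into Assumption~\ref{hyp:varphiApp} — yields $\bmu(\btheta_0)=\int\bH_{\btheta_0}k_{\btheta_0}\,d\mu=0$, so $E(\bzeta_i)=0$ and $\cJ_1=E(\bzeta_i\bzeta_i^\top)$ by definition. Consequently $\cW_{n,1}=\dT_{1,n}(\btheta_0)=n^{-1/2}\sum_{i=1}^n\bzeta_i$, and $\cW_{n,2}=\dT_{2,n}(\btheta_0)$ once the latter is put in the displayed form (done below); Theorem~\ref{thm:rum_ext} then gives the joint convergence of $(\cW_{n,1},\cW_{n,2})$ to a centered Gaussian pair $(\cW_1,\cW_2)$.

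Next, the cross-covariances rest on the representation $\bzeta_i=\bH_{\btheta_0}(\bX_i)=E\!\left[\dot c_{\btheta_0}(\bU_i)/c_{\btheta_0}(\bU_i)\mid\bX_i\right]$. Indeed, writing $A=A(\bx)$ for the set of coordinates at which $x_j$ is an atom, conditionally on $\bX_i=\bx$ the coordinates $U_{il}$, $l\in A^\complement$, are frozen at $F_l(x_l)$ while $(U_{ij})_{j\in A}$ has density proportional to $c_{\btheta_0}$ on $\prod_{j\in A}\cI_j(x_j)$, so $E[\dot c_{\btheta_0}(\bU_i)/c_{\btheta_0}(\bU_i)\mid\bX_i=\bx]$ is exactly the ratio of integrals defining $\bvarphi_{A,\btheta_0}\{\bF_A(\bx-),\bF_A(\bx),\bF_{A^\complement}(\bx)\}=\bH_{\btheta_0}(\bx)$. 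Since $\cW_{n,0}$ is a sum of iid, mean-zero (by the companion score identity $\int\dot c_{\btheta_0}\,d\bu=0$), square-integrable vectors, it converges in law to $\cW_0$; taking joint limits and using the tower property gives $E(\cW_0\cW_1^\top)=E\!\left[(\dot c_{\btheta_0}(\bU_1)/c_{\btheta_0}(\bU_1))\,\bzeta_1^\top\right]=E(\bzeta_1\bzeta_1^\top)=\cJ_1$. The information identity $\cJ_1=-\dot\bmu(\btheta_0)$ follows by differentiating $\int\bH_\btheta k_\btheta\,d\mu\equiv0$ (valid near $\btheta_0$ by the same computation) and substituting $\nabla_\btheta k_\btheta=k_\btheta\bH_\btheta$, which produces $\dot\bmu(\btheta_0)+\int\bH_{\btheta_0}\bH_{\btheta_0}^\top k_{\btheta_0}\,d\mu=0$. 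Continuity of $\dot\bmu$ is inherited from Assumption~\ref{hyp:cop}; as $\cJ_1$ is positive semidefinite, its invertibility makes $\dot\bmu(\btheta_0)=-\cJ_1$ negative definite, which with $\bmu(\btheta_0)=0$ is precisely Assumption~\ref{hyp:est}.

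It remains to bring $\dT_2(\btheta_0)$ into the stated form for $\cW_2$ and to verify $E(\cW_0\cW_2^\top)=0$. For the first point I would take the formula for $\dT_2(\btheta)$ from Theorem~\ref{thm:rum_ext} at $\btheta_0$, use $\dF_j=\dB_j\circ F_j$ for the empirical margins, write $d\cK=k_{\btheta_0}\,d\mu$, and apply Fubini to split, for each $A$, the discrete coordinates $j\in A$ (which generate the atom sums and the functions $G_{A,\btheta_0},\tilde G_{A,\btheta_0}$ of the display preceding Appendix~\ref{app:eta}, cf.\ \eqref{eq:GA}) from the continuous coordinates $j\in A^\complement$; integrating by parts in the continuous coordinates transfers the derivatives of $\bvarphi_{A,\btheta_0}$ onto $G_{A,\btheta_0}$ and cancels the ``$\bbeta_{\cdot,1}$''-type contributions, leaving exactly the $\bbeta_{A,j,2}$, $\bbeta_{A,j,2+}$, $\bbeta_{A,j,2-}$ terms in the statement. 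For the orthogonality, decompose $\cW_2=\sum_{j=1}^d\cW_{2,j}$, where $\cW_{2,j}$ is a bounded linear functional of the single Brownian bridge $\dB_j$; linearising $\dB_{nj}(v)=n^{-1/2}\sum_{i=1}^n\{\I(U_{ij}\le v)-v\}+o_P(1)$ represents $\cW_{2,j}$ as the limit of an iid sum $n^{-1/2}\sum_i g_j(U_{ij})$, so $E(\cW_0\cW_2^\top)=\sum_{j=1}^d E\!\left[\nabla_\btheta\log c_{\btheta_0}(\bU_1)\,g_j(U_{1j})^\top\right]$, and every summand vanishes because $E[\nabla_\btheta\log c_{\btheta_0}(\bU)\mid U_j]=0$ — the latter being the $\btheta$-derivative of the marginal-uniformity constraint $\int c_\btheta(\bu)\,d\bu_{-j}\equiv1$. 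Stacking the iid summands of $\cW_{n,0}$, $\cW_{n,1}$ and the linearisation of $\cW_{n,2}$ and applying the multivariate CLT then delivers the full joint Gaussian limit.

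The step I expect to be the main obstacle is the reduction of $\dT_2(\btheta_0)$ to the explicit $\cW_2$: carrying the Fubini and integration-by-parts bookkeeping uniformly over all $A\subset\setd$, tracking the boundary terms at the endpoints of the intervals $\cI_j(x_j)$ and checking that the $\bbeta_{\cdot,1}$ pieces truly cancel, and justifying the successive differentiations under the integral sign — all of which lean on the $\cQ_p$-type integrability bounds of Assumption~\ref{hyp:varphiApp}. Everything else is either a direct appeal to Theorem~\ref{thm:rum_ext} or a routine score/central-limit argument.
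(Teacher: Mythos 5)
Your overall architecture is the same as the paper's: everything is funneled through Theorem~\ref{thm:rum_ext} with $\bvarphi_{A,\btheta}$ given by \eqref{eq:varphi-cop}, and the mean and covariance identities come from score-type arguments. Your treatment of $E(\bzeta_i)=0$, of $\cJ_1=-\dot\bmu(\btheta_0)$, of $E\left(\cW_0\cW_1^\top\right)=\cJ_1$ and of $E\left(\cW_0\cW_2^\top\right)=0$ is correct and, if anything, slightly cleaner than the paper's: the paper cancels each denominator $G_{A,\btheta_0}$ against $c_{\btheta_0}$ and recognizes a $\nabla_\btheta$ of a constant, while you package the same computation as $\nabla_\btheta k_\btheta=k_\btheta\bH_\btheta$, the conditional-score representation $\bzeta_i=E\{\dot c_{\btheta_0}(\bU_i)/c_{\btheta_0}(\bU_i)\mid\bX_i\}$, and the Bartlett identity; these are equivalent, and your reading of the last claim about invertibility of $\cJ_1$ matches the paper.

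The genuine gap is precisely the step you flag as the main obstacle: the reduction of $\dT_2(\btheta_0)$ to the displayed $\cW_2$. Integration by parts in the continuous coordinates does not cancel the $\bbeta_{\cdot,1}$-type contributions. For a fixed $A$ these terms are not zero; integrating by parts in $u_j$ only moves the derivative onto $\dB_j$ and creates boundary terms at the endpoints of the intervals $\cI_j(x_j)$, and it produces nothing of the $\bbeta_{\cdot,2}$ shape. What actually eliminates them --- and is what the paper does --- is the same trick as in the proof that $E(\bzeta_i)=0$: multiply by $c_{\btheta_0}$, integrate out the coordinates in $A$ so that each denominator $G_{A,\btheta_0}$ cancels (using \eqref{eq:GA}), and then sum over \emph{all} $A$ with $j\in A^\complement$ fixed (respectively all $A\ni j$) and over the atoms $\bx$; since the indicator products $J_A(\bu,\bx)$ partition the possible configurations of the remaining coordinates, the summed integrand becomes $\partial_{u_j}\nabla_\btheta$ (respectively $\nabla_\btheta$) of a quantity that is constant in $\btheta$ by uniformity of the margins of $c_\btheta$, hence zero. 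This is exactly the identity you yourself invoke for $E\left(\cW_0\cW_2^\top\right)=0$, so the missing move is to apply it to the $\bbeta_{A,j,1}$ and $\bbeta_{A,j,1\pm}$ sums instead of integrating by parts; note in particular that the cancellation only holds after summation over $A$, so it cannot be carried out one $A$ at a time.
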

\begin{proof}
For  any $A\subset \setd$, set $\cI_A(\bu) = \left[ \prod_{j\in A}\I\{u_j\in \cI_j\}\right] \left[ \prod_{k\in A^\complement}\I\{u_k \not\in \cI_k\}\right]$.
It follows from formula \eqref{eq:GA} that
\begin{eqnarray*}
 E(\bzeta_{A,1}) &=&    \sum_{ \disp \bx \in  \bigtimes_{j\in A}\cA_j}\int c_{\btheta_0}(\bu)\left[\prod_{k\in A}\I\{u_k\in\cI_k(x_k)\}\right]
 \left[\prod_{k\in A^\complement}\I(u_{k}\notin\cI_k)\right] \dfrac{\dot G_{A,\btheta_0}(\bx,\bu)}{G_{A,\btheta_0}(\bx,\bu)}d\bu\\
&=& \sum_{ \disp \bx \in  \bigtimes_{j\in A}\cA_j}\int \left[\prod_{k\in A^\complement}\I(u_{k}\notin\cI_k)\right] \dot G_{A,\btheta_0}(\bx,\bu) d\bu\\
&=& \int \dot c_{\btheta_0}(\bu) \cI_A(\bu) d\bu =\left.\nabla_\btheta E_\btheta \{\cI_A(\bU)\}\right|_{\btheta=\btheta_0}.
\end{eqnarray*}
Thus,
$\disp 
\bmu(\btheta_0) = E(\bzeta_1) = \sum_{A\subset \setd}E(\bzeta_{A,1}) = \left.  \nabla_\btheta E_\btheta\left\{\sum_{A\subset \setd}\cI_A(\bu) \right\}\right|_{\btheta=\btheta_0} = \left. \nabla_\btheta\left[1 \right]\right|_{\btheta=\btheta_0}= 0$.
As a result, the independent random variables $\bzeta_i$  are centered, and we may conclude that $\dT_{1,n}(\btheta_0) = n^{-1/2} \sum_{i=1}^n \bzeta_i \rightsquigarrow \dT_1(\btheta_0) \sim N(0,\cJ_1)$, where
$\disp 
\cJ_1 = E\left(\bzeta_1 \bzeta_1^\top\right)=\sum_{A\subset \setd} E\left(\bzeta_{A,1} \bzeta_{A,1}^\top\right)$. 
Also, $\cW_{n,2} = \dT_{n,2}(\btheta_0)  \rightsquigarrow  \cW_{2}=\dT_{2}(\btheta_0)$ since  $\dB_{n1},\ldots,\dB_{nd}$ converge jointly to the Brownian bridges $\dB_{1},\ldots,\dB_{d}$.
Now
\begin{multline*}
\int \dH_{A,\btheta_0} d\cK =  \sum_{j\in A}\sum_{ \disp \bx \in  \bigtimes_{k\in A}\cA_k} \dB_{j}\circ F_j(x_j)\int c_{\btheta_0}(\bu) \{\bbeta_{A,j,1+}(x_j,\bu)-\bbeta_{A,j,2+}(x_j,\bu)\}d\bu  \\
- \sum_{j\in A}\sum_{ \disp \bx \in  \bigtimes_{k\in A}\cA_k} \dB_{j}\circ F_j(x_j-)
 \int c_{\btheta_0}(\bu) \{\bbeta_{A,j,1-}(x_j,\bu)-\bbeta_{A,j,2-}(x_j,\bu)\}d\bu\\
 + \sum_{j\in A^\complement}\sum_{ \disp \bx \in  \bigtimes_{k\in A}\cA_k} \int \dB_{j}(u_j)  c_{\btheta_0}(\bu) \{\bbeta_{A,j,1}(\bu)-\bbeta_{A,j,2}(\bu)\}d\bu.
\end{multline*}
Using the same trick as in the proof that $E(\bzeta_i)=0$, one gets, for any $j\in\setd$, and $x_j\in \cA_j$
$$
  \sum_{A\not\ni j} \int_{(0,1)^d} c_{\btheta_0}(\bu) \bbeta_{A,j,1}(\bu)\dB_{j}(u_{j})d\bu= 0=   \sum_{A\ni j}  \int_{(0,1)^d} c_{\btheta_0}(\bu)  \bbeta_{A,j,1\pm}(x_j,\bu)d\bu,
$$
yielding the representation for $\cW_2$.
Next, it is then easy to show that $E\{\cW_0 \dB_j(t)\} = 0$ for any $j\in\setd$, so $E\left(\cW_0\cW_2^\top \right)=0$.
Next, for any $A\subset \setd$,
\begin{eqnarray*}
E\left\{\bzeta_{A,i}\dfrac{\dot c_{\btheta_0}(\bU_i)^\top }{c_{\btheta_0}(\bU_i)}\right\} &=&  \sum_{ \disp \bx \in  \bigtimes_{j\in A}\cA_j}\int \prod_{k\notin A}\I(u_{ik}\notin\cI_k)\dfrac{\dot G_{A,\btheta_0}(\bx,\bu) \dot G_{A,\btheta_0}(\bx,\bu)^\top}{G_{A,\btheta_0}(\bx,\bu)}d\bu=E\left(\bzeta_{A,i} \bzeta_{A,i}^\top \right).
\end{eqnarray*}
Therefore, $\left(\cW_1 \cW_0^\top \right) = \cJ_1$.
Next, for any $A\subset\setd$,
$$
\bmu_A(\btheta) = \sum_{ \disp \bx \in  \bigtimes_{j\in A}\cA_j}\int \prod_{k\notin A}\I(u_{ik}\notin\cI_k)\dfrac{\dot G_{A,\btheta}(\bx,\bu) }{G_{A,\btheta}(\bx,\bu)}G_{A,\btheta_0}(\bx,\bu)d\bu .
$$
Hence, for any $A\subset\setd$,
\begin{eqnarray*}
\dot \bmu_A(\btheta) &=& \sum_{ \disp \bx \in  \bigtimes_{j\in A}\cA_j}\int \prod_{k\notin A}\I(u_{ik}\notin\cI_k)\dfrac{\ddot G_{A,\btheta}(\bx,\bu) }{G_{A,\btheta}(\bx,\bu)}G_{A,\btheta_0}(\bx,\bu)d\bu\\
&& \qquad - \sum_{ \disp \bx \in  \bigtimes_{j\in A}\cA_j}\int \prod_{k\notin A}\I(u_{ik}\notin\cI_k)\dfrac{\dot G_{A,\btheta}(\bx,\bu) \dot G_{A,\btheta}(\bx,\bu)^\top }{G_{A,\btheta}^2(\bx,\bu)}G_{A,\btheta_0}(\bx,\bu)d\bu,
\end{eqnarray*}
so
\begin{eqnarray*}
\dot \bmu_A(\btheta_0) &=& \sum_{ \disp \bx \in  \bigtimes_{j\in A}\cA_j}\int \prod_{k\notin A}\I(u_{ik}\notin\cI_k) \ddot G_{A,\btheta}(\bx,\bu) d\bu\\
&& \qquad - \sum_{ \disp \bx \in  \bigtimes_{j\in A}\cA_j}\int \prod_{k\notin A}\I(u_{ik}\notin\cI_k)\dfrac{\dot G_{A,\btheta_0}(\bx,\bu) \dot G_{A,\btheta_0}(\bx,\bu)^\top }{G_{A,\btheta_0}(\bx,\bu)}d\bu\\
&=& \left. \nabla^2_\btheta \left\{ \int c_\btheta(\bu) \cI_A(\bu)d\bu  \right\}\right|_{\btheta=\btheta_0} - E\left(
\bzeta_{A,1}\bzeta_{A,1}^\top\right).
\end{eqnarray*}
As a result,
$\disp \dot \bmu(\btheta_0)=\sum_{A\subset \setd} \dot \bmu_A(\btheta_0) = 0 - \cJ_1 = -\cJ_1$.
\end{proof}

\section{Supplementary material}
Here, some results necessary to identifiability are proven, together with multilinear representations and results on densities with respect to product of mixtures of Lebesgue's measure and counting measure, as well as a result on the verification of Assumption 5.

\subsection{Proof of  Proposition 2.1}
Suppose that the maximal rank of $\bT'$ is $r$ and let $\btheta_0\in O$ so that the rank of $\bT'(\btheta_0)$ is $r$. Set $\bA = \bT'(\btheta_0)$. Then there  exists an $r\times r$ submatrix $M_r(\btheta_0)$ with non-zero determinant. By continuity, there is a neighborhood $\cN$ of $\btheta_0$ such that  $M_r(\btheta)$ has non-zero determinant. Since the maximal rank is $r$, it follows that the rank of $\bT'(\btheta)$ is $r$ for all $\btheta\in\cN$. One can now apply the famous Rank Theorem \citep[Theorem 9.32]{Rudin:1976} to deduce that there is a diffeomorphism $\bH$ from an open set $V\subset \dR^p$ onto  a neighborhood $U\subset \cN$, $\btheta_0\in U$, such that $\bT\{\bH(\bx)\} = \bA\bx + \bvarphi(\bA\bx)$, where $\bvarphi$ is a continuously differentiable mapping into the null space of a projection withe same range as $A$. Thus, choose $\bx_1$ in the null space of $A$ and $\bx_0\in V$ so that $\bH(\bx_0)=\btheta_0$. Then, for some $\delta>0$,  $\bH(\bx_0+t\bx_1)\in U$ for any $|t|<\delta$. As a result, for all $|t|<\delta$,
 $  \bT\{\bH(\bx_0+t\bx_1)\} = A(\bx_0+t\bx_1)+\bvarphi(A\bx_0+tA \bx_1) =
 A\bx_0+ \bvarphi(A\bx_0)$,
 showing that $T$ is not injective.
Next, suppose that $\bJ$ has rank $p$ and $\bT$ is not injective, i.e, one can find $\btheta_0\neq \btheta_1 \in O$, with $T(\btheta_0)=T(\btheta_1)$. Set $g(t) = T_n\{\btheta_0+t ( \btheta_1-\btheta_0)\}$. This is well defined on $[0,1]$ since $O$ is convex. Since $g(0)=g(1)$ and $g$ is continuous, $g([0,1])$ is a closed curve and there exist $t_0,t_1 \in [0,1]$ with $g_j(t_0)\le g_j(t)\le g_j(t_1)$, for all $j\in\setd$ and for all $t\in[0,1]$. It then follows that either $t_0\neq 0$ or $t_1\neq 0$. Otherwise $g$ is constant and $g'(t)=0$ for all $t\in [0,1]$. So suppose that $t_0 >0$; the case $t_1>0$ is similar. Then $0<t_0<1$, which implied that for any $j\in \setd$ and for $\delta>0$ small enough, $\dfrac{g_j(t_0-\delta)-g_j(t_0)}{\delta} \ge 0 \ge \dfrac{g_j(t_0+\delta)-g_j(t_0)}{\delta} $. As a result, $g_j'(t_0)=0$ for all $j\in\setd$. Therefore, from this extension of  Rolle's Theorem,
$0 = g'(t) = \bJ\{\btheta_0+t ( \btheta_1- \btheta_0)\}(\btheta_1-\btheta_0)$.
As a result, the Jacobian has not rank $p$ at $\btheta=\btheta_0+t ( \btheta_1- \btheta_0)\in O$, which is a contradiction. Hence $\bT$ is injective.  Suppose now that the rank of $\bJ(\btheta_0)$ is $p$. Set $g(\btheta) = {\rm det}\left\{\bJ(\btheta)^\top \bJ(\btheta)\right\} \ge 0$. Then the rank of $\bJ(\btheta)$ is $p$ iff $g(\btheta)>0$. Since $g(\btheta_0)>0$ and $g$ is continuous on $O$, one can find a neighborhood $\cN$ of $\btheta_0$ for which $g(\btheta)>0$ for every $\btheta\in \cN$. Finally, if the maximal rank is $r<p$, then if follows from the proof of the first statement that the rank of $\bT'$ is also $r$ in a neighborhood of $\btheta_0$, and from the Rank Theorem  \citep[Theorem 9.32]{Rudin:1976} , $\bT$ is not injective in a neighborhood of $\btheta_0$.
\qed

\subsection{Density with respect to  $\mu$}
First, for a given cdf $\cG$ having density $g$ with respect to $\mu_{\cA}+\cL$, where $\cA$  is a countable set, one has $g(x) = \Delta \cG(x)$, for $x\in \cA$, and
\begin{equation}\label{eq:dens0}
\tilde \cG(x) = P(X\le x, X\not\in\cA) = \cG(x)-\sum_{y\in \cA, \; y\le x}\Delta \cG(y) =\int_{-\infty}^x  g(y)dy.
\end{equation}
Then, by the fundamental theorem of calculus, $\tilde \cG'(x)=g(x)$, $\cL$-a.e.   If in addition $\cA$ is closed, then for any $x\not \in \cA$, and $\epsilon>0$ small enough, $\cG(x+\epsilon)-\cG(x) = \int_x^{x+\epsilon}g(y)dy$, so $\cG'(x) = g(x)$ a.e.
Next, to find the expression of the density $h$ of distribution function $H$ with respect to $\mu$, suppose that $x_1,\ldots, x_k$ are atoms of $F_1,\ldots, F_k$, while $x_{k+1},\ldots, x_d$ are not atoms of $F_{k+1},\ldots, F_d$.
On one hand, if $\bX\sim H$, for $\delta>0$ small enough, one gets
\begin{eqnarray*}
G(\bx,\delta) &=& P\left[ \cap_{j=1}^k \{X_j=x_j\} \cap \cap_{j=k+1}^d \{x_j-\delta  < X_j \le x_j\}\cap\{X_j\not\in \cA_j\} \right] \\
& = & \int h(x_1,\ldots,x_k,y_{k+1},\ldots,y_d) \left\{\prod_{j=k+1}^d \I(x_j-\delta < y_j \le x_j)  \right\} dy_{k+1}\cdots dy_d.
\end{eqnarray*}
On the other hand, if $\bU\sim C$, with $C\in \cS_H$ having a continuous density over $(0,1)^d$,
using the fact that for any $j>k$, the Lebesgue's measure of
$(F_j(x_j-\delta),F_j(x_j)]\setminus \cI_j$ is $\int_{x_j-\delta}^{x_j} f_j(y_j)dy_j$ by \eqref{eq:dens0}, one gets
\begin{eqnarray*}
G(\bx,\delta) & = &  P\left[\cap_{j=1}^k \{F_j(x_j-)< U_j\le F_j(x_j)\} \cap  \cap_{j=k+1}^d \{F_j(x_j-\delta_j) < U_j \le F_j(x_j)\}  \cap \{U_j\not\in\cI_j\}\right] \\
&=& \int_{F_1(x_1-)}^{F_1(x_1)} \cdots \int_{F_k(x_k-)}^{F_k(x_k)} \int_{ (F_{k+1}(x_{k+1}-\delta),F_{k+1}(x_{k+1})]\setminus \cI_{k+1}}\cdots \int_{ (F_d(x_d-\delta_d),F_d(x_d)]\setminus \cI_d}c(\bu)d\bu\\
&=& \int_{F_1(x_1-)}^{F_1(x_1)} \cdots \int_{F_k(x_k-)}^{F_k(x_k)}
c\left( u_1,\ldots,u_k,F_{k+1}(x_{k+1}),\ldots, F_d(x_d)\right) d u_1 \cdots d u_k  \\
&& \qquad \qquad \qquad \times \left\{ \prod_{j=k+1}^d \int_{x_j-\delta}^{x_j} f_j(y_j) d_j \right\} +o\left(\delta^{d-k}\right).
\end{eqnarray*}
Therefore,
\begin{eqnarray*}
h(\bx) &=& \prod_{j=k+1}^d  f_j(x_j) \int_{F_1(x_1-)}^{F_1(x_1)} \cdots \int_{F_k(x_k-)}^{F_k(x_k)}
c\left( u_1,\ldots,u_k,F_{k+1}(x_{k+1}),\ldots, F_d(x_d)\right) d u_1 \cdots d u_k   \nonumber\label{eq:dens1}\\
&=& \prod_{j=k+1}^d f_j(x_j) \sum_{B\subset \{1,\ldots,k\}}(-1)^{|B|} \partial_{u_{k+1}}\cdots \partial_{u_{d}} C\left(\tilde\bF^{(B)}(\bx)\right), \label{eq:dens2}
\end{eqnarray*}
where $\tilde F^{(B)}$ is defined by
\begin{equation}\label{eq:FB}
\left(\tilde \bF^{(B)}(\bx)\right)_j =
\left\{
\begin{array}{ll}
F_{j}({x_{j}}-)  & \text{, if } j \in B;\\
 F_{j}(x_{j})   & \text{, if } j \in B^\complement = \setd\setminus B.
 \end{array}\right.
 \end{equation}
As  a result, if $A = \{j: x_j \in \cA_j\}$, then the density $h(x)$ with respect to $\mu$ is
\begin{equation}\label{eq:density}
 h(\bx) =  \left\{  \prod_{j\in A^\complement} f_j(x_j)\right\} \sum_{B\subset A }(-1)^{|B|} \partial_{A^\complement} C \left\{\tilde\bF^{(B)}(\bx)\right\}.
\end{equation}
In particular, if $A$ is the set of indices $j\in\setd$ for which $\Delta F_j(x_j)>0$, and $C$ is a copula associated with the joint law of $(Y,\bX)$, where $Y$ has margin $G$, then \eqref{eq:density} yields
\begin{equation}\label{eq:RegressionCopGen}
P(Y\le y|\bX=\bx) = \dfrac{ \sum_{B\subset A} (-1)^{| B|} \partial_{A^\complement} C\left\{G(y),\tilde\bF^{(B)}(\bx)\right\}}{ \sum_{B\subset A} (-1)^{| B|} \partial_{A^\complement} C\left\{1,\tilde\bF^{(B)}(\bx)\right\}}.
\end{equation}


\subsection{Verification of Assumption 5  }
Let $G$ be a discrete distribution function concentrated on $\cA = \{x: \Delta G(x)>0\}$. Recall  that
$\disp V_n(G) = n\sum_{x\in \cA}\Delta G(x)\{1-\Delta G(x)\}^{n-1}$.
\begin{prop}\label{prop:verif1}
Suppose that there exists a constant $C$ so that for any $a$ small enough,
\begin{equation}\label{eq:cond1}
\sum_{x\in \cA: \Delta G(x) <a} \Delta G(x)  \le C a.
\end{equation}
Then $\limsup_{n\to\infty}V_n(G)< \dfrac{C}{\left(1-e^{-1}\right)^2}$.
\end{prop}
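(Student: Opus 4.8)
The plan is to write $p_x:=\Delta G(x)>0$ for $x\in\cA$, so that $\sum_{x\in\cA}p_x=1$ and
$$V_n(G)=\sum_{x\in\cA}p_x\,\psi_n(p_x),\qquad \psi_n(p):=n(1-p)^{n-1},$$
where, for $n\ge 2$, $\psi_n$ is a strictly decreasing bijection of $[0,1]$ onto $[0,n]$ with inverse $\psi_n^{-1}(t)=1-(t/n)^{1/(n-1)}$. We may assume $C>0$. Fix $b\in(0,1)$ with $b\le a_0$ so that $m(a):=\sum_{x\in\cA:\,p_x<a}p_x\le Ca$ for all $a\in(0,b]$, and split the sum over $\cA$ according to whether $p_x\ge b$ or $p_x<b$. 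For the first part, bound $\psi_n(p_x)\le\psi_n(b)$ and use $\sum_x p_x\le 1$ to get $\sum_{p_x\ge b}p_x\,\psi_n(p_x)\le\psi_n(b)=n(1-b)^{n-1}\to 0$.

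For the atoms with $p_x<b$ I would use the layer-cake (Tonelli) identity
$$\sum_{x\in\cA:\,p_x<b}p_x\,\psi_n(p_x)=\int_0^\infty\Big(\sum_{x\in\cA:\,p_x<b,\ \psi_n(p_x)>t}p_x\Big)\,dt .$$
Because $\psi_n$ is decreasing, $\psi_n(p_x)>t$ is equivalent to $p_x<\psi_n^{-1}(t)$ for $0\le t<n$ and is vacuous for $t\ge n$, so the inner sum equals $m\big(\min(b,\psi_n^{-1}(t))\big)$; and since $\psi_n^{-1}$ is decreasing with $\psi_n^{-1}(\psi_n(b))=b$, the condition $\psi_n^{-1}(t)\le b$ holds exactly for $t\ge\psi_n(b)$. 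Splitting the $t$-integral at $\psi_n(b)$, the integrand is $m(b)\le Cb$ on $[0,\psi_n(b))$ and $m(\psi_n^{-1}(t))\le C\psi_n^{-1}(t)$ on $[\psi_n(b),n]$ (legitimate since there $\psi_n^{-1}(t)\le b\le a_0$), giving
$$\sum_{x\in\cA:\,p_x<b}p_x\,\psi_n(p_x)\le Cb\,\psi_n(b)+C\int_{\psi_n(b)}^{n}\psi_n^{-1}(t)\,dt .$$

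The remaining integral is elementary: the substitution $t=\psi_n(p)$ turns it into $\int_0^b p\,n(n-1)(1-p)^{n-2}\,dp$, which by one integration by parts equals $1-(1-b)^n-b\,n(1-b)^{n-1}$. Since $\psi_n(b)=n(1-b)^{n-1}$, the term $Cb\,\psi_n(b)$ cancels $-Cb\,n(1-b)^{n-1}$, leaving $\sum_{p_x<b}p_x\,\psi_n(p_x)\le C\big(1-(1-b)^n\big)\le C$. Adding the two contributions, $V_n(G)\le n(1-b)^{n-1}+C$ for all $n\ge 2$, and letting $n\to\infty$ (with $b$ fixed) yields $\limsup_{n\to\infty}V_n(G)\le C<\dfrac{C}{(1-e^{-1})^2}$, the last inequality holding because $C>0$ and $0<(1-e^{-1})^2<1$.

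The step I expect to be most delicate is the bookkeeping in the layer-cake argument: correctly recognizing the inner sum as $m(\min(b,\psi_n^{-1}(t)))$, locating the break point at exactly $t=\psi_n(b)$, and making sure the hypothesis $m(a)\le Ca$ is only ever invoked for $a\le a_0$. The cancellation of the $b\,n(1-b)^{n-1}$ terms is what makes the final bound clean and deserves to be displayed explicitly; everything else is routine calculus. As written this argument in fact gives $\limsup_n V_n(G)\le C$, which is sharper than the stated estimate; one may keep the weaker constant $C/(1-e^{-1})^2$ if a coarser bound is preferred.
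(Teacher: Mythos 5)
Your argument is correct, and it takes a genuinely different route from the paper's. The paper bounds $V_n(G)$ by partitioning the atoms into bands according to the size of $np_x$ (the bands $k\le np_x<k+1$ for $k\ge 1$ and $n^{-(k+1)}\le p_x<n^{-k}$), applying the hypothesis $m(a)\le Ca$ band by band together with $1-p\le e^{-p}$, and summing the resulting series; this is what produces the constant $\bigl(1-e^{-1}\bigr)^{-2}$ in the limit. You instead keep only one coarse split (atoms of mass $\ge b$ versus $<b$), and for the small atoms use a layer-cake representation over the level sets of $\psi_n(p)=n(1-p)^{n-1}$, invoking the hypothesis through $m(\min(b,\psi_n^{-1}(t)))\le C\min(b,\psi_n^{-1}(t))$ and then computing $\int_{\psi_n(b)}^{n}\psi_n^{-1}(t)\,dt$ exactly; the cancellation of the $bn(1-b)^{n-1}$ terms is correct (I checked the integration by parts), and the bookkeeping — the break point at $t=\psi_n(b)$ and the restriction of the hypothesis to arguments $\le b\le a_0$ — is handled properly. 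What your approach buys is a sharper conclusion, $\limsup_n V_n(G)\le C$, from which the stated strict bound follows whenever $C>0$; your reduction to $C>0$ is harmless and in fact necessary, since for $C=0$ the strict inequality in the statement cannot hold ($V_n\ge 0$), and note that the paper's own computation only delivers $\limsup_n V_n(G)\le C\bigl(1-e^{-1}\bigr)^{-2}$, not the strict inequality, so on this point your proof is actually tighter than the one in the paper.
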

\begin{proof}
First, recall that $1-x\le e^{-x}$ for any $x\ge 0$.
For simplicity, for any $x\in \cA$, set $p_x = \Delta G(x)$. Next, using \eqref{eq:cond1}, if $n$ is large enough, one has
\begin{multline*}
V_n(G) = n \sum_{k=1}^\infty \sum_{x\in \cA: k\le np_x < k+1} p_x(1-p_x)^{n-1}
 +\sum_{k=0}^\infty  \sum_{x\in \cA: n^{-(k+1)} \le p_x < n^{-k} } p_x(1-p_x)^{n-1}\\
 \le  C\sum_{k=1}^\infty (k+1) e^{-(n-1)k/n} + C\sum_{k=0}^\infty  \dfrac{1}{n^{k}} = C\left[ \dfrac{1}{\left\{1-e^{-(n-1)/n}\right\}^2} -1 + \dfrac{n}{n-1}\right].
\end{multline*}
As a result, $\limsup_{n\to\infty}V_n(G) \le \dfrac{C}{\left(1-e^{-1}\right)^2}$.
\end{proof}
Note that when  $\cA$ is finite, then \eqref{eq:cond1} holds with $C=0$.
\begin{remark}\label{rem:condverif}
Note that \eqref{eq:cond1} holds for the geometric distribution with $p_k = p(1-p)^k$, $k=0,1,\ldots$, with $p\in (0,1)$. In this case, $C = p$ since $\disp\sum_{k=i} p(1-p)^k = (1-p)^i$, $i=0,1,\ldots$. \eqref{eq:cond1} holds for Poisson distribution with parameter $\lambda$ since $p_{k+i} \le e^\lambda p_k p_i $, $k,i=0,1,\ldots$, with $p\in (0,1)$. In this case, $C = e^\lambda$. In general a sufficient condition for \eqref{eq:cond1} to hold on $\dN\cup\{0\}$ is that $p_{k+i} \le C p_k p_i $, $k,i \in \dN\cup\{0\}$.
Another example of an important distribution satisfying \eqref{eq:cond1} is the Negative  Binomial distribution with parameters $r \in \dN$ and $p \in (0,1)$, where $p_k = 
\left(\begin{array}{c} k+r-1\\r-1\end{array}\right)p^r(1-p)^k$, $k \in \dN\cup \{0\}$. In this case
$
\sum_{i \ge  k} p_i = \sum_{j=0}^{r-1}  \left(\begin{array}{c} k+r-1\\r-1\end{array}\right) p^j (1-p)^{k+r-1-j}$.
So if $a$ is small enough, $p_i$ is decreasing, so let $k_0$ be such that $p_{k_0} < a \ge p_{k_0-1}$.
It then follows that if $k_0$ is large enough,
\begin{multline*}
\sum_{i: p_i < a} p_i = \sum_{i\ge  k_0} p_i = \sum_{j=0}^{r-1}
\left(\begin{array}{c} k_0+r-1\\j\end{array}\right)
p^j (1-p)^{k_0+r-1-j} \\
\le \left(\begin{array}{c} k_0+r-1\\r-1\end{array}\right)\sum_{j=0}^{r-1}p^j (1-p)^{k_0+r-1-j} \le r  \left(\begin{array}{c} k_0+r-1\\r-1\end{array}\right) (1-p)^{k_0} = C p_{k_0} \le C a,
\end{multline*}
where $C = r p^{-r}$.
\end{remark}

\end{appendix}
\bibliographystyle{apalike} 

\end{document}